\newtheorem{Example}{Example}[section]
\newtheorem{Theorem}{Theorem}[section]
\newtheorem{Theorem/Definition}{Theorem/Definition}[section]
\newtheorem{Proposition}{Proposition}[section]
\newtheorem{Lemma}{Lemma}[section]
\newtheorem{Corollary}{Corollary}[section]
\newcommand{\pd}{\partial}
\newcommand{\bC}{{\mathbb C}}
\newcommand{\bZ}{{\mathbb Z}}
\newcommand{\cB}{{\mathcal B}}
\newcommand{\cC}{{\mathcal C}}
\newcommand{\cF}{{\mathcal F}}
\newcommand{\cL}{{\mathcal L}}
\newcommand{\cM}{{\mathcal M}}
\newcommand{\half}{\frac{1}{2}}
\newcommand{\Mbar}{\overline{\cM}}
\newcommand{\be}{\begin{equation}}
\newcommand{\ee}{\end{equation}}
\newcommand{\bea}{\begin{eqnarray}}
\newcommand{\ben}{\begin{eqnarray*}}
\newcommand{\een}{\end{eqnarray*}}
\newcommand{\eea}{\end{eqnarray}}
\DeclareMathOperator{\Id}{id}
\DeclareMathOperator{\Res}{Res}
\definecolor{yellow}{rgb}{1,1,0}
\definecolor{orange}{rgb}{1,.7,0}
\definecolor{red}{rgb}{1,0,0}
\definecolor{green}{rgb}{0,1,1}
\definecolor{white}{rgb}{1,1,1}
\definecolor{A}{rgb}{.75,1,.75}
\theoremstyle{remark}
\newtheorem{Remark}{Remark}[section]
\begin{document}

\newtheorem{myDef}{Definition}
\newtheorem{thm}{Theorem}
\newtheorem{eqn}{equation}

\title[Affine Coordinates for Open Intersection Numbers]
{On affine coordinates of the tau-function for open intersection numbers}

\author{Zhiyuan Wang}
\address{School of Mathematical Sciences\\
Peking University\\Beijing, 100871, China}
\email{zhiyuan19@math.pku.edu.cn}

\begin{abstract}
In Alexandrov's work \cite{al2, al3} it has been shown that
the extended partition function $\exp(F^{o,ext}+F^c)$ introduced by Buryak in \cite{bu, bu2}
is a tau-function of the KP hierarchy.
In this work,
we compute the affine coordinates of this tau-function on the Sato Grassmannian,
and rewrite the Virasoro constraints as recursions for the affine coordinates
in the fermionic picture.
As applications
we derive some formulas for the extended partition function and
the connected $n$-point functions using methods developed by Zhou in \cite{zhou1}
based on the boson-fermion correspondence.

\end{abstract}

\maketitle


\section{Introduction}

Integrable hierarchies have become one of the central topics in the study of mirror symmetry
since the famous Witten Conjecture/Kontsevich Theorem \cite{wi, ko},
which asserts that certain generating series of the intersection numbers of $\psi$-classes
on the moduli spaces $\Mbar_{g,n}$ of stable curves is a tau-function $\tau^{\text{WK}}$ of the
Korteweg-de Vries (KdV) hierarchy.
It is well-known that the Kadomtsev-Petviashvili (KP) hierarchy is a more general integrable system,
and the KdV hierarchy is a reduction of the KP hierarchy.
See \cite{kp} and \cite{KdV} for the KP equation and KdV equation respectively,
and \cite{djm, jm} for an introduction to Kyoto School's approach to the KP hierarchy.

In Kyoto School's approach,
there are three equivalent descriptions of a tau-function of the KP hierarchy
(see \cite{djm, sa}):
\begin{itemize}
\item[1)]
A tau-function corresponds to an element of the Sato Grassmannian;
\item[2)]
A tau-function is a vector in the fermionic Fock space $\cF$
(satisfying the Hirota bilinear relation);
\item[3)]
A tau-function is a vector in the bosonic Fock space $\cB$
(satisfying the Hirota bilinear relation).
\end{itemize}
In particular,
given an element $U\in Gr_{(0)}$ in the big cell of the Sato Grassmannian,
one can associate a vector in each Fock space:
\be
\label{eq-intro-1}
U\in Gr_{(0)}
\quad\to\quad
|U\rangle \in \cF^{(0)}
\quad\to\quad
\tau_U\in\Lambda,
\ee
where $\cF^{(0)}\subset \cF$ is the subspace consisting of states of charge $0$,
and $\Lambda \subset \cB$ is the space of symmetric functions.
The second step $|U\rangle \mapsto \tau_U$ is the so-called boson-fermion correspondence.

One way to describe the first step $U\mapsto |U\rangle$ is to use Bogoliubov transformations
of the vacuum on the fermionic Fock space.
This was known by experts since the works of the Kyoto School,
and has been used many times in physics literatures for various models,
see \cite{adkmv, akmv} and \cite{kmmmz} for examples,
and see also \cite{az} for a review.
In particular,
the ADKMV conjecture \cite{adkmv, akmv} gives a simple expression for the topological vertex
as a Bogoliubov transform on the $3$-component fermionic Fock space.
See \cite{dz1, dz2} for a generalization of the ADKMV conjecture and the proofs for the one-legged and two-legged cases.
Inspired by these results,
Zhou found an explicit formula for the Witten-Kontsevich tau-function $\tau^{\text{WK}}$
as a Bogoliubov transform of the fermionic vacuum in \cite{zhou4}.
His method is to rewrite the Virasoro constraints for $\tau^{\text{WK}}$ (see \cite{dvv, fkn}) in terms of the free fermions,
and then derive recursion relations to compute the coefficients of the Bogoliubov transformation.
The explicit expressions of the coefficients are rather complicated,
but can be interpreted in a nice way,
due to the work \cite{by} of Balogh and Yang.
They found that these coefficients in Zhou's formula are the affine coordinates for the tau-function $\tau^{\text{WK}}$
on the Sato Grassmannian.
Later in \cite{zhou1},
Zhou generalized this idea to the case of an arbitrary tau-function of the KP hierarchy,
and proposed a framework called emergent geometry.
Moreover,
he proved a new formula to compute the connected $n$-point functions
associated to a KP tau-function using the affine coordinates in that work.
See \cite{zhou7, zhou3} for the computation of the affine coordinates of
some other well-known tau-functions in mathematical physics,
including the enumeration of Grothendieck's dessin d'enfants,
the Hermitian one-matrix models,
and the generalized BGW model.
In these examples,
one can write down an explicit formula for the tau-function as a Bogoliubov transform
in the fermionic picture
and derive formulas for the (all genera) connected $n$-point functions,
while finding explicit formulas in the bosonic picture seems difficult.

For each tau-function $\tau$ of the KP hierarchy,
there is a unique way to represent it as the following Bogoliubov transform of the fermionic vacuum
(see \cite[\S 3]{zhou1}):
\ben
\tau = \exp\bigg(\sum_{n,m\geq 0} a_{n,m} \psi_{-m-\half} \psi_{-n-\half}^*\bigg) |0\rangle,
\een
where $\{a_{n,m}\}_{n,m\geq 0}$ are the affine coordinates for $\tau$ on the Sato Grassmannian,
and $\psi_{-m-\half}, \psi_{-n-\half}^*$ are the fermionic creators (for $n,m\geq 0$).
This representation of $\tau$ allows us to
translate some constraints for $\tau$ into the fermionic picture as constraints for $\{a_{n,m}\}_{n,m\geq 0}$.
It is well-known that on the Sato Grassmannian,
a good way to select a particular tau-function is to use Kac-Schwarz operators (see \cite{ks, sc});
and in the bosonic picture,
one usually uses differential equations
(such as the string equation, Virasoro constraints, etc.)
to select a tau-function.
Now in the fermionic Fock space,
a way to select a tau-function will be imposing certain recursion relations or constraints
for the affine coordinates $\{a_{n,m}\}_{n,m\geq 0}$.

In this work,
we study the open intersection numbers
developed by Pandharipande, Solomon, and Tessler in \cite{pst}.
This is the open analog of the intersections on $\Mbar_{g,n}$ that appeared in the Witten conjecture.
In the work \cite{pst},
these authors investigated the intersection theory
on the moduli spaces of Riemann surfaces with boundary.
They constructed rigorously the moduli spaces and open intersection numbers in genus zero,
and conjectured that the all-genera generating function
is uniquely determined by the open Virasoro constraints and
a family of differential equations called the open KdV hierarchy.
In \cite{bu2},
Buryak proved that the open KdV is equivalent to the open Virasoro constraints
for this conjectural all-genera intersection theory,
and there is a unique solution $F^o$ (called the open potential) to these constraints
specified by a certain initial condition.
Moreover,
he constructed an extended open potential $F^{o,ext}$ to include the descendants on the boundary,
and derived the Virasoro constraints for the extended generating function
\be
\tau^o := \exp(F^{o,ext} + F^c)
\ee
in \cite{bu},
where $F^c=\log \tau^{\text{WK}}$ is the free energy of the Witten-Kontsevich tau-function.
He also discussed some relations between $F^{o}$ and $F^c$.
In \cite{al2,al3},
Alexandrov found that the extended generating function $\tau^o$ is the special case $N=1$
of a family of Kontsevich-Penner matrix integral $\tau_N$,
while another special case $N=0$ gives the Witten-Kontsevich tau-function.
Moreover,
he showed that $\tau_N$ is a tau-function of the KP hierarchy,
and constructed an admissible basis
(denoted by $\Phi_i^N$ in \cite{al2}, expressed in terms of some integrals)
on the Sato Grassmannian together with
a Kac-Schwarz operator for each $N$.
The Virasoro constraints and W-constraints for $\tau_N$
were also constructed in his works.
See Buryak-Tessler \cite{bt} for a proof of the conjecture of \cite{pst}
(where the construction of the higher genus intersection theory was announced by Solomon and Tessler,
see also \cite{te}).
For more about the open intersection theory,
see \cite{br, abt, bey, ke, bh, saf, al4, liu, saf2, amr}.

Now we briefly summarize the main results of the present work.
Our method to compute the affine coordinates of the tau-function $\tau^o$ is to
reformulate the Virasoro constraints for $\tau^o$ in terms of the fermionic operators
and then derive recursions for the affine coordinates,
following the strategy developed in \cite{zhou3, zhou4}.
The result is (see Theorem \ref{thm-mainthm}):
\begin{Theorem}
The affine coordinates $\{a_{n,m}^o\}_{n,m\geq 0}$ for the tau-function $\tau^o$ are:
\be
a_{0,3q-1}^o=\big(\frac{3}{2}\big)^q \cdot (2q-1)!!\cdot \sum_{j=0}^q
\frac{(6j-1)!!}{54^j\cdot (2j)!\cdot (2j-1)!!},
\ee
and for every $p\geq 0$ and $q\geq 1$,
\be
\begin{split}
& a_{3p+1,3q-2}^o = a_{3p,3q-1}^{\text{WK}},\\
& a_{3p+2,3q-3}^o = a_{3p+1,3q-2}^{\text{WK}},\\
& a_{3p+3,3q-1}^o = a_{3p+2,3q}^{\text{WK}} - a_{0,3q-1}^o \cdot a_{3p+2,0}^{\text{WK}},\\
\end{split}
\ee
where $\{a_{n,m}^{\text{WK}}\}_{n,m\geq 0}$ are the affine coordinates of the Witten-Kontsevich tau-function
whose explicit formulas have been computed by Zhou in \cite{zhou4}
(see \eqref{eq-WK-coeff} for the formulas).
Moreover,
we have $a_{n,m}^o=0$ for $n+m\not\equiv -1 (\text{mod }3)$.
\end{Theorem}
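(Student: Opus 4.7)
The plan is to follow the strategy of Zhou \cite{zhou3, zhou4}. Since \cite{al2, al3} guarantee that $\tau^o$ is a KP tau-function, it admits a unique Bogoliubov representation $\tau^o = \exp\bigl(\sum_{n,m\geq 0} a^o_{n,m}\,\psi_{-m-\half}\psi^*_{-n-\half}\bigr)|0\rangle$, and my goal is to recover the coefficients $a^o_{n,m}$ by translating Buryak's Virasoro constraints for $\tau^o$ \cite{bu} into the fermionic picture and reading off linear recursions.

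First I would rewrite each open Virasoro operator $L^o_k$ (for $k\geq -1$) annihilating $\tau^o$ as a normal-ordered bilinear in the free fermions $\psi_r,\psi^*_s$ via the boson-fermion correspondence. Commuting $L^o_k$ past the Bogoliubov exponential and projecting onto $|0\rangle$ produces, for each $k$, an infinite family of identities; extracting the coefficient of $\psi_{-m-\half}\psi^*_{-n-\half}|0\rangle$ then yields a triangular system of linear recursions for $\{a^o_{n,m}\}$. This is the open analog of the derivation in \cite{zhou4} of recursions for $\{a^{\text{WK}}_{n,m}\}$ from the DVV constraints.

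The crux is to compare the open recursions with the WK recursions. The open Virasoro operators differ from the DVV operators only by finitely many correction terms coming from the boundary descendants and from the $N=1$ shift in the Kontsevich--Penner matrix integral. A careful inspection should show that, after an appropriate index relabelling, the three families $a^o_{3p+1,3q-2}$, $a^o_{3p+2,3q-3}$, $a^o_{3p+3,3q-1}$ satisfy exactly the same recursions as $a^{\text{WK}}_{3p,3q-1}$, $a^{\text{WK}}_{3p+1,3q-2}$, $a^{\text{WK}}_{3p+2,3q}$, respectively, modulo the single inhomogeneous term $a^o_{0,3q-1}\cdot a^{\text{WK}}_{3p+2,0}$ that appears in the last identity as the coupling between the new ``boundary row'' $n=0$ and the bulk. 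The selection rule $a^o_{n,m}=0$ for $n+m\not\equiv -1\pmod 3$ follows by induction from the $3$-reducibility of the Sato Grassmannian point associated with $\tau^o$, which was established in \cite{al2} together with the admissible basis $\Phi^1_i$.

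It remains to determine the new row $\{a^o_{0,3q-1}\}$, which is not reducible to WK data and parametrises the genuinely ``open'' direction on the Sato Grassmannian. These coefficients are computed from the inhomogeneous, $a^{\text{WK}}$-independent piece of the string-equation recursion in the fermionic picture, which reduces to a scalar difference equation in $q$. I would solve it by matching its solution against the asymptotic expansion of the $N=1$ admissible basis element $\Phi^1_i$ of \cite{al2}, whose coefficients are precisely the partial sums $\sum_{j=0}^{q}\frac{(6j-1)!!}{54^j(2j)!(2j-1)!!}$. The main technical obstacle will be to verify this matching and check the induction step using the double-factorial identity $(6j+5)!!=(6j+5)(6j+3)(6j+1)(6j-1)!!$; once the base row is established, the three shift relations propagate it to the entire table $\{a^o_{n,m}\}$ and the theorem follows.
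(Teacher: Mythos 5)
Your overall strategy --- rewriting Buryak's Virasoro constraints as fermionic bilinears, conjugating past the Bogoliubov exponential, extracting the coefficient of $\psi_{-m-\half}\psi^*_{-n-\half}|0\rangle$ to get recursions, and then matching these against Zhou's recursion for $a^{\text{WK}}_{n,m}$ by induction on $p$ --- is exactly the route the paper takes; the inhomogeneous term $a^o_{0,3q-1}\,a^{\text{WK}}_{3p+2,0}$ and the double-factorial induction for the base row play the same roles there as in your sketch.

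The one place where your plan has a genuine gap is the anchoring of the recursions. The string-equation coefficient identity reads
\begin{equation*}
a_{n,m+1}^o - a_{n+1,m}^o \;=\; a_{0,m}^o a_{n,0}^o + (m+\half)\,a_{n,m-2}^o + (n-\tfrac{3}{2})\,a_{n-2,m}^o ,
\end{equation*}
so on each anti-diagonal $n+m=\text{const}$ it determines only \emph{differences} in terms of earlier anti-diagonals; one needs one independently computed entry per anti-diagonal to start. You propose to supply this seed by reading off the row $n=0$ from Alexandrov's first basis vector, ``whose coefficients are precisely the partial sums'' --- but that expansion is essentially the nontrivial content of the formula for $a^o_{0,3q-1}$ you are trying to prove, and the paper explicitly remarks that extracting explicit coefficients from Alexandrov's admissible basis ``needs more computations.'' The paper instead seeds with the row $n=1$: the second basis vector $z^{\half}\Phi_2^o$ is a pure Airy-type Gaussian integral whose asymptotic expansion is the Faber--Zagier series, giving $a^o_{1,3q-2}=a^{\text{WK}}_{0,3q-1}$ at once; the closed form for $a^o_{0,3q-1}$ is then \emph{derived} from the $p=0$ case of the recursion, $a_{0,3q-1}^o=a_{1,3q-2}^o+(3q-\frac{3}{2})a_{0,3q-4}^o$, by the induction you describe. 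A secondary point: your appeal to ``$3$-reducibility'' of the Grassmannian point to get $a^o_{n,m}=0$ for $n+m\not\equiv -1\pmod 3$ is not the right mechanism (the analogous vanishing holds for $\tau^{\text{WK}}$ even though $U^{\text{WK}}$ is $2$-reduced); in the paper this is an induction on $n+m$ using the same string-equation recursion together with the vanishing pattern of the seed row.
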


We will show that the fermionic reformulation of the Virasoro constraints for $\tau^o$
is equivalent to the following recursions and constraints for the affine coordinates
(see Theorem \ref{thm-coeff-rec-1} and Theorem \ref{thm-coeff-rec-2}):
\begin{Theorem}
For every $n\geq -1$,
the affine coordinates $\{a_{n,m}^o\}_{n,m\geq 0}$ for the tau-function $\tau^o$ satisfy:
\begin{itemize}
\item[1)]
A $(2n+3)$-step recursion relation:
\begin{equation*}
\begin{split}
a_{l,2n+3+m}^o - &a_{2n+3+l,m}^o
=(2n+\frac{5}{2}+m) a_{l,2n+m}^o
+(l-\frac{3}{2}) a_{2n+l,m}^o\\
&\qquad
+\sum_{k=0}^{2n+2} a_{k,m}^o a_{l,2n+2-k}^o
-\sum_{k=0}^{2n-1} (2n+\frac{3}{2}-k) a_{k,m}^o a_{l,2n-1-k}^o,\\
\end{split}
\end{equation*}
\item[2)]
A linear constraint:
\begin{equation*}
\sum_{k=0}^{2n+2} a_{k,2n+2-k}^o =
\sum_{k=0}^{2n-1} (2n+\frac{3}{2}-k) a_{k,2n-1-k}^o.
\end{equation*}
\end{itemize}
Here we set $a_{n,m}^o:=0$ for $n<0$ or $m<0$.
\end{Theorem}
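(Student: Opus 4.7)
The plan is to lift the open Virasoro constraints $L_n^o \tau^o = 0$ (for $n\geq -1$, as stated by Buryak in \cite{bu}) to the fermionic Fock space via the boson-fermion correspondence, and then to push them through the Bogoliubov exponential
\[
\tau^o = \exp\bigg(\sum_{p,q\geq 0} a^o_{p,q}\,\psi_{-q-\half}\psi^*_{-p-\half}\bigg)\vac.
\]
Under the correspondence each bosonic $L_n^o$ becomes a fermionic operator consisting of a normal-ordered bilinear $\sum_{k\in\bZ+\half} c^{(n)}_k :\psi_{k}\psi^*_{k+2n+3}:$, a linear part coming from the $t_i\,\pd/\pd t_j$ and $\pd/\pd t_j$ pieces of $L_n^o$, and a scalar anomaly fixed by normal ordering. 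The index shift $2n+3$ is the structural input that forces the $(2n+3)$-step pattern visible on the left-hand side of the recursion.

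Next I would rewrite $L_n^o \tau^o = e^A\bigl(e^{-A} L_n^o e^A\bigr)\vac$ with $A := \sum_{p,q\geq 0} a^o_{p,q}\,\psi_{-q-\half}\psi^*_{-p-\half}$, and use the truncating formulas
\[
e^{-A}\psi_r e^A = \psi_r + [A,\psi_r],\qquad e^{-A}\psi_s^* e^A = \psi_s^* + [A,\psi_s^*],
\]
(which terminate after one step since $A$ is bilinear in creators). A single normal-ordered bilinear $:\psi_k\psi^*_{k+2n+3}:$ therefore becomes, after conjugation, itself plus two terms linear in the $a^o_{*,*}$ times a fermion bilinear plus a term quadratic in the $a^o_{*,*}$. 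Acting on $\vac$, only fermion bilinears of the form $\psi_{-m-\half}\psi^*_{-l-\half}\vac$ with $l,m\geq 0$ survive.

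The vanishing of $L_n^o \tau^o$ then amounts to the vanishing of each independent component on the right. Collecting the coefficient of $\psi_{-m-\half}\psi^*_{-l-\half}\vac$ should produce exactly the $(2n+3)$-step recursion in the theorem: the left-hand side $a^o_{l,2n+3+m} - a^o_{2n+3+l,m}$ comes from the bilinear part of $L_n^o$ contracting once against one creator of $A$; the linear terms $(2n+\tfrac52+m)a^o_{l,2n+m}$ and $(l-\tfrac32)a^o_{2n+l,m}$ come from the linear part of $L_n^o$; and the two sums $\sum_{k=0}^{2n+2}$ and $\sum_{k=0}^{2n-1}$ come from the quadratic-in-$a^o$ contributions after Wick contractions. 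The coefficient of $\vac$ itself is the linear constraint, arising because the vacuum expectation of the bilinear part of $L_n^o$ (the anomaly) must be cancelled by the Wick contraction of $L_n^o$'s bilinear with $A$.

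The main obstacle is the careful bookkeeping: one must pin down the exact fermionic form of the open Virasoro operators $L_n^o$, particularly how the open time and the descendant variables are absorbed after Alexandrov's identification of $\tau^o$ as the $N=1$ Kontsevich-Penner integral \cite{al2,al3}, and then track signs from anticommutation and the normal-ordering constants. The specific coefficients $2n+\tfrac52+m$, $l-\tfrac32$, and the shifted weights $(2n+\tfrac32-k)$ in the second sum are sensitive to these choices, and they are precisely what distinguishes the open case from Zhou's treatment of $\tau^{\text{WK}}$ in \cite{zhou4}; once the fermionic $L_n^o$ is correctly written down, extracting the two statements of the theorem becomes a matter of matching coefficients.
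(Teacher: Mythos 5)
Your strategy is the same as the paper's: represent $\tau^o=e^{A^o}|0\rangle$, rewrite the Virasoro operators as fermion bilinears, conjugate by $e^{A^o}$ using the one-step truncation of the adjoint action, and read off part 1) from the coefficient of $\psi_{-m-\half}\psi_{-l-\half}^*|0\rangle$ and part 2) from the coefficient of $|0\rangle$. Your attribution of the individual terms (shift-$(2n+3)$ bilinear giving the left-hand side, shift-$2n$ linear part giving the $(2n+\frac{5}{2}+m)$ and $(l-\frac{3}{2})$ terms, doubly-annihilating pieces giving the quadratic sums and the anomaly) is also correct. One small slip: the truncated conjugation is $e^{-A}\psi_r e^{A}=\psi_r-[A,\psi_r]$, with a minus sign, since $e^{-A}\phi e^{A}=\phi-[A,\phi]+\cdots$.

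There is, however, one substantive ingredient you flag as ``bookkeeping'' but which the plan cannot do without, and for $n\geq 1$ the argument stalls at exactly this point. After the change of variables $t_n=(2n+1)!!\,T_{2n+1}$, $s_n=2^{n+1}(n+1)!\,T_{2n+2}$, Buryak's operator $\cL_n^{ext}$ contains the second-order piece $\half\sum_{i=0}^{n-1}\pd^2/(\pd T_{2i+1}\pd T_{2n-2i-1})$ running over \emph{odd} indices only, together with the first-order term $\frac{3}{2}(n+1)\,\pd/\pd T_{2n}$ coming from the boundary descendants. This is not of the form $\half\sum_{a+b=2n}:\alpha_a\alpha_b:$ plus linear terms in the $\alpha$'s; an odd-odd-only partial sum of $\alpha_a\alpha_b$ does not collapse to a fermion bilinear, so conjugating it by $e^{A^o}$ yields relations cubic and quartic in the $a^o_{n,m}$ rather than the stated recursion. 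The paper's remedy is Buryak's identity $\pd_{T_2}^{k}\tau^o=\pd_{T_{2k}}\tau^o$, which (when acting on $\tau^o$) trades $\frac{3}{2}(n+1)\,\pd_{T_{2n}}$ for $\half\sum_{i=1}^{n-1}\pd_{T_{2i}}\pd_{T_{2n-2i}}+(n+2)\,\pd_{T_{2n}}$, completing the even-even part of the current bilinear and producing $\widetilde\cL_n^{ext}=-\alpha_{2n+3}+(n+2)\alpha_{2n}+\half\sum_{a+b=2n}:\alpha_a\alpha_b:$, which is a genuine fermion bilinear via $\sum_{a+b=2n}:\alpha_a\alpha_b:=\sum_{r+s=2n}(s-r):\psi_r\psi_s^*:$. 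Once this replacement is made, your coefficient-matching goes through exactly as you describe and yields both parts of the theorem; without it, the specific coefficients $(2n+\frac{5}{2}+m)$, $(l-\frac{3}{2})$ and $(2n+\frac{3}{2}-k)$ you highlight would not come out, because the operator being conjugated would not even be quadratic in the fermions.
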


As a corollary,
we obtain a fermionic representation of $\tau^o$ as a Bogoliubov transform
of the fermionic vacuum:
\be
\tau^o = \exp (A^o) |0\rangle,
\ee
where $A^o$ is a quadratic operator of the fermionic creators:
\ben
A^o = \sum_{n,m\geq 0} a_{n,m}^o \psi_{-m-\half}\psi_{-n-\half}^*.
\een
This leads to a formula in the bosonic picture
in terms of Schur functions
(see \S \ref{sec-f-bogoliubov}-\S \ref{sec-b-schur} for details):
\be
\tau^o = 1+
\sum_{\mu:\text{ } |\mu|>0}
(-1)^{n_1+\cdots+n_k}\cdot\det(a_{n_i,m_j}^o)_{1\leq i,j\leq k} \cdot s_\mu,
\ee
where $\mu=(m_1,\cdots,m_k|n_1,\cdots,n_k)$ is the Frobenius notation of a partition $\mu$.
The Schur functions $s_\mu$ are related to the Newton symmetric functions $p_k$ by (see \cite{mac}):
\ben
p_\mu = \sum_{\lambda} \chi_\mu^\lambda s_\lambda,
\een
where $p_\mu=p_{\mu_1}\cdots p_{\mu_k}$ for $\mu=(\mu_1,\cdots,\mu_k)$,
and in the boson-fermion correspondence the
time variables of the KP hierarchy are $T_n := \frac{p_n}{n}$
(see \S \ref{sec-symm-fn} and \S \ref{sec-b-f} for a brief review).
The above summation is over all partitions $\lambda$ with $|\lambda|=|\mu|$,
and $\chi_\mu^\lambda$ are the characters of irreducible representations of the symmetric group $S_{|\mu|}$.

Furthermore,
we compute the generating series
\ben
A^o (x,y):= \sum_{n,m\geq 0} a_{n,m}^o x^{-n-1} y^{-m-1}
\een
of the affine coordinates for $\tau^o$, and prove (see Theorem \ref{thm-generating}):
\begin{Theorem}
We have:
\be
A^o(x,y)= \frac{1}{y-x} +
\frac{c(y)a(-x)}{x}
+\frac{y}{x}\cdot
\frac{a(y)b(-x)-a(-x)b(y)}{y^2-x^2},
\ee
where $a(z),b(z)$ are the Faber-Zagier series:
\begin{equation*}
a(z)=\sum_{m=0}^\infty \frac{(6m-1)!!}{36^m\cdot (2m)!}z^{-3m},
\qquad
b(z)=-\sum_{m=0}^\infty \frac{(6m-1)!!}{36^m\cdot (2m)!}\frac{6m+1}{6m-1}z^{-3m+1},
\end{equation*}
and $c(z)$ is the generating series of $\{a_{0,m}^o\}_{m\geq 0}$:
\ben
c(z)=\sum_{m\geq 0}
\big(\frac{3}{2}\big)^m \cdot (2m-1)!!\cdot \sum_{j=0}^m
\frac{(6j-1)!!}{54^j\cdot (2j)!\cdot (2j-1)!!}\cdot
z^{-3m}.
\een
\end{Theorem}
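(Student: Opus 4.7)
The plan is to substitute the relations from the preceding theorem---which express each $a^o_{n,m}$ either as the explicit coefficient $a^o_{0,3q-1}$ or as a shifted $a^{\text{WK}}_{*,*}$ (with a correction term in Case 4)---directly into the definition of $A^o(x,y)$, and to resum the resulting series using the explicit formulas \eqref{eq-WK-coeff} for the Witten-Kontsevich affine coordinates.

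First I split $A^o(x,y)=\sum_{n,m\geq 0}a^o_{n,m}x^{-n-1}y^{-m-1}$ into four subsums according to the four families of indices in the preceding theorem. The family $(0,3q-1)$ contributes exactly $(c(y)-1)/x$ by the definition of $c(z)$. Substituting the relations $a^o_{3p+1,3q-2}=a^{\text{WK}}_{3p,3q-1}$, $a^o_{3p+2,3q-3}=a^{\text{WK}}_{3p+1,3q-2}$, and $a^o_{3p+3,3q-1}=a^{\text{WK}}_{3p+2,3q}-a^o_{0,3q-1}\,a^{\text{WK}}_{3p+2,0}$ into the other three families and re-indexing by $n=3p+i$, $m=3q+j$, each of these three pieces acquires a common factor $y/x$ and reassembles---modulo the $m=0$ boundary of the third family---into $(y/x)\,A^{\text{WK}}(x,y)$. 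The factorable correction in Case 4 produces a term proportional to $c(y)$ times the boundary series $\sum_{p\geq 0}a^{\text{WK}}_{3p+2,0}\,x^{-3p-3}$, which combines with the $m=0$ boundary terms to yield the identity
\begin{equation*}
A^o(x,y) = \frac{c(y)-1}{x} + \frac{y}{x}\,A^{\text{WK}}(x,y) - \frac{c(y)}{x}\sum_{p\geq 0}a^{\text{WK}}_{3p+2,0}\,x^{-3p-3}.
\end{equation*}

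Next I invoke the closed form for $A^{\text{WK}}(x,y)$ obtained by Zhou in \cite{zhou4} and reinterpreted by Balogh and Yang in \cite{by} as a rational combination of the Faber-Zagier series $a$ and $b$, which already carries the antisymmetrized numerator $a(y)b(-x)-a(-x)b(y)$ of the target formula. The boundary series $\sum_{p\geq 0}a^{\text{WK}}_{3p+2,0}\,x^{-3p-3}$ is the leading $y^{-1}$ coefficient of $A^{\text{WK}}$ and by the same formulas collapses to a simple expression in $a(-x)$. Substituting both ingredients and gathering the $c(y)$-proportional terms with $(c(y)-1)/x$ produces the summand $c(y)a(-x)/x$, while the remaining pieces assemble into $1/(y-x)+(y/x)\cdot(a(y)b(-x)-a(-x)b(y))/(y^2-x^2)$.

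The main technical obstacle is the algebraic verification of this final collapse: one must check that the rational expression for $(y/x)\,A^{\text{WK}}(x,y)$, after subtracting the $c(y)$-correction and adding $(c(y)-1)/x$, simplifies to the claimed compact form. This hinges on the mod-3 selection rule (both $a^{\text{WK}}_{n,m}$ and $a^o_{n,m}$ are supported on $n+m\equiv-1\pmod 3$) together with a few elementary identities among $a$ and $b$ that expose the three sectors $(n\bmod 3,m\bmod 3)\in\{(0,2),(1,1),(2,0)\}$ individually. A parallel and perhaps more direct route is to verify the claimed formula coefficient-by-coefficient in each mod-3 sector using Zhou's explicit formulas \eqref{eq-WK-coeff} on one side and the definition of $c$ on the other.
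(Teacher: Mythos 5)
Your proposal is correct and follows essentially the same route as the paper: split $A^o(x,y)$ according to the four index families of Theorem \ref{thm-mainthm}, recognize the shifted Witten--Kontsevich part as $\tfrac{y}{x}A^{\text{WK}}(x,y)$ up to the $m=0$ boundary series, use the symmetry $a_{n,0}^{\text{WK}}=(-1)^n a_{0,n}^{\text{WK}}$ to collapse that boundary series to $1-a(-x)$, and substitute Zhou's closed form for $A^{\text{WK}}$. Your intermediate identity (with the coefficient $c(y)$ rather than $c(y)-1$ on the boundary series, after absorbing the $m=0$ terms) is exactly equivalent to the paper's, and the final ``collapse'' you flag as the main obstacle is in fact immediate from $\tfrac{y}{x(y-x)}=\tfrac{1}{y-x}+\tfrac{1}{x}$.
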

Having this formula for the generating function,
we are able to compute the connected $n$-point functions for $\log(\tau^o)$ and $F^{o,ext}$
using a formula derived by Zhou in \cite{zhou1}.
See \S \ref{sec-gen}-\S \ref{sec-relation-npt-3} for details of the computations.
In particular,
the one-point function of $F^{o,ext}$ is given by
(see \S \ref{sec-relation-npt-1} for details):
\begin{equation*}
\sum_{j\geq 1}
\frac{\pd F^{o,ext}}{\pd T_j} \bigg|_{\bm T=0}
\cdot z^{-j-1}
=\frac{c(z)a(-z)-1}{z}.
\end{equation*}
And the two-point function is given by (see \S \ref{sec-relation-npt-2}):
\begin{equation*}
\begin{split}
&\sum_{j,k\geq 1}\frac{\pd^2 F^{o,ext}}{\pd T_j\pd T_k}\bigg|_{\bm T =0} \cdot
x^{-j-1} y^{-k-1}=
- \frac{a(-x)a(-y)c(x)c(y)}{xy}\\
&\qquad\qquad\qquad\qquad +\frac{1}{x(x^2-y^2)}\bigg(
a(y)a(-y)b(-x)c(x) - a(-x)a(-y)b(y)c(x)
\bigg)\\
&\qquad\qquad\qquad\qquad +\frac{1}{y(x^2-y^2)}\bigg(
a(-x)a(-y)b(x)c(y) - a(x)a(-x)b(-y)c(y)
\bigg).
\end{split}
\end{equation*}
See also \S \ref{sec-relation-npt-3} for a similar but more complicated formula for the $3$-point function.
It will be interesting to compare these formulas with the formulas for the $n$-point functions
derived by Bertola-Ruzza in \cite{br} using a different method.

A further question is to compute the affine coordinates of the partition function
$\tau_N$ for the Kontsevich-Penner model (see \cite{al2}) for general $N$,
and we hope to address this problem in future works.
It is also interesting to discuss the emergent geometry for $\tau^o$ and $\tau_N$ following
the methods developed in \cite{zhou1, zhou8, zhou9},
and this will fit into the general picture listed in \cite[p.5]{zhou3} which describes the dualities
among various theories whose partition functions are all tau-functions of the KP hierarchy.

The rest of this paper is arranged as follows.
In \S \ref{sec-pre-KP} we recall some preliminaries of Kyoto School's approach to the KP hierarchy.
In \S \ref{sec-affinecoord} we give the main results on the affine coordinates of the tau-function $\tau^o$.
The proofs of these results will be given in \ref{sec-proof},
where we derive a fermionic description of the Virasoro constraints for $\tau^o$.
In \S \ref{sec-applications} we give some applications of the main theorem,
including the computations of $\tau^o$ and $\log(\tau^o)$ using the affine coordinates.

\section{Preliminaries of KP Hierarchy and Sato's Theory}
\label{sec-pre-KP}

We recall some basic knowledge of Kyoto School's approach to the KP hierarchy in this section.
In particular,
we describe the constructions in \eqref{eq-intro-1}.

\subsection{Partitions of integers, Young diagrams, and Schur functions}
\label{sec-symm-fn}

First we recall the partitions of integers and Schur functions.
See \cite{mac} for an introduction.

A partition of a positive integer $n$ is a sequence of integers
$\lambda=(\lambda_1,\lambda_2,\cdots,\lambda_l)$ satisfying
$\lambda_1\geq \cdots\geq \lambda_l> 0$ and $|\lambda|:=\lambda_1+\cdots+\lambda_l = n$.
The number $l$ is called the length of the partition $\lambda$,
denoted by $l(\lambda)$.

There is a one-to-one correspondence between the set of all partitions and the set of Young diagrams.
Given a partition $\lambda=(\lambda_1,\cdots,\lambda_l)$,
we associate a Young diagram consisting of $|\lambda|$ boxes,
such that there are $\lambda_i$ boxes on the $i$-th row.
For example,
all the partitions of $4$ and their associated Young diagrams are:
\begin{equation*}
\begin{tikzpicture}[scale=0.75]
\draw (0,0) -- (2,0);
\draw (0,-0.5) -- (2,-0.5);
\draw (0,0) -- (0,-0.5);
\draw (0.5,0) -- (0.5,-0.5);
\draw (1,0) -- (1,-0.5);
\draw (1.5,0) -- (1.5,-0.5);
\draw (2,0) -- (2,-0.5);
\draw [dashed] (-0.15,0.15) -- (0.65,-0.65);
\node [align=center,align=center] at (1,0.5) {$(4)$};
\draw (3,0) -- (4.5,0);
\draw (3,-0.5) -- (4.5,-0.5);
\draw (3,-1) -- (3.5,-1);
\draw (3,0) -- (3,-1);
\draw (3.5,0) -- (3.5,-1);
\draw (4,0) -- (4,-0.5);
\draw (4.5,0) -- (4.5,-0.5);
\draw [dashed] (3-0.15,0.15) -- (3.65,-0.65);
\node [align=center,align=center] at (3.75,0.5) {$(3,1)$};
\draw (5.5,0) -- (6.5,0);
\draw (5.5,-0.5) -- (6.5,-0.5);
\draw (5.5,-1) -- (6.5,-1);
\draw (5.5,0) -- (5.5,-1);
\draw (6,0) -- (6,-1);
\draw (6.5,0) -- (6.5,-1);
\draw [dashed] (5.5-0.15,0.15) -- (6.65,-1.15);
\node [align=center,align=center] at (6,0.5) {$(2,2)$};
\draw (5.5+2,0) -- (6.5+2,0);
\draw (5.5+2,-0.5) -- (6.5+2,-0.5);
\draw (5.5+2,-1) -- (6+2,-1);
\draw (5.5+2,-1.5) -- (6+2,-1.5);
\draw (5.5+2,0) -- (5.5+2,-1.5);
\draw (6+2,0) -- (6+2,-1.5);
\draw (6.5+2,0) -- (6.5+2,-0.5);
\draw [dashed] (7.5-0.15,0.15) -- (7.5+0.65,-0.65);
\node [align=center,align=center] at (6+2,0.5) {$(2,1,1)$};
\draw (8.5+2,0) -- (8+2,0);
\draw (8.5+2,-0.5) -- (8+2,-0.5);
\draw (8.5+2,-1) -- (8+2,-1);
\draw (8.5+2,-1.5) -- (8+2,-1.5);
\draw (8.5+2,-2) -- (8+2,-2);
\draw (8+2,0) -- (8+2,-2);
\draw (8.5+2,0) -- (8.5+2,-2);
\draw [dashed] (10-0.15,0.15) -- (10+0.65,-0.65);
\node [align=center,align=center] at (8.25+2,0.5) {$(1,1,1,1)$};
\end{tikzpicture}
\end{equation*}

The dotted lines above are called the diagonal of the Young diagram.
Given a partition $\lambda=(\lambda_1,\cdots,\lambda_l)$,
denote by $\lambda^t=(\lambda_1^t,\cdots,\lambda_m^t)$ its transpose
(obtained by flipping the Young diagram along the diagonal), then clearly $m=\lambda_1$,
and $(\lambda^t)^t =\lambda$.
For example,
from the above expressions one easily sees that:
\begin{equation*}
(4)^t = (1,1,1,1),
\qquad
(3,1)^t = (2,1,1),
\qquad
(2,2)^t = (2,2).
\end{equation*}
The Frobenius notation of a partition $\lambda$ is defined to be:
\ben
\lambda= (m_1,m_2,\cdots,m_k | n_1,n_2,\cdots,n_k),
\een
where $k$ is the number of boxes on the diagonal of the associated Young diagram,
and $m_i := \lambda_i - i$,
$n_i:=\lambda_i^t - i$.

Now let us recall the definition of Schur functions $s_\lambda$
where $\lambda$ is a partition.
Fix a sequence of formal variables $\bm y:=(y_1,y_2,\cdots)$.
First for the special case $\lambda=(m|n)$,
define the Schur function $s_{(m|n)}$ by:
\be
s_{(m|n)}(\bm y)= h_{m+1}(\bm y)e_n(\bm y) - h_{m+2}(\bm y)e_{n-1}(\bm y) + \cdots
+ (-1)^n h_{m+n+1}(\bm y),
\ee
where $h_n$ and $e_n$ are the
complete symmetric function and elementary symmetric function of degree $n$ respectively:
\begin{equation*}
h_n(\bm y):= \sum_{\sum_{i} d_i =n} y_1^{d_1}y_2^{d_2}y_3^{d_3}\cdots,
\qquad\quad
e_n(\bm y):= \sum_{i_1<\cdots <i_n} y_{i_1}\cdots y_{i_n}.
\end{equation*}
Now let $\lambda=(m_1,\cdots,m_k|n_1,\cdots,n_k)$.
Then the Schur function $s_\lambda$ labeled by $\lambda=(m_1,\cdots,m_k|n_1,\cdots,n_k)$
is defined to be:
\be
s_\lambda (\bm y) = \det (s_{(m_i|n_j)} (\bm y))_{1\leq i,j\leq k}.
\ee
Let $\Lambda$ be the space of all symmetric functions in $\bm y$,
then all Schur functions $\{s_\lambda (\bm y)\}_\lambda$ form a basis for $\Lambda$
(here $s_{(0)}:=1$).

\begin{Remark}
Let $p_n$ be the Newton symmetric function defined by:
\ben
p_n (\bm y):=\sum_i y_i^n,
\een
and denote $p_\mu:=p_{\mu_1}\cdots p_{\mu_k}$.
Then $\{p_\mu\}_\mu$ and $\{s_\mu\}_\mu$ are two bases of $\Lambda$,
and they are related by
(see eg. \cite{mac}):
\be
\label{eq-newton-schur}
p_\mu = \sum_{\lambda} \chi_\mu^\lambda s_\lambda,
\qquad
\qquad
s_\mu = \sum_{\lambda} \frac{\chi_\lambda^\mu}{z_\lambda} p_\lambda,
\ee
where $z_\lambda= \prod_{i\geq 1} i^{k_i}\cdot k_i!$
and $k_i$ is the number of $i$ appearing in the partition $\lambda$.
The summations on the right-hand sides of \eqref{eq-newton-schur} are over all partitions $\lambda$
with $|\lambda|=|\mu|$,
and $\{\chi_\mu^\lambda\}$ are the characters of irreducible representations of
the symmetric group $S_{|\mu|}$ (see eg. \cite[\S 4]{fh} for an introduction of the representation theory of $S_n$).
\end{Remark}

\subsection{Semi-infinite wedge products and the fermionic Fock space}

Now we recall the construction of the fermionic Fock space $\cF$
via semi-infinite wedge product,
and the action of a Clifford algebra on $\cF$.

Let $\bm{a}=(a_1,a_2,\cdots)$ be a sequence of half-integers in $\bZ+\half$,
satisfying the condition $a_1<a_2<\cdots$.
We say $\bm{a}$ is admissible if:
\ben
\big|(\bZ_{\geq 0}+\half)-\{a_1,a_2\cdots\}\big|<\infty,
\qquad
\big|\{a_1,a_2\cdots\}-(\bZ_{\geq 0}+\half)\big|<\infty.
\een
Given such an $\bm{a}$,
one can associate a semi-infinite wedge product $|\bm a\rangle$ by:
\ben
| \bm a\rangle :=
z^{a_1} \wedge z^{a_2} \wedge z^{a_3} \wedge \cdots
\een
Then the fermionic Fock space $\cF$ is the space of expressions of the following form:
\ben
\sum_{\bm a} c_{\bm a} |\bm a\rangle,
\een
where the summation is taken over admissible sequences.

Assume $|\bm a\rangle\in \cF$.
The charge of the vector $|\bm a\rangle$ is defined to be the number:
\begin{equation*}
\text{charge}(|\bm a\rangle):=
\big|(\bZ_{\geq 0}+\half)-\{a_1,a_2\cdots\}\big|-
\big|\{a_1,a_2\cdots\}-(\bZ_{\geq 0}+\half)\big|,
\end{equation*}
and this gives us a decomposition of the fermionic Fock space:
\be
\cF=\bigoplus_{n\in \bZ} \cF^{(n)},
\ee
where $\cF^{(n)}$ is spanned by vectors $|\bm a\rangle$ of charge $n$.
Denote:
\ben
|n\rangle := z^{n+\half}\wedge z^{n+\frac{3}{2}}
\wedge z^{n+\frac{5}{2}} \wedge \cdots \in \cF^{(n)}.
\een
In particular, the vector
$|0\rangle := z^{\half}\wedge z^{\frac{3}{2}}
\wedge z^{\frac{5}{2}} \wedge \cdots \in \cF^{(0)}$
is called the fermionic vacuum vector.

The space $\cF^{(0)}$ has a basis labeled by partitions of integers.
Let $\mu=(\mu_1,\mu_2,\cdots)$ be a partition such that
$\mu_1\geq \mu_2\geq \cdots\geq \mu_l >\mu_{l+1}=\mu_{l+2}=\cdots=0$,
and denote:
\ben
|\mu\rangle:=
z^{\frac{1}{2}-\mu_1}\wedge z^{\frac{3}{2}-\mu_2}\wedge
z^{\frac{5}{2}-\mu_3}\wedge \cdots \in\cF^{(0)},
\een
then $\{|\mu\rangle\}$ form a basis for $\cF^{(0)}$.
In particular, $(0,0,\cdots)$ is a partition of $0\in \bZ$,
and the corresponding vector is $|0\rangle\in\cF^{(0)}$.

Now we recall the action of fermions $\psi_r, \psi_r^*$ ($r\in \bZ+\half$)
on the fermionic Fock space.
Let $\{\psi_r, \psi_r^*\}_{r\in\bZ+\half}$ be a set of generators of a Clifford algebra,
satisfying the anti-commutation relations:
\be
\label{eq-anticomm-psi}
[\psi_r,\psi_s]_+=0,
\qquad
[\psi_r^*,\psi_s^*]_+=0,
\qquad
[\psi_r,\psi_s^*]_+= \delta_{r+s,0}\cdot \Id,
\ee
where the bracket is defined by $[\phi,\psi]_+:=\phi\psi+\psi\phi$.
Then the Clifford algebra acts on the fermionic Fock space $\cF$ by:
\ben
\psi_r |\bm a\rangle := z^r \wedge |\bm a\rangle,
\qquad \forall r\in \bZ+\half,
\een
and
\ben
\psi_r^* | \bm a\rangle:=
\begin{cases}
(-1)^{k+1} \cdot z^{a_1}\wedge z^{a_2}\wedge \cdots \wedge \widehat{z^{a_k}} \wedge \cdots,
&\text{if $a_k = -r$ for some $k$;}\\
0, &\text{otherwise.}
\end{cases}
\een

It is clear that $\{\psi_r\}$ all have charge $-1$,
and $\{\psi_r^*\}$ all have charge $1$.
The operators $\{\psi_r,\psi_r^*\}_{r<0}$ are called the fermionic creators,
and $\{\psi_r,\psi_r^*\}_{r>0}$ are called the fermionic annihilators.
One can check that:
\be
\psi_r |0\rangle=0,
\qquad
\psi_r^* |0\rangle=0,
\qquad \forall r>0.
\ee
Moreover,
every element of the form $|\mu\rangle$ (where $\mu$ is a partition)
can be obtained by applying fermionic creators to the vacuum $|0\rangle$ in the following way:
\be
|\mu\rangle=(-1)^{n_1+\cdots+n_k}\cdot
\psi_{-m_1-\half} \psi_{-n_1-\half}^* \cdots
\psi_{-m_k-\half} \psi_{-n_k-\half}^* |0\rangle,
\ee
if the Frobenius notation for $\mu=(\mu_1,\mu_2,\cdots)$ is
$\mu=(m_1,\cdots, m_k | n_1,\cdots,n_k)$.

Furthermore,
one can define an inner product $(\cdot,\cdot)$ on the Fock space $\cF$ by taking
$\{|\bm a\rangle | \text{$\bm a$ is admissible}\}$ to be an orthonormal basis.
Given two admissible sequences $\bm a$ and $\bm b$,
we denote by $\langle \bm b | \bm a\rangle:=(|\bm a\rangle,|\bm b\rangle)$
the inner product of $|\bm a\rangle$ and $|\bm b\rangle$.
It is easy to see that
$\psi_r$ and $\psi_{-r}^*$ are adjoint to each other with respect to this inner product.

\subsection{Affine coordinates on the big cell of the Sato Grassmannian}
\label{sec-pre-affinecoord}

In this subsection we recall the construction of the big cell $Gr_{(0)}$ of the Sato Grassmannian
and the affine coordinates on it.
Then we recall how to associate a vector in the fermionic Fock space $\cF^{(0)}\subset \cF$ to each element of $Gr_{(0)}$
by Bogoliubov transformations.

Let $H$ be the infinite-dimensional vector space:
\ben
H:=\big\{\text{formal series }\sum_{n\in \bZ} a_n z^{n-\half}
\big| a_n=0 \text{ for } n>>0
\big\},
\een
Then $H=H_+ \oplus H_-$,
where
$H_+ = z^\half \bC [z]$ and
$H_- = z^{-\half} \bC [[z^{-1}]]$.
Denote by $\pi_\pm :H\to H_\pm$ the natural projections.
The big cell $Gr_{(0)}$ of the Sato Grassmannian consists of linear subspaces $U\subset H$
such that the $\pi_+: U \to H_+$ is an isomorphism.

Given an element $U\in Gr_{(0)}$,
a basis of the form:
\be
\big\{
\tilde f_n
=z^{n+\half} + \sum_{j<n}\tilde a_{n,j} z^{j+\half}
\big\}_{n\geq 0}
\ee
is called an admissible basis.
Among various admissible bases for $U$,
there is a unique one of the following form
(see eg. \cite{by} or \cite[\S 3]{zhou1}):
\be
\big\{
f_n
=z^{n+\half} + \sum_{m\geq 0} a_{n,m} z^{-m-\half}
\big\}_{n\geq 0},
\ee
called the normalized basis
(also called the canonical basis in some literatures,
see eg. \cite{kmmmz})
for $U$.
The coefficients $\{a_{n,m}\}_{n,m\geq 0}$ of this normalized basis
are called the affine coordinates of $U$ on $Gr_{(0)}$.
This basis determines a semi-infinite wedge product:
\begin{equation*}
\begin{split}
f_0\wedge f_1\wedge f_2\wedge \cdots
:=& \sum \alpha_{m_1,\cdots, m_k;n_1,\cdots, n_k}
z^{-m_1-\half}\wedge\cdots\wedge z^{-m_k-\half} \\
&\quad \wedge z^{\half}\wedge z^{\frac{3}{2}}\wedge\cdots\wedge \widehat{z^{n_k+\half}}
\wedge\cdots \wedge \widehat{z^{n_1+\half}} \wedge \cdots,
\end{split}
\end{equation*}
where $m_1>m_2>\cdots >m_k\geq 0$ and $n_1>n_2>\cdots >n_k\geq 0$ are two sequences of integers,
and the coefficients are given by:
\ben
\alpha_{m_1,\cdots, m_k;n_1,\cdots, n_k}
=(-1)^{n_1+\cdots+n_k} \cdot
\det
\begin{pmatrix}
a_{n_1,m_1} & \cdots & a_{n_1,m_k} \\
\vdots & \vdots & \vdots\\
a_{n_k,m_1} & \cdots & a_{n_k,m_k}
\end{pmatrix}.
\een
Thus we get a linear map:
\be
Gr_{(0)}\to \cF^{(0)},\qquad
U=\text{span}\{f_0,f_1,f_2\cdots\}
\mapsto |U\rangle := f_0\wedge f_1 \wedge f_2 \wedge\cdots,
\ee
where $\{f_n\}_{n\geq 0}$ is the normalized basis for $U$.
It is known by experts that
the Grassmannian approach is in fact equivalent to the fermionic picture,
and the associated element $|U\rangle$
in the fermionic Fock space can be constructed in terms of the
coefficients of an admissible basis of $U\in Gr_{(0)}$ directly,
see eg. \cite[\S 2.5]{kmmmz}.
In this work we will use the following formulation (see \cite[\S 3.6]{zhou1}):
\begin{Theorem}
\label{thm-coeff-Bogoliubov}
Let $\{a_{n,m}\}_{n,m\geq 0}$ be the affine coordinates of $U\in Gr_{(0)}$.
Then $|U\rangle \in \cF^{(0)}$ can be represented as the following Bogoliubov transform:
\ben
|U\rangle = e^A |0\rangle,
\een
where $A:\cF^{(0)}\to\cF^{(0)}$ is defined using the fermionic creators as follows:
\be
\label{eq-def-bog}
A:= \sum_{n,m\geq 0} a_{n,m} \psi_{-m-\half} \psi_{-n-\half}^*.
\ee
\end{Theorem}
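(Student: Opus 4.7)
The plan is to compute $e^A|0\rangle$ by directly expanding the exponential and comparing with the wedge-product formula for $|U\rangle$ already derived in the excerpt. I would introduce the elementary Bogoliubov pieces $\pi_{n,m}:=\psi_{-m-\half}\psi^*_{-n-\half}$ for $n,m\geq 0$ and first check two facts: (a) any two $\pi_{n,m}$ and $\pi_{n',m'}$ commute, because each is bilinear in fermions and the relevant anticommutator $[\psi_{-m-\half},\psi^*_{-n'-\half}]_+=\delta_{-m-\half-n'-\half,0}$ vanishes for all $n,n',m,m'\geq 0$; and (b) $\pi_{n,m}\pi_{n',m'}=0$ whenever $n=n'$ or $m=m'$, since $\psi^2=(\psi^*)^2=0$. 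These two properties are the backbone of the argument.

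Given (a) and (b), the expansion of $A^k|0\rangle$ collapses. In
\[
\frac{A^k}{k!}|0\rangle=\frac{1}{k!}\sum a_{n_1,m_1}\cdots a_{n_k,m_k}\,\pi_{n_1,m_1}\cdots\pi_{n_k,m_k}|0\rangle,
\]
commutativity identifies the $k!$ orderings of each unordered selection of index-pairs, and nilpotency kills any contribution with a repeated $n_i$ or $m_i$. After rearranging the $n_i$ in decreasing order $n_1>\cdots>n_k\geq 0$, the remaining freedom is a bijection $\sigma$ between the $n_i$'s and an ordered set $m_1>\cdots>m_k\geq 0$, with coefficient $\prod_i a_{n_i,m_{\sigma(i)}}$. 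I would then move all $\psi$-creators to the left of all $\psi^*$-creators, producing a uniform sign $(-1)^{k(k-1)/2}$, and reorder $\psi_{-m_{\sigma(1)}-\half}\cdots\psi_{-m_{\sigma(k)}-\half}$ back into $\psi_{-m_1-\half}\cdots\psi_{-m_k-\half}$, picking up $\mathrm{sgn}(\sigma)$. Summing over $\sigma$ converts the matrix entries into the determinant $\det(a_{n_i,m_j})_{1\le i,j\le k}$.

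Finally, using the identity recalled in the excerpt,
\[
|\mu\rangle=(-1)^{n_1+\cdots+n_k}\psi_{-m_1-\half}\psi^*_{-n_1-\half}\cdots\psi_{-m_k-\half}\psi^*_{-n_k-\half}|0\rangle
\]
for $\mu=(m_1,\ldots,m_k|n_1,\ldots,n_k)$ in Frobenius notation, I would re-express the normal-ordered expression obtained above in the basis $\{|\mu\rangle\}$ of $\cF^{(0)}$ (this reintroduces the same sign $(-1)^{k(k-1)/2}$) to arrive at
\[
\frac{A^k}{k!}|0\rangle=\sum_{\mu\,:\,k\text{ diagonal boxes}}(-1)^{n_1+\cdots+n_k}\det(a_{n_i,m_j})\,|\mu\rangle.
\]
Summing over $k\geq 0$ reproduces precisely the expansion $|U\rangle=\sum_\mu\alpha_\mu|\mu\rangle$ derived from the wedge product in the excerpt, where $\alpha_{m_1,\ldots,m_k;n_1,\ldots,n_k}=(-1)^{n_1+\cdots+n_k}\det(a_{n_i,m_j})$. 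The main obstacle is purely sign bookkeeping: two competing signs—the global $(-1)^{k(k-1)/2}$ from pushing all $\psi^*$'s to the right, and the $(-1)^{n_1+\cdots+n_k}$ in the Frobenius expression of $|\mu\rangle$—must cancel cleanly against each other and against $\mathrm{sgn}(\sigma)$ to leave just the determinantal coefficient. Verifying this cancellation without confusion is the delicate step; the rest of the argument is formal.
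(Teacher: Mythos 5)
Your argument is correct, and it is essentially the intended one: the paper itself does not prove this theorem (it quotes it from Zhou's work), but the expansion of $e^{A}|0\rangle$ you carry out is exactly the computation the paper performs in \S\ref{sec-f-bogoliubov}, equation \eqref{eq-open-expansion-mu}, and the matching against the wedge-product coefficients $\alpha_{m_1,\cdots,m_k;n_1,\cdots,n_k}=(-1)^{n_1+\cdots+n_k}\det(a_{n_i,m_j})$ is the right comparison. The sign bookkeeping you flag does close up: the factor $(-1)^{k(k-1)/2}$ from separating the $\psi$'s from the $\psi^*$'s is exactly cancelled by the identical factor incurred when re-interleaving to reach the normal form $\psi_{-m_1-\half}\psi^*_{-n_1-\half}\cdots\psi_{-m_k-\half}\psi^*_{-n_k-\half}|0\rangle=(-1)^{n_1+\cdots+n_k}|\mu\rangle$, with $\mathrm{sgn}(\sigma)$ absorbed into the determinant, so only $(-1)^{n_1+\cdots+n_k}\det(a_{n_i,m_j})$ survives, as required.
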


The following property will be useful in practical computations.
Let $A$ be defined by \eqref{eq-def-bog},
then for every $r\in\bZ+\half$ one has:
\be
\label{eq-conjugate-1}
e^{-A} \psi_r e^A = \begin{cases}
\psi_r, & r<0;\\
\psi_r - \sum_{m\geq 0} a_{r-\half,m} \psi_{-m-\half},
& r>0,
\end{cases}
\ee
and:
\be
\label{eq-conjugate-2}
e^{-A} \psi_r^* e^A = \begin{cases}
\psi_r^* , & r<0;\\
\psi_r^* + \sum_{n\geq 0} a_{n,r-\half} \psi_{-n-\half}^*,
& r>0.
\end{cases}
\ee
This property can be checked by using:
\ben
e^{-A}\phi e^A =
\phi- [A,\phi] +\frac{1}{2!} [A,[A,\phi]]
-\frac{1}{3!} [A,[A,[A,\phi]]] +\cdots,
\een
where $\phi$ denotes either $\psi_r$ or $\psi_r^*$.
One can compute the right-hand side term by term,
and get:
\begin{equation*}
[A,\psi_r]=\sum_{m,n\geq 0}a_{n,m}\psi_{-m-\half}[\psi_r,\psi_{-n-\half}^*]_+,
\end{equation*}
and thus:
\ben
[A,\psi_r]= \begin{cases}
\sum_{m\geq 0} a_{r-\half,m} \psi_{-m-\half},
& r>0,\\
0, & r<0.
\end{cases}
\een
Then one easily sees that $[A,[A,\psi_r]]=0$ for every $r\in\bZ+\half$,
and thus \eqref{eq-conjugate-1} follows.
And \eqref{eq-conjugate-2} can be similarly proved.

Now let $|U\rangle\in\cF^{(0)}$ be a state in the fermionic Fock space
constructed from an element $U\in Gr_{(0)}$.
Then using the properties \eqref{eq-conjugate-1} and \eqref{eq-conjugate-2},
one can prove that the following Hirota bilinear relation holds:
\be
\label{eq-Hirota-fermion}
\sum_{r\in \bZ+\half}
\psi_r^* |U\rangle \otimes \psi_{-r} |U\rangle =0.
\ee
This is the Pl\"ucker relation on the Sato Grassmannian
(\cite[\S 10]{djm}).

\subsection{The boson-fermion correspondence}
\label{sec-b-f}

We recall the bosonic Fock space and boson-fermion correspondence
in this subsection. See \cite{djm} for details.

Define $\alpha_n$ to be the following operators on the fermionic Fock space $\cF$:
\ben
\alpha_n = \sum_{r\in \bZ+\half} :\psi_{-r} \psi_{r+n}^*:,
\qquad n\in \bZ,
\een
where $: \psi_{-r} \psi_{r+n}^* :$ is the normal-ordered product of fermions
defined by:
\be
:\phi_{r_1}\phi_{r_2}\cdots \phi_{r_n}:
=(-1)^\sigma \phi_{r_{\sigma(1)}}\phi_{r_{\sigma(2)}}\cdots\phi_{r_{\sigma(3)}},
\ee
where $\phi_k$ denotes either $\psi_k$ or $\psi_k^*$ for simplicity,
and $\sigma\in S_n$ is a permutation such that $r_{\sigma(1)}\leq\cdots\leq r_{\sigma(n)}$.
For $n=0$,
the operator $\alpha_0$ is the charge operator on $\cF$.

The operators $\{\alpha_n\}_{n\in \bZ}$ satisfy the following commutation relations:
\ben
[\alpha_m,\alpha_n]= m\delta_{m+n,0} \cdot \Id,
\een
i.e.,
they generate a Heisenberg algebra.
One can also check that:
\ben
[\alpha_n,\psi_r] = \psi_{n+r},
\qquad
[\alpha_n,\psi_r^*] = -\psi_{n+r}^*.
\een

The normal-ordered products for the bosons $\{\alpha_n\}_{n\in \bZ}$
are defined by:
\be
:\alpha_{n_1}\cdots\alpha_{n_k}:=
\alpha_{n_{\sigma(1)}}\cdots\alpha_{n_{\sigma(k)}},
\ee
where $\sigma\in S_k$ such that $n_{\sigma(1)}\leq\cdots \leq n_{\sigma(k)}$.
Then the above commutation relations and the anti-commutation relations \eqref{eq-anticomm-psi} are
equivalent to the following operator product expansions
(see eg. \cite{dms, ka} for some basics of OPEs):
\ben
&&\alpha(\xi)\alpha(\eta)= :\alpha(\xi)\alpha(\eta):+
\frac{1}{(\xi-\eta)^2},\\
&&\alpha(\xi)\psi(\eta)=:\psi(\xi)\psi^*(\xi)\psi(\eta):
+\frac{\psi(\xi)}{\xi-\eta},\\
&&\alpha(\xi)\psi^*(\eta)=:\psi(\xi)\psi^*(\xi)\psi^*(\eta):
-\frac{\psi^*(\xi)}{\xi-\eta},\\
&&\psi(\xi)\psi(\eta)= :\psi(\xi)\psi(\eta):,\\
&&\psi^*(\xi)\psi^*(\eta)= :\psi^*(\xi)\psi^*(\eta):,\\
&&\psi(\xi)\psi^*(\eta)= :\psi(\xi)\psi^*(\eta):+\frac{1}{\xi-\eta},
\een
where
\be
\label{eq-gen-fermi}
\psi(\xi)= \sum_{r\in\bZ+\half} \psi_r \xi^{-r-\half},
\qquad
\psi^*(\xi)= \sum_{r\in\bZ+\half} \psi_r^* \xi^{-r-\half},
\ee
and
\be
\alpha(\xi)= :\psi(\xi)\psi^*(\xi):=
\sum_{n\in\bZ} \alpha_n \xi^{-n-1},
\ee
are the generating series of the fermions and bosons respectively.

The bosonic Fock space $\cB$ is defined to be
$\cB:=\Lambda[w,w^{-1}]$,
where $\Lambda$ is the space of symmetric functions
(for a sequence of formal variables $\bm y=(y_1,y_2\cdots)$,
see \S \ref{sec-symm-fn}),
and $w$ is a formal variable.
In what follows,
we will always omit the variables of the symmetric functions.
The main property of $\Lambda$ that matters for our purpose is that
there are two bases $\{p_\lambda\}$ and $\{s_\lambda\}$ labeled by partitions
satisfying $\deg(p_\lambda)=\deg (s_\lambda) =|\lambda|$ and the relations \eqref{eq-newton-schur}.

The boson-fermion correspondence is a linear isomorphism $\Phi:\cF \to \cB$ of vector spaces,
given by:
\ben
\Phi:\quad
|\bm a\rangle \in \cF^{(m)}
\quad\mapsto\quad
w^m\cdot \langle m |
e^{\sum_{n=1}^\infty \frac{p_n}{n} \alpha_n}
| \bm a \rangle,
\een
where $p_n$ ($n\geq 1$) is the Newton symmetric function of degree $n$.
In particular,
by restricting to $\cF^{(0)}$ we get an isomorphism
(\cite[\S 5]{djm}):
\be
\label{eq-b-f-corresp}
\cF^{(0)}\to \Lambda,
\qquad
|\mu\rangle \mapsto s_\mu =
\langle 0 | e^{\sum_{n=1}^\infty \frac{p_n}{n} \alpha_n} | \mu \rangle,
\ee
where $s_\mu$ is the Schur function labeled by the partition $\mu$.

Using the above isomorphism,
one can rewrite the bosons $\alpha_n$ and fermions $\psi_r,\psi_r^*$ as operators on the bosonic Fock space.
The results are as follows:
\be
\label{eq-boson}
\Phi (\alpha_n | \bm a\rangle)
=\begin{cases}
n\frac{\pd}{\pd p_n} \Phi(|\bm a\rangle), & n>0;\\
p_{-n}\cdot \Phi(|\bm a\rangle), & n<0,
\end{cases}
\ee
and:
\begin{equation*}
\Phi(\psi(\xi)|\bm a\rangle)=
\Psi(\xi) \Phi(|\bm a\rangle),
\qquad\quad
\Phi(\psi^*(\xi)|\bm a\rangle)=
\Psi^*(\xi) \Phi(|\bm a\rangle),
\end{equation*}
where $\Psi(\xi),\Psi^*(\xi)$ are the vertex operators:
\ben
&&\Psi(\xi)=
\exp\bigg( \sum_{n=1}^\infty \frac{p_n}{n} \xi^n \bigg)
\exp\bigg( -\sum_{n=1}^\infty \xi^{-n}\frac{\pd}{\pd p_n} \bigg)
e^K \xi^{\alpha_0},\\
&&\Psi^*(\xi)=
\exp\bigg( -\sum_{n=1}^\infty \frac{p_n}{n} \xi^n \bigg)
\exp\bigg( \sum_{n=1}^\infty \xi^{-n}\frac{\pd}{\pd p_n} \bigg)
e^{-K} \xi^{-\alpha_0},
\een
and the actions of $e^K$ and $\xi^{\alpha_0}$ are defined by:
\begin{equation*}
(e^K f) (z,T)=z\cdot f(z,T),
\qquad\quad
(\xi^{\alpha_0} f)(z,T) = f(\xi z,T).
\end{equation*}

\subsection{Sato's construction of tau-functions}

In this subsection we recall Sato's theory \cite{sa} of the KP hierarchy.
See also Segal-Wilson \cite{sw} for an analytic construction.

In Sato's theory,
the space of all (formal power series) tau-functions is the same as
a semi-infinite Grassmannian $Gr$,
which is an orbit of the trivial solution $\tau=1$ under the action of
the infinite-dimensional Lie group $\widehat{GL}(\infty)$.
In particular, given an element $U\in Gr_{(0)}\subset Gr$,
the function
\ben
\tau_U (\bm x) := \langle 0| e^{\sum_{n\geq 1} x_n \alpha_n} |U\rangle
\een
constructed above is a (formal power series) tau-function of the KP hierarchy
with respect to time variables $\bm x=(x_1,x_2,\cdots)$.

Moreover,
the bilinear relation \eqref{eq-Hirota-fermion} is equivalent to the following
Hirota bilinear equation for the tau-function
(see \cite{djm}):
\be
\label{eq-KP-bilinear2}
\begin{split}
0= \oint\frac{dz}{2\pi i}
\exp\bigg(\sum_{j\geq 1}(x_j-y_j)z^j\bigg)&
\tau(x_1-\frac{1}{z},x_2-\frac{1}{2z^2},x_3-\frac{1}{3z^3} \cdots)\\
&\cdot\tau(y_1+\frac{1}{z},y_2+\frac{1}{2z^2},y_3+\frac{1}{3z^3} \cdots),
\end{split}
\ee
for every $\bm x=(x_1,x_2,\cdots)$ and $\bm y=(y_1,y_2,\cdots)$.
A function $\tau=\tau(\bm x)$
is a tau-function of the KP hierarchy if and only if it satisfies
the above identity.

Since $\cF^{(0)}$ has a basis $\{|\mu\rangle\}$ labeled by partitions,
there exist coefficients $c_\mu(U)$ such that
$|U\rangle = \sum_\mu c_\mu(U) \cdot |\mu\rangle$,
and then the tau-function can be represented as a summation of Schur functions:
\be
\label{eq-Schurexpansion}
\tau_U(\bm x)=\sum_\mu c_\mu(U) s_\mu,
\ee
where $x_n = \frac{p_n}{n}$.
The coefficients $c_\mu(U)$ can be computed using Theorem \ref{thm-coeff-Bogoliubov}.

The wave-function $w$ and dual wave-function $w^*$ associated to the tau-function $\tau_U$ are defined by:
\be
\label{eq-b-f-wave}
\begin{split}
&w(\bm x;z):= \langle -1|
e^{\sum_{n\geq 1} x_n\alpha_n}
\psi(z) |U\rangle /\tau_U,\\
&w^*(\bm x;z):= \langle 1|
e^{\sum_{n\geq 1} x_n\alpha_n}
\psi^*(z) |U\rangle /\tau_U,
\end{split}
\ee
respectively.
They are related to the tau-function by Sato's formulas:
\be
\label{eq-def-wave}
\begin{split}
&w(\bm x;z)
=\exp\bigg( \sum_{n\geq 1} x_n z^n \bigg)
\frac{\tau_U (\bm x -[z^{-1}])}{\tau_U (\bm x)},\\
&w^*(\bm x;z)
=\exp\bigg( -\sum_{n\geq 1} x_n z^n \bigg)
\frac{\tau_U (\bm x +[z^{-1}])}{\tau_U (\bm x)}.
\end{split}
\ee
where $[z^{-1}]$ denotes the sequence $(z^{-1},\frac{z^{-2}}{2},\frac{z^{-3}}{3},\cdots)$.
Then the Hirota bilinear equation \eqref{eq-KP-bilinear2} is equivalent to:
\ben
\Res_{z=\infty} w(\bm x;z) w^*(\bm x';z) =0,
\een
for arbitrary $\bm x$ and $\bm x'$.

The above construction can also be reversed.
If $\tau(\bm x)$ is a tau-function of the KP hierarchy,
then there is an element $U\in Gr_{(0)}$ in the big cell of the Sato Grassmannian
such that $\tau=\tau_U$.
The way to find $U$ is the following
(see \cite{am}):
\be
\label{eq-tau-to-Gr}
U= \text{span}\big\{z^{\half} w^* (0;z),
z^{\half} \pd_x w^* (0;z),
z^{\half} \pd_x^2 w^* (0;z),
\cdots
\big\}\subset H,
\ee
where $x:=x_1$ and $w^*(\bm x;z)$ is the dual wave function determined by \eqref{eq-def-wave}.

Notice that this result has been presented in different notations by different authors,
and in some literatures one uses the wave function $w(\bm x;z)$
instead of the dual wave function $w^* (\bm x;z)$.
Here we describe both results using wave and dual wave functions
in the notations we picked above
(for a proof, see eg. the computations in \cite[\S 4.5]{zhou1}):
\begin{Lemma}
\label{lem-tau-to-Gr}
Let $\{a_{n,m}\}_{n,m\geq 0}$ be the affine coordinates of $U\in Gr_{(0)}$,
and let $w(\bm x;z)$ and $w^*(\bm x;z)$ be the wave function and dual wave function
associated to the tau-function $\tau_U(\bm x)$ respectively.
Then we have:
\begin{equation*}
\begin{split}
& \text{span} \big\{
 w (0;z),
 \pd_x w (0;z),
 \pd_x^2 w (0;z),
\cdots
\big\}=
\text{span} \big\{
z^k-\sum_{n=0}^\infty a_{n,k} z^{-n-1}
\big\}_{k\geq 0},\\
& \text{span} \big\{
 w^* (0;z),
 \pd_x w^* (0;z),
 \pd_x^2 w^* (0;z),
\cdots
\big\}=
\text{span} \big\{
z^k+\sum_{n=0}^\infty a_{k,n} z^{-n-1}
\big\}_{k\geq 0}.
\end{split}
\end{equation*}
\end{Lemma}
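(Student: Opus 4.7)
The plan is to use the Bogoliubov representation $|U\rangle = e^A|0\rangle$ from Theorem \ref{thm-coeff-Bogoliubov} together with the conjugation identities \eqref{eq-conjugate-1} and \eqref{eq-conjugate-2} to expand $\psi(z)|U\rangle$ and $\psi^*(z)|U\rangle$ in closed form, then match the results against the fermionic definitions \eqref{eq-b-f-wave} of the wave and dual wave functions.

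First I would compute $\psi^*(z)|U\rangle = e^A(e^{-A}\psi^*(z)e^A)|0\rangle$. Applying \eqref{eq-conjugate-2} with the re-indexing $r=m+\half$, and using that $\psi^*_r|0\rangle=0$ for $r>0$ while $\psi^*(z)|0\rangle=\sum_{n\geq 0}z^n\psi^*_{-n-\half}|0\rangle$, the expansion collapses to
\begin{equation*}
\psi^*(z)|U\rangle = \sum_{n\geq 0}\Bigl[z^n + \sum_{m\geq 0} a_{n,m}z^{-m-1}\Bigr]\cdot e^A\psi^*_{-n-\half}|0\rangle.
\end{equation*}
Projecting by $\langle 1|e^{\sum_j x_j\alpha_j}$ and dividing by $\tau_U(\bm x)$ gives
\begin{equation*}
w^*(\bm x;z) = \sum_{n\geq 0} u^*_n(\bm x)\cdot \Bigl[z^n+\sum_{m\geq 0}a_{n,m}z^{-m-1}\Bigr],
\end{equation*}
where the coefficients $u^*_n(\bm x)$ are independent of $z$. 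In particular, every $\pd_x^k w^*(0;z)$ already lies in the $\bC$-span of $\{z^n+\sum_m a_{n,m}z^{-m-1}\}_{n\geq 0}$, giving one inclusion.

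For the reverse inclusion I would invoke Sato's formula \eqref{eq-def-wave}: since $\tau_U(\bm x+[z^{-1}])/\tau_U(\bm x)$ is, as a formal series in $z$, a power series in $z^{-1}$ with constant term $1$, the Leibniz rule applied to $w^*(\bm x;z) = e^{-\sum x_nz^n}\cdot\tau_U(\bm x+[z^{-1}])/\tau_U(\bm x)$ yields $\pd_x^k w^*(0;z) = (-z)^k+O(z^{k-1})$. Comparing this top-degree behavior against the decomposition above forces the coefficient of $[z^k+\sum_m a_{k,m}z^{-m-1}]$ in $\pd_x^k w^*(0;z)$ to equal $(-1)^k$, and the coefficients of $[z^n+\cdots]$ for $n>k$ to vanish. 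The resulting change-of-basis matrix is lower triangular with nonzero diagonal, hence invertible, so each target element $z^k+\sum_m a_{k,m}z^{-m-1}$ is a finite $\bC$-linear combination of $\{\pd_x^j w^*(0;z)\}_{j\leq k}$.

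The same strategy applied to $\psi(z)|U\rangle$, using \eqref{eq-conjugate-1} and $\psi(z)|0\rangle=\sum_{n\geq 0}z^n\psi_{-n-\half}|0\rangle$, produces
\begin{equation*}
\psi(z)|U\rangle = \sum_{n\geq 0}\Bigl[z^n - \sum_{j\geq 0}a_{j,n}z^{-j-1}\Bigr]\cdot e^A\psi_{-n-\half}|0\rangle,
\end{equation*}
and Sato's formula for $w$ now contains $e^{+\sum x_nz^n}$, so $\pd_x^k w(0;z) = z^k+O(z^{k-1})$; the identical triangular argument then delivers the first identity. The main point requiring care is this top-degree analysis: the whole argument hinges on $\pd_x^k w^*(0;z)$ and $\pd_x^k w(0;z)$ carrying monomials of exact $z$-degree $k$, which is what promotes the fermionic-side containment into an equality of spans. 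Everything else in the proof is mechanical, relying only on the anti-commutation relations \eqref{eq-anticomm-psi} and the explicit form of the Bogoliubov operator \eqref{eq-def-bog}.
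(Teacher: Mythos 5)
Your proof is correct, and it follows essentially the route the paper itself endorses: the paper gives no in-line proof but points to the computations in \cite[\S 4.5]{zhou1}, which are exactly your expansion of $\psi(z)|U\rangle$ and $\psi^*(z)|U\rangle$ via the conjugation formulas \eqref{eq-conjugate-1}--\eqref{eq-conjugate-2} applied to $|U\rangle=e^A|0\rangle$. Your leading-order/triangularity argument via Sato's formula \eqref{eq-def-wave} is a clean way to upgrade the containment to equality of spans (and it also takes care of the finiteness of the coefficient sums at each order of $\pd_x$), so nothing is missing.
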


\subsection{Zhou's formula for the connected $n$-point functions}

Let $F_U:=\log (\tau_U)$ be the free energy associated to
the tau-function $\tau_U$ where $U\in Gr_{(0)}$.
Define the (all-genera) connected $n$-point function of $F_U$ to be:
\be
G^{(n)}(z_1,\cdots,z_n)
:=\sum_{j_1,\cdots,j_n\geq 1}
\frac{\pd^n F_U (\bm x)}{\pd x_{j_1}\cdots \pd x_{j_n}}\bigg|_{\bm x=0}
\cdot z_1^{-j_1-1} \cdots z_n^{-j_n-1}.
\ee
In \cite[\S 5]{zhou1},
Zhou derived the following formula to compute these $n$-point functions:
\begin{Theorem}
[\cite{zhou1}]
Let $\{a_{n,m}\}_{n,m\geq 0}$ be the affine coordinates for $U$,
and denote:
\be
A(\xi,\eta):= \sum_{m,n\geq 0}
a_{n,m} \xi^{-n-1} \eta^{-m-1}.
\ee
Then:
\be
\label{eq-thm-npt}
G^{(n)}(z_1,\cdots,z_n)
=(-1)^{n-1} \sum_{n\text{-cycles }\sigma}
\prod_{i=1}^{n} \widehat A (z_{\sigma(i)}, z_{\sigma(i+1)})
-\frac{\delta_{n,2}}{(z_1-z_2)^2},
\ee
where $\sigma(n+1):=\sigma(1)$, and:
\be
\widehat A (z_i,z_j)=\begin{cases}
i_{z_i,z_j}\frac{1}{z_i-z_j} + A(z_i,z_j), & i<j;\\
A(z_i,z_i), & i=j;\\
i_{z_j,z_i}\frac{1}{z_i-z_j} + A(z_i,z_j), & i>j,
\end{cases}
\ee
and $i_{\xi,\eta}\frac{1}{\xi+\eta}:= \sum_{k\geq 0} (-1)^k \xi^{-1-k}\eta^k$.
\end{Theorem}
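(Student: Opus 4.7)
The plan is to work in the fermionic picture and follow the strategy used by Zhou in \cite{zhou1}: translate the $n$-point function of $\log\tau_U$ into a connected vacuum expectation value of bosonic currents, convert the bosons to fermion bilinears, use the Bogoliubov representation $|U\rangle = e^A|0\rangle$ from Theorem \ref{thm-coeff-Bogoliubov} to dress the fermions, and then apply Wick's theorem.

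First, I would express the $n$-point function as a residue/expansion of a correlator of bosonic currents. Writing $\alpha(z) = \sum_{n\in\bZ}\alpha_n z^{-n-1}$ and recalling from \eqref{eq-boson} that $\alpha_n$ (for $n\geq 1$) corresponds to $n\,\partial/\partial p_n = \partial/\partial x_n$, one gets (up to a correction from the disconnected piece with the central term)
\begin{equation*}
G^{(n)}(z_1,\ldots,z_n) = \frac{\langle 0|\alpha(z_1)_-\cdots\alpha(z_n)_-|U\rangle}{\langle 0|U\rangle}\bigg|_{\text{connected}} - \frac{\delta_{n,2}}{(z_1-z_2)^2},
\end{equation*}
where $\alpha(z)_-$ is the projection onto positive Fourier modes and the $\delta_{n,2}$ term isolates the contribution of the Heisenberg two-point function $[\alpha_m,\alpha_{-m}]=m$.

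Second, I would substitute $\alpha(z) = :\psi(z)\psi^*(z):$ using \eqref{eq-gen-fermi}, so that each bosonic insertion becomes a pair of fermions at coincident points. Moving $e^A$ to the left by means of the conjugation formulas \eqref{eq-conjugate-1}-\eqref{eq-conjugate-2} from Theorem \ref{thm-coeff-Bogoliubov}, every fermionic annihilator becomes dressed by the generating series $A(\xi,\eta) = \sum_{n,m\geq 0}a_{n,m}\xi^{-n-1}\eta^{-m-1}$. Applying Wick's theorem to the resulting expression produces sums over pairings of $\psi$'s with $\psi^*$'s, where each contracted pair contributes either a bare propagator $1/(z_i-z_j)$ (when the two fermions from different $\alpha$-insertions contract directly) or a dressed propagator involving $A$; these two cases combine precisely into $\widehat{A}(z_i,z_j)$ as defined in the statement, with the geometric-series expansion distinguishing $i<j$ and $i>j$.

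Third, I would isolate the connected part. The Wick pairings correspond to permutations of the $n$ insertion points (each $\alpha(z_i)$ contributes one $\psi$ and one $\psi^*$, linked by a cycle diagram to other insertions); the connected diagrams are exactly those pairings forming a single $n$-cycle $\sigma$, giving the product $\prod_i \widehat{A}(z_{\sigma(i)},z_{\sigma(i+1)})$. The sign $(-1)^{n-1}$ arises from the fermionic statistics when rearranging the $2n$ fermions into cyclic order around the cycle.

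The main obstacle will be the careful sign bookkeeping in Wick's theorem and the handling of the normal-ordered products $:\psi(z_i)\psi^*(z_i):$ at coincident points. In particular, one must verify that the spurious short-distance singularities produced by contractions internal to a single $:\psi(z_i)\psi^*(z_i):$ are absorbed into the normal ordering, and that the disconnected Heisenberg contribution responsible for $-\delta_{n,2}/(z_1-z_2)^2$ is correctly subtracted after the logarithm is taken in $F_U = \log\tau_U$. Once the combinatorics of cycles is matched to the fermionic graph expansion, the formula \eqref{eq-thm-npt} follows.
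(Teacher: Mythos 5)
The paper does not prove this theorem itself --- it is quoted from Zhou \cite{zhou1} (\S 5), and your outline is a faithful reconstruction of exactly the argument given there: bosonic derivatives become insertions of the positive-mode part of $\alpha(z)=:\psi(z)\psi^*(z):$, the Bogoliubov dressing via \eqref{eq-conjugate-1}--\eqref{eq-conjugate-2} turns each Wick contraction into $\widehat{A}(z_i,z_j)$, connected pieces are the single $n$-cycles with sign $(-1)^{n-1}$, and the $-\delta_{n,2}/(z_1-z_2)^2$ subtraction removes the pure-propagator $2$-cycle coming from the Heisenberg central term. Your sketch is correct in approach and matches the cited source; the only work remaining is the sign and normal-ordering bookkeeping you already flag.
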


\begin{Remark}
In the framework of emergent geometry developed by Zhou,
it would be better to multiply the $n$-point functions $G^{(n)}(z_1,\cdots,z_n)$ by $dz_1\cdots dz_n$
and then understand them as some multi-linear differentials on the spectral curve
associated to the tau-function $\tau_U$,
where $z$ is a local coordinate on this spectral curve.
See \cite{zhou1} for the emergence of the Airy curve from the Witten-Kontsevich tau-function $\tau^{\text{WK}}$,
and see \cite{zhou8, zhou9} for the emergence of the spectral curves for
the Hermitian one-matrix models and the dessin d'enfants respectively.

From this point of view,
the additional term $\frac{1}{(z_1-z_2)^2}$ in the formula \eqref{eq-thm-npt} for $n=2$
becomes the fundamental bidifferential $\frac{dz_1 dz_2}{(z_1-z_2)^2}$
(which is called the Bergman kernel in the Eynard-Orantin topological recursion \cite{eo}).
One may naturally expect Zhou's formula may be related to the E-O topological recursion
in some interesting manner.
See \cite{zhou10, zhou8} for the emergence of the E-O topological recursion from the Virasoro constraints in
the cases of Hermitian one-matrix models with even couplings and the dessin d'enfants respectively.
\end{Remark}

\subsection{Affine coordinates of the Witten-Kontsevich tau-function}

The affine coordinates $\{a_{n,m}^{\text{WK}}\}_{n,m\geq 0}$ for the Witten-Kontsevich tau-function
$\tau^{\text{WK}}$ was computed
by Zhou in \cite{zhou4}.
See also \cite[\S 6.7]{zhou1}.
The result is:
\be
\label{eq-WK-coeff}
\begin{split}
&a_{3n,3m-1}^{\text{WK}}=a_{3n+2,3m-3}^{\text{WK}}=
\frac{(-1)^n}{36^{m+n}}\cdot \frac{(6m+1)!!}{(2m+2n)!}
\cdot \prod_{j=0}^{n-1} (m+j)\\
&\qquad\qquad
\cdot \prod_{j=1}^n (2m+2j-1) \cdot\big(
B_n(m)+ \frac{2^n \cdot (6n+1)!!}{(2n)!\cdot (6m+1)}
\big),\\
&a_{3n+1,3m-2}^{\text{WK}}=
\frac{(-1)^{n+1}}{36^{m+n}}\cdot \frac{(6m+1)!!}{(2m+2n)!}
\cdot \prod_{j=0}^{n-1} (m+j)\\
&\qquad\qquad
\cdot \prod_{j=1}^n (2m+2j-1) \cdot\big(
B_n(m)+\frac{2^n\cdot (6n+1)!!}{(2n)!\cdot(6m-1)}
\big),
\end{split}
\ee
where $B_n(m)$ are given by:
\be
B_n(m)=\frac{1}{6}\sum_{j=1}^n 108^j \cdot \frac{2^{n-j}\cdot(6n-6j+1)!!}{(2n-2j)!}
\cdot \frac{(m+n)!}{(m+n-j+1)!}.
\ee
And $a_{n,m}^{\text{WK}}=0$ if $n+m\not\equiv -1 (\text{mod }3)$.

In particular,
one has:
\be
\label{eq-WK-coeff-012}
\begin{split}
&a_{0,3m-1}^{\text{WK}} = \frac{1}{36^m}\cdot \frac{(6m+1)!!}{(2m)!}\cdot \frac{1}{6m+1},\\
&a_{1,3m-2}^{\text{WK}} = -\frac{1}{36^m}\cdot \frac{(6m+1)!!}{(2m)!}\cdot \frac{1}{6m-1},\\
&a_{2,3m-3}^{\text{WK}} = \frac{1}{36^m}\cdot \frac{(6m+1)!!}{(2m)!}\cdot \frac{1}{6m+1},
\end{split}
\ee
which are the coefficients of the Faber-Zagier series
(see \cite{ppz}).

Then one can use \eqref{eq-Schurexpansion} to express $\tau^{\text{WK}}$ in terms of Schur functions:
\ben
\tau^{\text{WK}}= \sum_{\mu} a_{\mu}^{\text{WK}}\cdot s_\mu,
\een
where for a partition $\mu=(m_1,\cdots,m_k|n_1,\cdots,n_k)$,
the coefficient is:
\ben
a_\mu^{\text{WK}}=(-1)^{n_1+\cdots+n_k}\cdot \det (a_{n_i,m_j}^{\text{WK}})_{1\leq i,j \leq k}.
\een
Since $a_{n,m}^{\text{WK}}=0$ unless $n+m\equiv -1(\text{mod } 3)$,
one has
$a_\mu^{\text{WK}}=0$ unless $|\mu| \equiv 0 (\text{mod }3)$.

\section{Open Intersection Numbers and the Affine Coordinates}
\label{sec-affinecoord}

In this section we first recall the construction of the open intersection numbers and
the extended partition function $\tau^o$,
and then introduce our main results about the affine coordinates of $\tau^o$ on the Sato Grassmannian.

\subsection{The open intersection numbers}

First we briefly recall some basic results about the open intersection numbers.
See \cite{pst}.

Let $\Delta\subset \bC$ be the unit open disk and $\overline\Delta$ be its closure,
and let $C$ be a closed Riemann surface.
A holomorphic embedding $\Delta\hookrightarrow C$ is called extendable if it extends
to a holomorphic embedding of an open neighborhood of $\overline\Delta$.
Two extendable embeddings are said to be disjoint if their images are disjoint in $C$.
Let $X$ be a connected open Riemann surface
obtained by removing some disjoint extendable disks on a connected closed Riemann surface.
Its boundary $\pd X$ is the union of the images of the unit circle $S^1 \subset \bC$
under the extendable embeddings.
Given such a surface $(X,\pd X)$,
its double $D(X,\pd X)$ is canonically constructed using Schwarz reflections through the boundary components.
The double genus of $(X, \pd X)$ is defined to be the usual genus of its double $D(X,\pd X)$.
Let $\cM_{g,k,l}$ be the moduli space of open Riemann surfaces of double genus $g$,
with $k$ distinct marked points on the boundary $\pd X$
and $l$ distinct marked points in the interior $X\backslash \pd X$.
It is a real orbifold of real dimension $3g-3+k+2l$,
and is nonempty only if the stability condition $3g-3+k+2l>0$ holds.

The open intersection numbers defined in \cite{pst} are:
\be
\langle \tau_{a_1}\tau_{a_2}\cdots\tau_{a_l} \sigma^k \rangle_g^o
:=\int_{\Mbar_{g,k,l}} \psi_1^{a_1} \psi_2^{a_2} \cdots \psi_l^{a_l},
\ee
where $\psi_i$ is the first Chern class of the cotangent line bundle at the $i$-th interior marked point,
and $\Mbar_{g,k,l}$ is supposed to be some suitable compactification of the moduli space $\cM_{g,k,l}$.
The well-definedness of this construction has been completed for the case $g=0$ in \cite{pst}.
Then the open potential $F^{o,\text{geom}}$ is defined to be the following generating series
of these open intersection numbers:
\ben
F^{o,\text{geom}} (u;s;t_0,t_1,\cdots):=
\sum_{g\geq 0} u^{g-1}\cdot\big\langle
\exp\big(s\sigma+\sum_{d\geq 0} t_d \tau_d\big)
\big\rangle_g^o,
\een
where $s$ and $t_0,t_1,t_2,\cdots$ are the coupling constants.
The authors of \cite{pst} conjectured that the potential $F^{o,\text{geom}}(u;s;t_0,t_1,\cdots)$
satisfies a family of differential equations called the open KdV equations.
In \cite{bu2}, Buryak proved that there exists a unique solution $F^o$ to the open KdV
satisfying a suitable initial condition,
thus the conjecture of \cite{pst} becomes $F^{o,\text{geom}}=F^o$.

\subsection{The extended partition function $\tau^o$ and Virasoro constraints}

In \cite{bu,bu2},
Buryak introduced an extended potential
\ben
F^{o,ext} (u;s_0,s_1,\cdots;t_0,t_1,\cdots)
\een
which is a solution the Burgers-KdV equations,
where $s_0:=s$.
The open potential $F^o$ is recovered from $F^{o,ext}$ by
setting $s_n=0$ for every $n\geq 1$.
Buryak suggested that this is the way to include the descendants on the boundary.
Moreover, let
\ben
F^c (u;t_0,t_1,\cdots):=
\sum_{g\geq 0} u^{2g-2}\cdot\big\langle
\exp\big(\sum_{d\geq 0} t_d \tau_d\big)
\big\rangle_g^c
\een
be the generating series of the intersection numbers
\ben
\langle \tau_{a_1}\tau_{a_2}\cdots\tau_{a_n} \rangle_g^c
:=\int_{\Mbar_{g,n}} \psi_1^{a_1} \psi_2^{a_2} \cdots \psi_n^{a_n},
\een
where $\Mbar_{g,n}$ is the Deligne-Mumford moduli space of stable curves \cite{dm,kn}.
In other words,
$F^c := \log \tau^{\text{WK}}$ where $\tau^{\text{WK}}$ is the Witten-Kontsevich tau-function.
Then Buryak proved that the extended partition function
\be
\tau^o := \exp (F^{o,ext}+F^c)
\ee
satisfies the following Virasoro constraints
(\cite[Theorem 1.2]{bu}):
\be
\cL_n^{ext} (\tau^o) =0,
\qquad n\geq -1,
\ee
where the Virasoro operators are:
\be
\label{eq-Virasoro}
\begin{split}
\cL_n^{ext}=&\delta_{n,-1}\frac{t_0^2}{2u^2}+\delta_{n,0}\frac{1}{16}+
\sum_{i\geq 0} \frac{(2i+2n+1)!!}{2^{n+1}\cdot(2i-1)!!}
(t_i-\delta_{i,1})\frac{\pd}{\pd t_{i+n}}\\
&+\frac{u^2}{2} \sum_{i=0}^{n-1} \frac{(2i+1)!! \cdot (2n-2i-1)!!}{2^{n+1}}
\frac{\pd^2}{\pd t_i \pd t_{n-i-1}}\\
&+\delta_{n,-1}\frac{s_0}{u}+\delta_{n,0}\frac{3}{4}
+\sum_{i\geq 0} \frac{(n+i+1)!}{i!}s_i\frac{\pd}{\pd s_{n+i}}
+\frac{3}{4}(n+1)!u \frac{\pd}{\pd s_{n-1}}.
\end{split}
\ee

In \cite{al2,al3},
Alexandrov derived the following matrix integral formula for the extended partition function $\tau^o$
(after taking $u=1$):
\be
\tau^o ([\Lambda])=
\cC^{-1} \cdot \det(\Lambda)\int
[d\Phi] \exp\bigg( -\text{Tr}
\bigg( \frac{\Phi^3}{3!} -\frac{\Lambda^2\Phi}{2} +\log \Phi\bigg)\bigg),
\ee
where the Miwa transform variables
\ben
T_k = \frac{1}{k} \text{Tr} \Lambda^{-k}
\een
are related to the coupling constants $t_k,s_k$ by:
\be
\label{eq-coordchange}
\begin{split}
&t_n=(2n+1)!!\cdot T_{2n+1},\\
&s_n= 2^{n+1}\cdot (n+1)!\cdot T_{2n+2},
\end{split}
\ee
for every $n\geq 0$.
He also proved that after the above change of variables,
the extended partition function $\tau^o$ is a tau-function of the KP hierarchy
with respect to the time variables $\bm T =(T_1,T_2,T_3,\cdots)$,
and thus it corresponds to a point on the Sato Grassmannian.

In the rest of this paper, we will always assume $u=1$.

\subsection{Affine coordinates of $\tau^o$ on the Sato Grassmannian}

Now we state our main result about the affine coordinates $\{a_{n,m}^o\}_{n,m\geq 0}$
of the tau-function $\tau^o$.
The proof of the following theorem will be given in \S \ref{sec-proof}.
\begin{Theorem}
\label{thm-mainthm}
We have:
\be
\label{eq-a0m-explicit}
a_{0,3q-1}^o=\big(\frac{3}{2}\big)^q \cdot (2q-1)!!\cdot \sum_{j=0}^q
\frac{(6j-1)!!}{54^j\cdot (2j)!\cdot (2j-1)!!}.
\ee
Here we use the convention $(-1)!!:=1$.
And for every $p\geq 0$ and $q\geq 1$,
\be
\label{eq-mainthm-relation}
\begin{split}
& a_{3p+1,3q-2}^o = a_{3p,3q-1}^{\text{WK}},\\
& a_{3p+2,3q-3}^o = a_{3p+1,3q-2}^{\text{WK}},\\
& a_{3p+3,3q-1}^o = a_{3p+2,3q}^{\text{WK}} - a_{0,3q-1}^o \cdot a_{3p+2,0}^{\text{WK}},\\
\end{split}
\ee
where $\{a_{n,m}^{\text{WK}}\}_{n,m\geq 0}$ are the affine coordinates of the Witten-Kontsevich tau-function
given by \eqref{eq-WK-coeff}.
Moreover,
we have $a_{n,m}^o=0$ for $n+m\not\equiv -1 (\text{mod }3)$.
\end{Theorem}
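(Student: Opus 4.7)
The plan is to follow the fermionic strategy of Zhou used in \cite{zhou3, zhou4}, adapted to the presence of both the interior couplings $t_i$ and boundary couplings $s_i$. The workflow is: translate the Virasoro constraints \eqref{eq-Virasoro} into statements about fermionic bilinears acting on $\tau^o = e^{A^o}|0\rangle$, extract recursions for $\{a_{n,m}^o\}$, and finally solve those recursions in closed form and compare them against the known Witten--Kontsevich coefficients \eqref{eq-WK-coeff}.

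Step 1 (bosonic to fermionic). Using the Miwa change of variables \eqref{eq-coordchange}, I rewrite each Virasoro operator $\cL_n^{ext}$ as a differential operator in the KP time variables $\bm T=(T_1,T_2,\ldots)$. Under the boson--fermion correspondence \eqref{eq-boson}, the operators $k\,\partial/\partial T_k$ and $T_k$ correspond to $\alpha_k$ and $\alpha_{-k}/k$ respectively, so $\cL_n^{ext}$ becomes a quadratic expression in the bosons $\{\alpha_m\}$, which via $\alpha_m=\sum_r{:}\psi_{-r}\psi_{r+m}^{*}{:}$ is a bilinear in the fermions. Both the odd time variables (coming from $t_i$) and the even time variables (coming from $s_i$) appear, so the fermionic form of $\cL_n^{ext}$ will mix all modes, not just the odd ones as in the pure KdV case.

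Step 2 (recursions). I then apply the fermionic version of $\cL_n^{ext}$ to $e^{A^o}|0\rangle$ and conjugate using \eqref{eq-conjugate-1}, \eqref{eq-conjugate-2}. The condition $\cL_n^{ext}(\tau^o)=0$ becomes a vanishing identity for a bilinear expression in the fermionic creators with coefficients in $\{a_{n,m}^o\}$. Matching coefficients of $\psi_{-l-1/2}\psi^{*}_{-m-1/2}|0\rangle$ and of $|0\rangle$ separately produces precisely the $(2n+3)$-step recursion and the linear constraint recorded in the second theorem of the introduction. This reduces the whole problem to a recursive combinatorial system.

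Step 3 (solution). Specialising the linear constraint to $n=-1,0,1,\ldots$ together with the $l=0$ specialisation of the recursion gives a triangular system determining the boundary row $\{a_{0,m}^o\}_{m\geq 0}$. A direct induction matches the closed form \eqref{eq-a0m-explicit}; this is where the Faber--Zagier-type coefficients $(6j-1)!!/(54^j(2j)!(2j-1)!!)$ enter, and the ratio between successive terms is fixed so that the recursion is consistent. For $p\geq 0$ I then read off the relations \eqref{eq-mainthm-relation} by comparing the fermionic recursion for $\tau^o$ with the recursion Zhou derived for $\tau^{\text{WK}}$ from the standard KdV Virasoro constraints: the interior-time part of $\cL_n^{ext}$ agrees with the KdV Virasoro generators, so the quantities $a_{3p+1,3q-2}^o$ and $a_{3p+2,3q-3}^o$ satisfy exactly the same recursions as $a_{3p,3q-1}^{\text{WK}}$ and $a_{3p+1,3q-2}^{\text{WK}}$; the last identity in \eqref{eq-mainthm-relation} absorbs the boundary correction through $a_{0,3q-1}^o$. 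The mod-$3$ vanishing follows from the grading by $\deg T_k = k$ together with the fact that $\cL_n^{ext}$ preserves the residue class mod $3$, mirroring the argument in the WK case.

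The main obstacle is Step 1: the $s_i$-derivative terms in \eqref{eq-Virasoro} contribute non-trivially to the fermionic bilinear, and the combinatorial factor $(2i+2n+1)!!/(2^{n+1}(2i-1)!!)$ together with the boundary term $\tfrac{3}{4}(n+1)!\,\partial/\partial s_{n-1}$ must be carefully repackaged after the substitution \eqref{eq-coordchange} so that the two-index sum in the recursion comes out with the clean shape $\sum_k a_{k,m}^o a_{l,2n+2-k}^o$. Once that bookkeeping is done, Steps 2--3 are essentially forced by the structure already established for $\tau^{\text{WK}}$.
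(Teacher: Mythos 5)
Your Steps 1 and 2 match the paper: the Virasoro operators are rewritten as fermionic bilinears, conjugated by $e^{A^o}$ via \eqref{eq-conjugate-1}--\eqref{eq-conjugate-2}, and the coefficients of $|0\rangle$ and of $\psi_{-m-\half}\psi_{-l-\half}^*|0\rangle$ yield exactly the $(2n+3)$-step recursions and linear constraints of Theorems \ref{thm-coeff-rec-1} and \ref{thm-coeff-rec-2}. The gap is in Step 3: that system is \emph{not} triangular and does not determine the $a_{n,m}^o$. Since $e^{-A^o}\cL_n^{ext}e^{A^o}|0\rangle$ contains only a constant term and a term quadratic in creators, Theorems \ref{thm-coeff-rec-1} and \ref{thm-coeff-rec-2} exhaust the content of the Virasoro constraints. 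All the recursions express \emph{differences} of entries on a fixed anti-diagonal $n+m=S$ in terms of lower anti-diagonals, so each anti-diagonal is determined only up to one additive constant; the only absolute normalizations come from the linear constraints, which fix the sum on anti-diagonals $S=2n+2$, i.e.\ only even $S$. The nonzero anti-diagonals are $S\equiv 2 \pmod 3$, and those with $S\equiv 5\pmod 6$ (namely $S=5,11,17,\dots$) receive no linear constraint --- the constraint with $2n-1=S$ only ties them to the anti-diagonal $S+3$, leaving one free parameter per such $S$. So your ``triangular system determining the boundary row $\{a_{0,m}^o\}$'' is underdetermined, and the closed form \eqref{eq-a0m-explicit} cannot be extracted from the Virasoro data alone. (This is consistent with Buryak's result that the open Virasoro constraints determine $F^o$ only after a separate initial condition is imposed.)

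The paper supplies the missing normalization \emph{before} running the recursion: it computes the row $\{a_{1,m}^o\}_{m\geq 0}$ (one entry on each nonzero anti-diagonal) independently of the Virasoro constraints, by taking Alexandrov's integral formula for the second admissible basis vector $\Phi_2^o(z)$ of $U^o$, evaluating its asymptotic expansion by Gaussian integration, and observing that $z^{\half}\Phi_2^o$ is already normalized; this gives $a_{1,3q-2}^o=\frac{1}{36^q}\frac{(6q+1)!!}{(2q)!}\frac{1}{6q+1}=a_{0,3q-1}^{\text{WK}}$. With this initial data, only the one-step recursion \eqref{eq-1step-rec} from $\cL_{-1}^{ext}$ is actually needed: setting $p=0$ and $p=1$ gives $\{a_{0,m}^o\}$ (proved by induction to equal \eqref{eq-a0m-explicit}) and $\{a_{2,m}^o\}$, and the general relations \eqref{eq-mainthm-relation} follow by induction on $p$, playing \eqref{eq-1step-rec} against the analogous string-equation recursion \eqref{eq-rec-WK} for $\{a_{n,m}^{\text{WK}}\}$. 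Your sketch of this comparison is in the right spirit, but without the external computation of $\{a_{1,m}^o\}$ the induction has no correct base case and the argument does not close.
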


\begin{Remark}
The numbers $a_{0,3q-1}^o$ coincide with the numbers $d_q$ given by \cite[(1.9)]{bu}.
\end{Remark}

Now denote by $U^o\in Gr_{(0)}$ the corresponding element in the big cell of the Sato Grassmannian,
then $U^o\subset H$ is spanned by the following normalized basis:
\be
\big\{f_n^o := z^{n+\half} +\sum_{m\geq 0}a_{n,m}^o z^{-m-\half}\big\}_{n\geq 0}.
\ee

\begin{Example}
\label{eg-open-anm}
Here are some examples of the coefficients:
\begin{flalign*}
\begin{split}
f_0^o=& z^{\half}
+\frac{41}{24}z^{-\frac{5}{2}}+\frac{9241}{1152}z^{-\frac{11}{2}}+\frac{5075225}{82944}z^{-\frac{17}{2}}
+\frac{5153008945}{7962624}z^{-\frac{23}{2}}\\
&+\frac{1674966309205}{191102976}z^{-\frac{29}{2}}+
\frac{3985569631633205}{27518828544}z^{-\frac{35}{2}}+\cdots,
\end{split}&&
\end{flalign*}
\begin{flalign*}
\begin{split}
f_1^o=& z^{\frac{3}{2}}
+\frac{ 5}{24}z^{-\frac{3}{2}}+\frac{ 385}{1152}z^{-\frac{9}{2}}+\frac{ 85085}{82944}z^{-\frac{15}{2}}
+\frac{ 37182145}{7962624}z^{-\frac{21}{2}}\\
&+\frac{5391411025}{191102976}z^{-\frac{27}{2}}
+\frac{ 5849680962125}{27518828544}z^{-\frac{33}{2}}
+\cdots,
\end{split}&&
\end{flalign*}
\begin{flalign*}
\begin{split}
f_2^o=& z^{\frac{5}{2}}
-\frac{7}{24}z^{-\frac{1}{2}}-\frac{ 455}{1152}z^{-\frac{7}{2}}-\frac{ 95095}{82944}z^{-\frac{13}{2}}
-\frac{ 40415375}{7962624}z^{-\frac{19}{2}}\\
&-\frac{ 5763232475}{191102976}z^{-\frac{25}{2}}
-\frac{ 6183948445675}{27518828544}z^{-\frac{31}{2}}
-\cdots,
\end{split}&&
\end{flalign*}
\begin{flalign*}
\begin{split}
f_3^o=& z^{\frac{7}{2}}
-\frac{25}{1152}z^{-\frac{5}{2}}-\frac{26765}{41472}z^{-\frac{11}{2}}-\frac{21440785}{2654208}z^{-\frac{17}{2}}-\frac{
5093408425}{47775744}z^{-\frac{23}{2}}
-\cdots,
\end{split}&&
\end{flalign*}
\begin{flalign*}
\begin{split}
f_4^o=& z^{\frac{9}{2}}
-\frac{385}{1152}z^{-\frac{3}{2}}-\frac{43505}{41472}z^{-\frac{9}{2}}-\frac{12677665}{2654208}z^{-\frac{15}{2}}-\frac{
1375739365}{47775744}z^{-\frac{21}{2}}
-\cdots,
\end{split}&&
\end{flalign*}
\begin{flalign*}
\begin{split}
f_5^o=& z^{\frac{11}{2}}
+\frac{455}{1152}z^{-\frac{1}{2}}+\frac{ 45955}{41472}z^{-\frac{7}{2}}
+\frac{ 13028015}{2654208}z^{-\frac{13}{2}}+\frac{ 1398371975}{47775744}z^{-\frac{19}{2}}
+\cdots,
\end{split}&&
\end{flalign*}
\begin{flalign*}
\begin{split}
f_6^o=& z^{\frac{13}{2}} -
\frac{39655}{82944}z^{-\frac{5}{2}}-\frac{ 5562095}{2654208}z^{-\frac{11}{2}}-\frac{ 265839035}{31850496}z^{-\frac{17}{2}}
-\cdots,
\end{split}&&
\end{flalign*}
\begin{flalign*}
\begin{split}
f_7^o=& z^{\frac{15}{2}} +
\frac{85085}{82944}z^{-\frac{3}{2}}+\frac{ 12677665}{2654208}z^{-\frac{9}{2}}
+\frac{ 919343425}{31850496}z^{-\frac{15}{2}}
+\cdots,
\end{split}&&
\end{flalign*}
\begin{flalign*}
\begin{split}
&f_8^o= z^{\frac{17}{2}}
-\frac{95095}{82944}z^{-\frac{1}{2}}-\frac{13028015}{2654208}z^{-\frac{7}{2}}
-\frac{929363435}{31850496}z^{-\frac{13}{2}}
-\cdots.
\end{split}&&
\end{flalign*}
\end{Example}

Now denote by $U^{\text{WK}}\in Gr_{(0)}$ the point on the big cell of the Sato Grassmannian
corresponding to the Witten-Kontsevich tau-function $\tau^{\text{WK}}$,
and denote by
\be
\big\{
f_n^{\text{WK}} := z^{n+\half} + \sum_{m\geq 0} a_{n,m}^{\text{WK}} z^{-m-\half}
\big\}_{n\geq 0}
\ee
the normalized basis of $U^{\text{WK}}$.
Then by \eqref{eq-mainthm-relation} we easily see:
\be
\label{eq-relation-WK}
f_n^o = \begin{cases}
z\cdot f_{n-1}^{\text{WK}},
& \text{if $n\not\equiv 0 (\text{mod }3)$};\\
z\cdot f_{n-1}^{\text{WK}}- a_{n-1,0}^{\text{WK}} \cdot f_0^o,
& \text{if $n\equiv 0 (\text{mod }3)$},\\
\end{cases}
\ee
for every $n\geq 1$,
and thus:
\begin{Corollary}
We have:
\be
\label{eq-relation-subsp}
U^o = z\cdot U^{ \text{WK} } \oplus \bC\cdot f_0^o.
\ee
where $f_0^o$ is given by:
\be
\label{eq-f0-explicit}
f_0^o =\sum_{m\geq 0}
\big(\frac{3}{2}\big)^q \cdot (2q-1)!!\cdot \sum_{j=0}^q
\frac{(6j-1)!!}{54^j\cdot (2j)!\cdot (2j-1)!!}
\cdot z^{-3m+\half}.
\ee
\end{Corollary}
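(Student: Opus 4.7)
The plan is to derive the Corollary as a direct corollary of Theorem \ref{thm-mainthm}, passing through the basis-level identities \eqref{eq-relation-WK} which the statement attributes to the same theorem. I would organize the argument in three steps: first check \eqref{eq-relation-WK}, then deduce the explicit form \eqref{eq-f0-explicit} of $f_0^o$, and finally upgrade the basis identities to the subspace identity \eqref{eq-relation-subsp} using the Sato Grassmannian structure.

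First I would verify the basis identity \eqref{eq-relation-WK}. Both sides are compared term by term in powers $z^{k-\half}$. The leading $z^{n+\half}$ part is immediate. For the $z^{\half}$ part, the product $z\cdot f_{n-1}^{\text{WK}}$ contributes $a_{n-1,0}^{\text{WK}}\cdot z^{\half}$, and since $a_{n-1,0}^{\text{WK}}=0$ unless $n-1\equiv 2\pmod 3$, i.e.\ $n\equiv 0\pmod 3$, this is exactly why the correction term $-a_{n-1,0}^{\text{WK}}\cdot f_0^o$ is required precisely in that congruence class (using that $\pi_+(f_0^o)=z^{\half}$). The remaining negative-power coefficients come out to the assertion $a_{n,m}^o=a_{n-1,m+1}^{\text{WK}}-\varepsilon_n a_{n-1,0}^{\text{WK}}a_{0,m}^o$, which is exactly \eqref{eq-mainthm-relation} after separating cases by $n\bmod 3$ and relabeling $(p,q)$; the vanishing of both $a^o$ and $a^{\text{WK}}$ outside the residue class $-1\pmod 3$ handles all other indices automatically.

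Second, \eqref{eq-f0-explicit} is a mechanical rewriting of \eqref{eq-a0m-explicit}. By definition $f_0^o=z^{\half}+\sum_{m\geq 0}a_{0,m}^o z^{-m-\half}$; since $a_{0,m}^o$ vanishes unless $m\equiv 2\pmod 3$, only $m=3q-1$ with $q\geq 1$ survive, and substituting \eqref{eq-a0m-explicit} gives the stated $3q$-indexed series for $q\geq 1$. The $q=0$ term in \eqref{eq-f0-explicit} simply reproduces the leading $z^{\half}$ under the convention $(-1)!!=1$, matching the value $a_{0,3q-1}^o\big|_{q=0}=1$ coming from the innermost sum.

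Finally, to conclude \eqref{eq-relation-subsp} I would invoke the big-cell property. The identity \eqref{eq-relation-WK} shows $z\cdot f_{n-1}^{\text{WK}}\in U^o$ for every $n\geq 1$, hence $z\cdot U^{\text{WK}}+\bC\cdot f_0^o\subseteq U^o$. For directness of the sum, $\pi_+(z\cdot U^{\text{WK}})$ is spanned by elements $z^{n+\frac{3}{2}}+a_{n,0}^{\text{WK}}z^{\half}$ with $n\geq 0$, so it does not contain the single vector $z^{\half}=\pi_+(f_0^o)$; if $zu+cf_0^o=0$ with $u\in U^{\text{WK}}$, applying $\pi_+$ forces $u\in H_-$, hence $u=0$ by the isomorphism $\pi_+:U^{\text{WK}}\to H_+$, and then $c=0$. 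The same computation shows that $\pi_+$ restricted to the direct sum surjects onto $H_+$, so $z\cdot U^{\text{WK}}\oplus\bC\cdot f_0^o$ lies in $Gr_{(0)}$. Since both it and $U^o$ map isomorphically to $H_+$ under $\pi_+$ and one contains the other, they must coincide. There is no real obstacle here: all the substance is packed into Theorem \ref{thm-mainthm}, and what remains is purely bookkeeping with residues mod $3$ and the projection $\pi_+$.
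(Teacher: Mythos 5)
Your proposal is correct and follows the same route as the paper: the paper likewise derives the basis-level identity \eqref{eq-relation-WK} directly from the coordinate relations \eqref{eq-mainthm-relation} of Theorem \ref{thm-mainthm} and then reads off the decomposition \eqref{eq-relation-subsp}, leaving the spanning/directness bookkeeping implicit. Your explicit verification via $\pi_+$ (directness of the sum and equality with $U^o$ from the big-cell property) simply fills in details the paper omits.
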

\begin{Remark}
The affine coordinates $\{a_{n,m}^{\text{WK}}\}$
for the Witten-Kontsevich tau-function satisfy the following symmetry
(see Zhou \cite[(65)]{zhou4}):
\be
\label{eq-symm-WK}
a_{n,m}^{\text{WK}} = (-1)^{n+m}\cdot a_{m,n}^{\text{WK}},
\ee
Using this symmetry and \eqref{eq-WK-coeff-012},
the relation \eqref{eq-relation-WK} for $n\equiv 0 (\text{mod }3)$ becomes:
\be
f_{3p+3}^o =
z\cdot f_{3p+2}^{\text{WK}} - \frac{(-1)^p}{36^{p+1}}\cdot \frac{(6p+5)!!}{(2p+2)!}\cdot f_0^o.
\ee
\end{Remark}

\begin{Remark}
The relation \eqref{eq-relation-subsp} can also be derived using
Alexandrov's construction of the admissible basis \cite[(3.16)]{al3},
while finding the explicit formulas for the coefficients of $f_n^o$ needs more computations.
\end{Remark}

\subsection{Recursions and symmetries for the affine coordinates}
\label{sec-recursion}

The affine coordinates $\{a_{n,m}^o\}_{n,m\geq 0}$ satisfy some recursion relations and symmetry conditions.
Denote $a_{l,m}^o :=0$ for $l<0$ or $m<0$,
then the following two theorems are straightforward corollaries of the Virasoro constraints
(see \S \ref{sec-Virasoro-ferm2} for proofs):
\begin{Theorem}
\label{thm-coeff-rec-1}
For every $n\geq -1$,
we have the following $(2n+3)$-step recursion for the affine coordinates:
\be
\begin{split}
a_{l,2n+3+m}^o - &a_{2n+3+l,m}^o
=(2n+\frac{5}{2}+m) a_{l,2n+m}^o
+(l-\frac{3}{2}) a_{2n+l,m}^o\\
&\qquad
+\sum_{k=0}^{2n+2} a_{k,m}^o a_{l,2n+2-k}^o
-\sum_{k=0}^{2n-1} (2n+\frac{3}{2}-k) a_{k,m}^o a_{l,2n-1-k}^o.
\end{split}
\ee
In particular,
for $n=-1$ we have:
\be
\label{eq-1step-rec}
a_{l,m+1}^o- a_{l+1,m}^o=
(m+\half)a_{l,m-2}^o
+(l-\frac{3}{2})a_{l-2,m}^o
+a_{0,m}^o a_{l,0}^o,
\ee
which is the fermionic reformulation of the string equation $\cL_{-1}^{ext}(\tau^o)=0$.
\end{Theorem}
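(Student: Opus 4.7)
The plan is to translate the Virasoro constraint $\cL_n^{ext}(\tau^o) = 0$ into an equivalent fermionic operator equation on $|U^o\rangle = e^{A^o}|0\rangle$, then conjugate by $e^{A^o}$ and read off the $(2n+3)$-step recursion from the coefficient of the doubly-excited basis vector $\psi_{-m-\half}\psi_{-l-\half}^*|0\rangle$ in the resulting state. This is the same strategy used by Zhou in \cite{zhou3, zhou4} for the Witten--Kontsevich case; here it is adapted to the extended open setting.

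Step one uses the boson-fermion correspondence \eqref{eq-b-f-corresp} together with the coordinate change \eqref{eq-coordchange}, which identifies both $\{t_i\}$ and $\{s_i\}$ with rescaled power sums $T_k = p_k/k$. A direct bookkeeping shows that the awkward prefactors in \eqref{eq-Virasoro} cancel cleanly: the linear part of $\cL_n^{ext}$ becomes a standard Virasoro-type operator indexed by mode $2n$ (roughly $\sum_k k T_k \,\pd/\pd T_{k+2n}$), plus a shift term $-\pd/\pd T_{2n+3}$ from the $t_1 \to t_1 - \delta_{i,1}$ correction and an $s$-contribution proportional to $\pd/\pd T_{2n}$; the quadratic part becomes $\sum_{a+b = 2n} \pd^2/(\pd T_a \pd T_b)$; and the $\delta_{n,-1}$ and $\delta_{n,0}$ anomalies produce explicit scalars. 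Normal-ordering this bosonic expression via $\alpha_n = \sum_{r}:\psi_{-r}\psi_{r+n}^*:$ yields a bilinear fermionic operator $\widehat{\cL}_n^{ext}$ such that $\widehat{\cL}_n^{ext}|U^o\rangle = 0$.

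Step two conjugates by $e^{A^o}$. Each annihilator in $\widehat{\cL}_n^{ext}$ (that is, $\psi_r$ or $\psi_r^*$ with $r > 0$) is replaced via \eqref{eq-conjugate-1}--\eqref{eq-conjugate-2} by itself plus a creator series weighted by the $a_{n,m}^o$, while creators are unchanged; applying the conjugated operator to $|0\rangle$ annihilates all pure-annihilator pieces. The remaining state in $\cF^{(0)}$ is a combination of doubly-excited vectors $\psi_{-m-\half}\psi_{-l-\half}^*|0\rangle$ (and the vacuum), with coefficients polynomial of degree at most two in the $a_{n,m}^o$. Reading off the coefficient of $\psi_{-m-\half}\psi_{-l-\half}^*|0\rangle$ reproduces the recursion exactly: the linear shift part of $\widehat{\cL}_n^{ext}$ gives $a_{l,2n+3+m}^o - a_{2n+3+l,m}^o$, the dilation piece (arising from $\sum_r (r + \text{const}):\psi_{-r}\psi_{r+2n}^*:$) gives $(2n+\frac{5}{2}+m)a_{l,2n+m}^o + (l-\frac{3}{2})a_{2n+l,m}^o$, and the quadratic fermionic piece produces the two sums with index ranges $0 \leq k \leq 2n+2$ and $0 \leq k \leq 2n-1$ dictated by the supports of the two bilinear contributions.

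The main obstacle is Step one: since $\cL_n^{ext}$ couples $t_i$ and $s_i$ asymmetrically through the factors $(2i+2n+1)!!/(2i-1)!!$ and $(n+i+1)!/i!$, one must verify that all terms combine into a single fermionic Virasoro-type bilinear with precisely the coefficients $(2n+\frac{5}{2}+m)$, $(l-\frac{3}{2})$ and $(2n+\frac{3}{2}-k)$ demanded by the theorem, with the correct relative signs between the two quadratic sums. The $n = -1$ case \eqref{eq-1step-rec} provides a clean sanity check: the only quadratic contribution, coming from $k = 0$ in the single-term sum $\sum_{k=0}^{0} a^o_{k,m} a^o_{l,-k}$, gives the product $a^o_{0,m} a^o_{l,0}$ appearing there, and the resulting identity must match the fermionic form of the string equation.
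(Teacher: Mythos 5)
Your overall strategy coincides with the paper's: pass to the $T$-variables, write the constraint as a fermionic bilinear, conjugate by $e^{A^o}$ via \eqref{eq-conjugate-1}--\eqref{eq-conjugate-2}, and extract the recursion from the coefficient of $\psi_{-m-\half}\psi_{-l-\half}^*|0\rangle$. However, there is a concrete gap in your Step one. You assert that ``direct bookkeeping'' of the coordinate change turns the quadratic part of $\cL_n^{ext}$ into $\sum_{a+b=2n}\pd^2/(\pd T_a\pd T_b)$. It does not. The only second-order derivatives present in \eqref{eq-Virasoro} are $\pd^2/(\pd t_i\pd t_{n-i-1})$, which under \eqref{eq-coordchange} become $\pd^2/(\pd T_{2i+1}\pd T_{2n-2i-1})$ --- odd--odd pairs only; the entire $s$-sector of $\cL_n^{ext}$ is linear in derivatives. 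The even--even terms $\pd^2/(\pd T_{2i}\pd T_{2n-2i})$, which are indispensable for assembling the full bilinear $\half\sum_{a+b=2n}:\alpha_a\alpha_b:$ and hence for obtaining the quadratic sums over the full ranges $0\leq k\leq 2n+2$ and $0\leq k\leq 2n-1$ with coefficient $(2n+\frac{3}{2}-k)$ at \emph{every} residue of $k$, are not there by any change of variables.

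The missing ingredient is Buryak's identity $\pd_{T_2}^k\tau^o=\pd_{T_{2k}}\tau^o$ (from \cite[\S 5.2]{bu}), which holds when acting on $\tau^o$ but is not an operator identity. It lets one trade the linear term $\tfrac{3}{2}(n+1)\,\pd/\pd T_{2n}$ coming from $\tfrac{3}{4}(n+1)!\,u\,\pd/\pd s_{n-1}$ for
$\half\sum_{i=1}^{n-1}\pd^2/(\pd T_{2i}\pd T_{2n-2i})+(n+2)\,\pd/\pd T_{2n}$
(consistently, since $\tfrac{3}{2}(n+1)=\tfrac{n-1}{2}+(n+2)$), producing the modified operators $\widetilde\cL_n^{ext}$ of \eqref{eq-Virasoro-modify} that annihilate $\tau^o$. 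For $n\leq 1$ the extra sum is empty, so your $n=-1$ sanity check cannot detect the problem; for $n\geq 2$ your fermionic operator would be missing the even--even quadratic contributions and would carry the wrong coefficient on $\pd/\pd T_{2n}$, so the stated recursion would not come out. Once this substitution is inserted, the rest of your argument (normal ordering via \eqref{eq-gamma-fermi}, the identity $\sum_{a+b=n}:\alpha_a\alpha_b:=\sum_{r+s=n}(s-r):\psi_r\psi_s^*:$, conjugation, and coefficient extraction) proceeds exactly as in \S \ref{sec-Virasoro-ferm1}--\S \ref{sec-Virasoro-ferm2}.
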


\begin{Theorem}
\label{thm-coeff-rec-2}
For every $n\geq -1$,
we have the following linear constraints for the affine coordinates:
\be
\sum_{k=0}^{2n+2} a_{k,2n+2-k}^o =
\sum_{k=0}^{2n-1} (2n+\frac{3}{2}-k) a_{k,2n-1-k}^o.
\ee
\end{Theorem}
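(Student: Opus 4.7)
The plan is to derive Theorem \ref{thm-coeff-rec-2} (together with Theorem \ref{thm-coeff-rec-1}) as a consequence of the fermionic reformulation of the Virasoro constraints $\cL_n^{ext}(\tau^o)=0$. First I would translate each $\cL_n^{ext}$ from the bosonic differential operator \eqref{eq-Virasoro} into a quadratic fermionic operator $\hat{\cL}_n$ via the boson-fermion dictionary of \S \ref{sec-b-f}. Concretely, the substitution $T_k=p_k/k$ together with Alexandrov's rescaling \eqref{eq-coordchange} expresses each $t_i\partial/\partial t_j$ and $\partial^2/\partial t_i\partial t_j$ as a bilinear in $\{\alpha_k\}$, which in turn is a normal-ordered bilinear $\sum:\psi_{-r}\psi_{r+k}^*:$ in the fermions. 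The outcome is an operator identity $\hat{\cL}_n |U^o\rangle=0$ on the fermionic Fock space, with $\hat{\cL}_n$ a bilinear expression in $\{\psi_r,\psi_s^*\}$ plus a scalar anomaly arising from normal-ordering.

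Second, I would conjugate by the Bogoliubov operator. Writing $|U^o\rangle = e^{A^o}|0\rangle$ with $A^o=\sum a^o_{n,m}\psi_{-m-\half}\psi_{-n-\half}^*$ as in Theorem \ref{thm-coeff-Bogoliubov}, the identity becomes
\begin{equation*}
(e^{-A^o}\hat{\cL}_n e^{A^o})|0\rangle = 0.
\end{equation*}
Using the conjugation formulas \eqref{eq-conjugate-1}--\eqref{eq-conjugate-2} to rewrite each positive-mode fermion, the left-hand side becomes a sum of fermionic vectors with coefficients polynomial in $\{a^o_{n,m}\}$. Because the vectors $\psi_{-m-\half}\psi_{-n-\half}^*|0\rangle$ and $|0\rangle$ are linearly independent in $\cF^{(0)}$, each coefficient must vanish separately.

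Third, I would extract the two theorems by sorting equations according to the type of state. Coefficients of the two-particle states $\psi_{-m-\half}\psi_{-l-\half}^*|0\rangle$ produce the $(2n+3)$-step bilinear recursion of Theorem \ref{thm-coeff-rec-1}; for $n=-1$ this reduces to the fermionic form \eqref{eq-1step-rec} of the string equation. The coefficient of the vacuum vector $|0\rangle$ itself is a pure scalar and gives Theorem \ref{thm-coeff-rec-2}: the bilinears $:\psi_{-r}\psi_{r+2n+2}^*:$ produce, through Wick contraction when normal-ordered, a term of the shape $\sum_{k=0}^{2n+2} a^o_{k,2n+2-k}$ (trace of $A^o$ restricted to an anti-diagonal), while the lower-order Virasoro term $\sum:\psi_{-r}\psi_{r+2n-1}^*:$ contributes $\sum_{k=0}^{2n-1}(2n+\tfrac{3}{2}-k)\,a^o_{k,2n-1-k}$; the remaining scalar pieces from the anomalies in \eqref{eq-Virasoro} cancel, yielding the required identity.

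The main technical obstacle will be the bookkeeping of the normal-ordering anomaly together with the change of variables \eqref{eq-coordchange}. The open Virasoro operators mix $t_i$ and $s_i$, which under Alexandrov's rescaling correspond to odd and even KP times, so the resulting $\hat{\cL}_n$ has a more intricate structure than in the closed (Witten--Kontsevich) case treated in \cite{zhou4}. Once the correct form of $\hat{\cL}_n$ is established, Theorems \ref{thm-coeff-rec-1} and \ref{thm-coeff-rec-2} drop out in parallel as the two-particle and zero-particle projections of a single operator identity.
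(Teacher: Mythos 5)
Your overall architecture --- fermionize the Virasoro constraints, conjugate by $e^{A^o}$ using \eqref{eq-conjugate-1}--\eqref{eq-conjugate-2}, and read off Theorem \ref{thm-coeff-rec-2} from the coefficient of $|0\rangle$ while Theorem \ref{thm-coeff-rec-1} comes from the two-particle states --- is exactly the paper's. But there is a concrete gap in your middle step for $n\geq 1$: after the change of variables \eqref{eq-coordchange}, the operator $\cL_n^{ext}$ is \emph{not} a fermionic bilinear. Its second-order part $\frac{1}{2}\sum_{i=0}^{n-1}\pd^2/\pd T_{2i+1}\pd T_{2n-2i-1}$ runs only over the odd times, and it carries a bare first-order term $\frac{3}{2}(n+1)\,\pd/\pd T_{2n}$. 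A partial sum of products $\alpha_a\alpha_b$ is quartic in the fermions; only the complete combination $\half\sum_{a+b=2n}:\alpha_a\alpha_b:$ collapses to the quadratic $\sum_{r+s=2n}(s-r):\psi_r\psi_s^*:$. So the assertion that $\hat{\cL}_n$ is ``a bilinear expression in $\{\psi_r,\psi_s^*\}$ plus a scalar anomaly'' fails as stated, and with it the clean zero-particle/two-particle split of your third paragraph: conjugating a quartic operator also produces four-particle states, and the vacuum coefficient acquires extra contractions. The missing ingredient is Buryak's identity $\pd^k\tau^o/\pd T_2^k=\pd\tau^o/\pd T_{2k}$, which the paper uses to trade $\frac{3}{2}(n+1)\,\pd/\pd T_{2n}$ for $\half\sum_{i=1}^{n-1}\pd^2/\pd T_{2i}\pd T_{2n-2i}+(n+2)\,\pd/\pd T_{2n}$ \emph{when acting on $\tau^o$}, completing the second-order sum to the genuine Virasoro form \eqref{eq-Virasoro-modify}. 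This input is a property of $\tau^o$, not an operator identity, so it cannot come out of the boson-fermion dictionary alone. Once it is in place your plan goes through verbatim, and the coefficient $2n+\frac{3}{2}-k$ on the right-hand side arises as $(n+2)+\half(2n-1-2k)$, i.e.\ from $(n+2)\alpha_{2n}$ plus the annihilator-annihilator part of $\half\sum_{a+b=2n}:\alpha_a\alpha_b:$ (your operators $\alpha_{2n+2}$ and $\alpha_{2n-1}$ have the indices off by one; the relevant ones are $\alpha_{2n+3}$, $\alpha_{2n}$ and the quadratic part at level $2n$).

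A smaller caution: the claim that ``the remaining scalar pieces from the anomalies in \eqref{eq-Virasoro} cancel'' is not what happens. For $n\geq 1$ there is no constant term in \eqref{eq-Virasoro}, so there is nothing to cancel; for $n=-1$ the terms $t_0^2/2=\half\alpha_{-1}^2$ and $s_0=\alpha_{-2}$ are exactly the $a,b<0$ pieces needed to complete $\half\sum_{a+b=-2}:\alpha_a\alpha_b:$ and contribute nothing to the vacuum coefficient. At $n=0$, however, the constant $\delta_{n,0}(\tfrac{1}{16}+\tfrac{3}{4})$, rescaled by $2^{n+1}$, contributes $\tfrac{13}{8}$ to the coefficient of $|0\rangle$, and the resulting constraint is $a_{0,2}^o+a_{1,1}^o+a_{2,0}^o=\tfrac{13}{8}$ (indeed $\tfrac{41}{24}+\tfrac{5}{24}-\tfrac{7}{24}=\tfrac{13}{8}$), not a vanishing sum. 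So the anomaly must be kept and tracked at $n=0$ rather than cancelled.
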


Moreover,
one can apply the symmetry \eqref{eq-symm-WK} of $\{a_{n,m}^{\text{WK}}\}$
to the relations \eqref{eq-mainthm-relation},
and in this way we easily obtain:
\begin{Proposition}
The affine coordinates $\{a_{n,m}^{o}\}_{n,m\geq 0}$ satisfy:
\be
\begin{split}
& a_{3p+2,3q-3}^o = (-1)^{p+q+1} a_{3q-1,3p}^o,\\
& a_{3p+1,3q-2}^o = (-1)^{p+q+1} (a_{3q,3p-1}^o+ a_{0,3p-1}^o a_{3q-1,0}^{\text{WK}}),\\
& a_{3p+3,3q-1}^o = (-1)^{p+q}a_{3q+1,3p+1}^o -a_{0,3q-1}^o a_{3p+2,0}^{\text{WK}}.
\end{split}
\ee
\end{Proposition}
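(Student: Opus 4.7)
The plan is to derive each of the three identities as a direct corollary of Theorem~\ref{thm-mainthm} combined with the Witten--Kontsevich symmetry $a_{n,m}^{\text{WK}} = (-1)^{n+m} a_{m,n}^{\text{WK}}$ recalled in \eqref{eq-symm-WK}. The strategy is: start from one of the three relations in \eqref{eq-mainthm-relation}, transpose the indices of the WK coefficient that appears using the symmetry, and then re-apply another relation from \eqref{eq-mainthm-relation} \emph{in reverse} to convert the transposed WK coefficient back into an open coefficient. The only bookkeeping to be careful about is matching the residue classes mod~$3$ on the indices so that the correct line of \eqref{eq-mainthm-relation} applies, and tracking the sign $(-1)^{n+m}$.

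For the first identity, I would begin from $a_{3p+2,3q-3}^o = a_{3p+1,3q-2}^{\text{WK}}$, apply \eqref{eq-symm-WK} to obtain $(-1)^{3p+3q-1}a_{3q-2,3p+1}^{\text{WK}} = (-1)^{p+q+1}a_{3q-2,3p+1}^{\text{WK}}$, and then invoke the second line of \eqref{eq-mainthm-relation} with the substitution $(p,q) \mapsto (q-1, p+1)$ to recognise $a_{3q-2,3p+1}^{\text{WK}} = a_{3q-1,3p}^o$. For the second identity, I would start from $a_{3p+1,3q-2}^o = a_{3p,3q-1}^{\text{WK}}$, use \eqref{eq-symm-WK} to convert the right-hand side to $(-1)^{p+q+1} a_{3q-1,3p}^{\text{WK}}$, and then solve the third line of \eqref{eq-mainthm-relation} with $(p,q)\mapsto (q-1,p)$ for the WK coefficient, giving $a_{3q-1,3p}^{\text{WK}} = a_{3q,3p-1}^o + a_{0,3p-1}^o\, a_{3q-1,0}^{\text{WK}}$.

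For the third identity, I would start from $a_{3p+3,3q-1}^o = a_{3p+2,3q}^{\text{WK}} - a_{0,3q-1}^o\, a_{3p+2,0}^{\text{WK}}$, apply \eqref{eq-symm-WK} to the first WK term to get $(-1)^{p+q}a_{3q,3p+2}^{\text{WK}}$, and then use the first line of \eqref{eq-mainthm-relation} with $(p,q)\mapsto(q,p+1)$ to identify $a_{3q,3p+2}^{\text{WK}} = a_{3q+1,3p+1}^o$. Substituting back produces the stated relation.

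There is no real obstacle here: each identity reduces to a short substitution once the index substitutions $(p,q)\mapsto(q-1,p+1)$, $(q-1,p)$, and $(q,p+1)$ are identified. The only place one could go wrong is the parity computation $(-1)^{3p+3q-1} = (-1)^{p+q+1}$ and the analogous simplifications, and the fact that the relations \eqref{eq-mainthm-relation} are stated for $p\geq 0$, $q\geq 1$, so the reverse substitutions have to be checked to lie in the allowed range (with the convention $a_{n,m}^o = 0$ for negative indices absorbing the boundary cases).
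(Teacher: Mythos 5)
Your proposal is correct and is exactly the paper's argument: the Proposition is obtained by applying the symmetry \eqref{eq-symm-WK} to the three relations of Theorem \ref{thm-mainthm} and then recognizing the transposed WK coefficients again via \eqref{eq-mainthm-relation}, with precisely the index substitutions and parity simplifications you describe. One small caveat: for the second identity the reverse substitution forces $p\geq 1$, and at $p=0$ the zero-convention does \emph{not} absorb the boundary case (the left-hand side $a^o_{1,3q-2}$ is nonzero while the right-hand side vanishes), though this imprecision is already present in the Proposition as stated in the paper.
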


\section{Computations of $\tau^o$ and $\log(\tau^o)$ Using Affine Coordinates}
\label{sec-applications}

In this section,
we show how to compute the tau-function $\tau^o$ and the free energy $\log(\tau^o)$
using the affine coordinates $\{a_{n,m}^o\}$.

\subsection{A formula for $\tau^o$ as a Bogoliubov transform}
\label{sec-f-bogoliubov}

First we regard $\tau^o$ as a vector $|U^o\rangle$ in the fermionic Fock space.
By Theorem \ref{thm-coeff-Bogoliubov},
we know that:
\begin{Theorem}
Let $a_{n,m}^o$ be the affine coordinates given in Theorem \ref{thm-mainthm},
and $A^o$ be the quadratic operator of fermionic creators defined by:
\be
A^o = \sum_{n,m\geq 0} a_{n,m}^o \psi_{-m-\half} \psi_{-n-\half}^*.
\ee
Then the tau-function $\tau^o$ is given by the following Bogoliubov transform
of the fermionic vacuum $|0\rangle$:
\be
|U^o\rangle = e^{A^o} |0\rangle.
\ee
\end{Theorem}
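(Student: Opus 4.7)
My plan is to treat this theorem as a direct specialization of the general correspondence between points of the big cell $Gr_{(0)}$ and Bogoliubov transforms of the fermionic vacuum, namely Theorem \ref{thm-coeff-Bogoliubov}. The work has essentially already been done in Theorem \ref{thm-mainthm}; what remains is to verify that the hypotheses of Theorem \ref{thm-coeff-Bogoliubov} are met for $\tau^o$ and to invoke it.

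First I would confirm that $\tau^o$ corresponds to a well-defined element $U^o \in Gr_{(0)}$. By Alexandrov's results in \cite{al2, al3} recalled in \S\ref{sec-affinecoord}, after the change of variables \eqref{eq-coordchange} the extended partition function $\tau^o$ is a tau-function of the KP hierarchy in the time variables $\bm T = (T_1, T_2, \dots)$. Since $\tau^o|_{\bm T = 0} = 1$ (equivalently, $F^{o,ext} + F^c$ vanishes at the origin up to additive constants absorbed into normalization), the corresponding subspace in the Sato Grassmannian satisfies the transversality condition $\pi_+ : U^o \to H_+$ being an isomorphism, i.e.\ $U^o \in Gr_{(0)}$. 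Concretely, one can produce a normalized basis for $U^o$ via Lemma \ref{lem-tau-to-Gr}, starting from the dual wave function of $\tau^o$.

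Next, by definition, the affine coordinates of $U^o$ on $Gr_{(0)}$ are the coefficients appearing in the normalized basis $\{f_n^o = z^{n+\half} + \sum_{m\geq 0} a_{n,m}^o z^{-m-\half}\}_{n\geq 0}$. The numbers $\{a_{n,m}^o\}_{n,m\geq 0}$ described in Theorem \ref{thm-mainthm} are precisely these affine coordinates. Applying Theorem \ref{thm-coeff-Bogoliubov} to $U = U^o$ then yields
\begin{equation*}
|U^o\rangle = e^{A^o} |0\rangle,
\qquad
A^o = \sum_{n,m\geq 0} a_{n,m}^o\, \psi_{-m-\half}\psi_{-n-\half}^*,
\end{equation*}
which is the claimed identity.

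The only genuine content in the above argument is the identification of the affine coordinates, which is the statement of Theorem \ref{thm-mainthm}; the present theorem is a formal corollary of the general Bogoliubov representation recalled in \S\ref{sec-pre-affinecoord}. Thus the main obstacle has already been resolved elsewhere, and once Theorem \ref{thm-mainthm} is available, no further serious argument is required beyond the bookkeeping above. If one wishes to avoid even implicitly using Lemma \ref{lem-tau-to-Gr} to reconstruct $U^o$ from $\tau^o$, one can simply \emph{define} $U^o$ to be the subspace spanned by the $f_n^o$ arising from the $a_{n,m}^o$ of Theorem \ref{thm-mainthm}; then Theorem \ref{thm-coeff-Bogoliubov} gives $|U^o\rangle = e^{A^o}|0\rangle$, and the boson--fermion correspondence combined with uniqueness of the tau-function associated to a point of $Gr_{(0)}$ identifies this vector with $\tau^o$ via Theorem \ref{thm-mainthm}.
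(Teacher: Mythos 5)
Your proposal is correct and follows essentially the same route as the paper: the theorem is stated there as an immediate consequence of Theorem \ref{thm-coeff-Bogoliubov} applied to $U^o$ with the affine coordinates $\{a_{n,m}^o\}$ identified in Theorem \ref{thm-mainthm}. The extra care you take in verifying that $U^o$ lies in the big cell is reasonable bookkeeping but not something the paper spells out.
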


Expanding this exponential $e^{A^o}$,
one gets:
\be
\label{eq-open-expansion-mu}
\begin{split}
|U^o\rangle =& |0\rangle+
\sum_{\mu}
(-1)^{n_1+\cdots+n_k}\cdot\det(a_{n_i,m_j}^o)_{1\leq i,j\leq k} \cdot |\mu\rangle \\
=& |0\rangle + \sum_{\mu} \det(a_{n_i,m_j}^o)_{1\leq i,j\leq k} \cdot
\psi_{-m_1-\half} \psi_{-n_1-\half}^* \cdots \psi_{-m_k-\half} \psi_{-n_k-\half}^*
|0\rangle,
\end{split}
\ee
where the summation $\sum_\mu$ on the right-hand side is over all partitions of positive integers,
and $\mu=(m_1,\cdots,m_k|m_1,\cdots,n_k)$ is the Frobenius notation of a partition $\mu$.
For example,
the first a few terms are:
\begin{equation*}
\begin{split}
|U^o\rangle =& \bigg( 1+
\frac{41}{24}\psi_{-\frac{5}{2}}\psi_{-\frac{1}{2}}^*
+ \frac{5}{24} \psi_{-\frac{3}{2}}\psi_{-\frac{3}{2}}^*
 -\frac{7}{24} \psi_{-\frac{1}{2}}\psi_{-\frac{5}{2}}^* \\
&+\frac{9241}{1152} \psi_{-\frac{11}{2}}\psi_{-\frac{1}{2}}^*
 + \frac{385}{1152} \psi_{-\frac{9}{2}}\psi_{-\frac{3}{2}}^*
  - \frac{455}{1152} \psi_{-\frac{7}{2}}\psi_{-\frac{5}{2}}^*
-\frac{25}{1152} \psi_{-\frac{5}{2}}\psi_{-\frac{7}{2}}^*\\
& - \frac{385}{1152} \psi_{-\frac{3}{2}}\psi_{-\frac{9}{2}}^*
+ \frac{455}{1152} \psi_{-\frac{1}{2}}\psi_{-\frac{11}{2}}^*
-\frac{205}{576} \psi_{-\frac{5}{2}}\psi_{-\frac{3}{2}}^*\psi_{-\frac{3}{2}}\psi_{-\frac{1}{2}}^*\\
& +\frac{287}{576} \psi_{-\frac{5}{2}}\psi_{-\frac{5}{2}}^*\psi_{-\frac{1}{2}}\psi_{-\frac{1}{2}}^*
 +\frac{35}{576} \psi_{-\frac{3}{2}}\psi_{-\frac{5}{2}}^*\psi_{-\frac{1}{2}}\psi_{-\frac{3}{2}}^*
+\cdots \bigg) |0\rangle.
\end{split}
\end{equation*}

\subsection{A formula for $\tau^o$ as a summation of Schur functions}
\label{sec-b-schur}

Via the boson-fermion correspondence,
the tau-function $\tau^o$ can be regarded as a vector in the bosonic Fock space $\Lambda$,
and thus can be represented in terms of Schur functions.
The following is a straightforward consequence of \eqref{eq-open-expansion-mu}:
\begin{Theorem}
Let $T_n=\frac{p_n}{n}$ where $p_n$ are the Newton symmetric functions,
then the tau-function $\tau^o$ is the following summation of Schur functions:
\be
\tau^o (\bm T) = 1+
\sum_{\mu:\text{ } |\mu|>0}
(-1)^{n_1+\cdots+n_k}\cdot\det(a_{n_i,m_j}^o)_{1\leq i,j\leq k} \cdot s_\mu,
\ee
where $a_{n,m}^o$ are given in Theorem \ref{thm-mainthm},
and $\mu=(m_1,\cdots,m_k|m_1,\cdots,n_k)$ is the Frobenius notation of the partition $\mu$.
\end{Theorem}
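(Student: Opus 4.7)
The strategy is to leverage everything that has already been assembled in the paper: we have the Bogoliubov representation $|U^o\rangle = e^{A^o}|0\rangle$ with $A^o = \sum_{n,m\geq 0} a_{n,m}^o \psi_{-m-\tfrac{1}{2}}\psi_{-n-\tfrac{1}{2}}^*$, and we have the boson--fermion correspondence \eqref{eq-b-f-corresp} sending $|\mu\rangle \mapsto s_\mu$. So the proof reduces to two conceptually separate steps: first unpack the exponential $e^{A^o}$ into the basis $\{|\mu\rangle\}$ of $\cF^{(0)}$ with explicit determinantal coefficients (this is \eqref{eq-open-expansion-mu}), and second transport the identity to the bosonic side via $\Phi$.

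First I would establish the expansion \eqref{eq-open-expansion-mu}. The key observation is that the quadratic operators $\psi_{-m-\tfrac{1}{2}}\psi_{-n-\tfrac{1}{2}}^*$ commute pairwise: using the anticommutation relations \eqref{eq-anticomm-psi} together with the fact that $[\psi_{-m-\tfrac{1}{2}}, \psi_{-n'-\tfrac{1}{2}}^*]_+ = \delta_{-m-n'-1,0} = 0$ for $m,n'\geq 0$, one verifies that commuting two such pairs past each other picks up four sign flips, giving an overall $+1$. Moreover each such pair squares to zero since $\psi^2 = (\psi^*)^2 = 0$. Hence
\[
e^{A^o} = \prod_{n,m\geq 0}\bigl(1 + a_{n,m}^o \psi_{-m-\tfrac{1}{2}}\psi_{-n-\tfrac{1}{2}}^*\bigr),
\]
and applying this to $|0\rangle$ gives a sum over finite subsets of index pairs. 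For a surviving contribution the sets of $m$-indices and $n$-indices must each be distinct (otherwise two copies of some $\psi$ or $\psi^*$ appear), so such a subset of size $k$ can be encoded by a pair of strictly decreasing sequences $m_1 > \cdots > m_k \geq 0$ and $n_1 > \cdots > n_k \geq 0$, together with a bijection between them; summing over the bijections and tracking the signs produced by reordering the fermions into the canonical order $\psi_{-m_1-\tfrac{1}{2}}\psi_{-n_1-\tfrac{1}{2}}^*\cdots \psi_{-m_k-\tfrac{1}{2}}\psi_{-n_k-\tfrac{1}{2}}^*$ produces exactly the determinant $\det(a_{n_i,m_j}^o)_{1\leq i,j\leq k}$. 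Using the identity recalled in the preliminaries,
\[
|\mu\rangle = (-1)^{n_1+\cdots+n_k}\psi_{-m_1-\tfrac{1}{2}}\psi_{-n_1-\tfrac{1}{2}}^*\cdots \psi_{-m_k-\tfrac{1}{2}}\psi_{-n_k-\tfrac{1}{2}}^*|0\rangle,
\]
for $\mu = (m_1,\ldots,m_k|n_1,\ldots,n_k)$, this yields \eqref{eq-open-expansion-mu}.

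Second, I would apply the boson--fermion correspondence $\Phi: \cF^{(0)} \to \Lambda$. By \eqref{eq-b-f-corresp}, $\Phi(|\mu\rangle) = s_\mu$ with $s_\emptyset = 1$, and by Sato's construction $\tau^o(\bm T) = \Phi(|U^o\rangle)$ when $T_n = p_n/n$. Applying $\Phi$ term-by-term to the expansion of $|U^o\rangle$ immediately yields the desired Schur expansion
\[
\tau^o(\bm T) = 1 + \sum_{\mu:\,|\mu|>0} (-1)^{n_1+\cdots+n_k}\det(a_{n_i,m_j}^o)_{1\leq i,j\leq k}\,s_\mu.
\]

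The only delicate step is the combinatorial bookkeeping in the first part, namely checking that the signs coming from permuting fermions combine correctly with the Frobenius sign $(-1)^{n_1+\cdots+n_k}$ to produce $\det(a_{n_i,m_j}^o)$ rather than a signed permanent or some other variant. However, this is precisely the content of Theorem \ref{thm-coeff-Bogoliubov} as applied to $U^o$, so the computation is essentially already done and no new obstacle appears.
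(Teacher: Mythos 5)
Your proposal is correct and follows essentially the same route as the paper: the paper obtains the Schur expansion as an immediate consequence of \eqref{eq-open-expansion-mu} (itself the expansion of $e^{A^o}|0\rangle$ guaranteed by Theorem \ref{thm-coeff-Bogoliubov}) combined with the boson--fermion correspondence \eqref{eq-b-f-corresp}. The only difference is that you spell out the combinatorial verification of \eqref{eq-open-expansion-mu} (commutativity and nilpotency of the quadratic creator pairs, and the sign bookkeeping yielding the determinant), which the paper treats as already contained in Theorem \ref{thm-coeff-Bogoliubov} and the wedge-coefficient formula of \S\ref{sec-pre-affinecoord}.
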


Similar to the case of the Witten-Kontsevich tau-function,
since $a_{n,m}^o = 0$ for $n+m\not\equiv -1 (\text{mod }3)$
we must have:
\begin{Proposition}
Denote by $a_\mu^o$ the coefficient of $s_\mu$ in $\tau^o$,
then:
\ben
a_\mu^o=0,
\quad \text{if} \quad |\mu|\not\equiv 0 (\text{mod }3).
\een
\end{Proposition}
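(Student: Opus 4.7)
The plan is to prove the proposition by direct inspection of the determinantal formula for the coefficient $a_\mu^o$ of the Schur function $s_\mu$ in $\tau^o$, combined with the mod-$3$ vanishing property of the affine coordinates already established in Theorem \ref{thm-mainthm}.

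More precisely, fix a partition $\mu$ with Frobenius notation $\mu=(m_1,\dots,m_k\mid n_1,\dots,n_k)$. Recall that the size of $\mu$ is
\begin{equation*}
|\mu| = \sum_{i=1}^k (m_i + n_i + 1) = \sum_{i=1}^k m_i + \sum_{i=1}^k n_i + k.
\end{equation*}
I would start from the formula
\begin{equation*}
a_\mu^o = (-1)^{n_1+\cdots+n_k}\det\bigl(a_{n_i,m_j}^o\bigr)_{1\le i,j\le k}
\end{equation*}
given just above the statement, and expand the determinant via the Leibniz formula
\begin{equation*}
\det\bigl(a_{n_i,m_j}^o\bigr)_{1\le i,j\le k}
= \sum_{\sigma\in S_k}\operatorname{sgn}(\sigma)\prod_{i=1}^{k} a_{n_i, m_{\sigma(i)}}^o.
\end{equation*}

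The key step is then the following congruence argument. By Theorem \ref{thm-mainthm}, each factor $a_{n_i,m_{\sigma(i)}}^o$ vanishes unless $n_i + m_{\sigma(i)} \equiv -1 \equiv 2\pmod{3}$. Hence a single summand in the Leibniz expansion can be nonzero only if
\begin{equation*}
\sum_{i=1}^{k}\bigl(n_i + m_{\sigma(i)}\bigr) \equiv 2k \pmod 3.
\end{equation*}
Since $\sigma$ is a permutation, the left-hand side equals $\sum_i n_i + \sum_i m_i$, and therefore
\begin{equation*}
|\mu| = \sum_i m_i + \sum_i n_i + k \equiv 2k + k = 3k \equiv 0\pmod 3.
\end{equation*}
Contrapositively, if $|\mu|\not\equiv 0\pmod 3$, then every term in the Leibniz expansion vanishes, so $a_\mu^o = 0$.

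I do not expect any genuine obstacle: the entire argument reduces to a bookkeeping of congruences mod $3$, and all the real input — the triality vanishing $a_{n,m}^o=0$ for $n+m\not\equiv -1\pmod 3$ and the determinantal expression for $a_\mu^o$ — is already in place. The only thing one should be careful about is the convention on Frobenius coordinates (in particular the $+1$ shift that produces the extra $k$ in the expression for $|\mu|$), since that is what makes the exponents $2k$ and $k$ combine to $3k$.
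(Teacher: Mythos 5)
Your proposal is correct and follows exactly the route the paper intends: the paper derives the vanishing directly from the determinantal formula $a_\mu^o=(-1)^{n_1+\cdots+n_k}\det(a_{n_i,m_j}^o)$ together with the mod-$3$ vanishing $a_{n,m}^o=0$ for $n+m\not\equiv -1\pmod 3$, leaving the congruence bookkeeping implicit (as it did for $\tau^{\text{WK}}$). Your Leibniz expansion and the identity $|\mu|=\sum_i m_i+\sum_i n_i+k$ simply make that bookkeeping explicit, and the computation $2k+k\equiv 0\pmod 3$ is exactly right.
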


Using the data listed in Example \ref{eg-open-anm},
we are able to compute first a few terms in $\tau^o$.
Here we write down all the terms $s_{\mu}$ with $|\mu|\leq 9$:
\begin{equation*}
\begin{split}
\tau^o=& 1+\frac{41}{24}s_{(3)} - \frac{5}{24} s_{(2,1)} -\frac{7}{24} s_{(1^3)}\\
&+\frac{9241}{1152} s_{(6)} - \frac{385}{1152} s_{(5,1)} - \frac{455}{1152} s_{(4,1^2)}
+\frac{25}{1152} s_{(3,1^3)}\\
& - \frac{385}{1152} s_{(2,1^4)}
- \frac{455}{1152} s_{(1^6)}
+\frac{205}{576} s_{(3^2)} +\frac{287}{576} s_{(3,2,1)} - \frac{35}{576} s_{(2^3)}\\
&+ \frac{5075225}{82944} s_{(9)} - \frac{85085}{82944} s_{(8,1)}
+ \frac{46205}{27648} s_{(6,3)} - \frac{15785}{27648} s_{(5,4)}\\
&-\frac{95095}{82944} s_{(7,1^2)} +\frac{64687}{27648} s_{(6,2,1)}
-\frac{2695}{27648} s_{(5,2^2)} + \frac{2275}{27648} s_{(4,3,2)}\\
& -\frac{18655}{27648} s_{(4^2,1)} -\frac{1435}{13824} s_{(3^3)}
+ \frac{26765}{41472} s_{(6,1^3)}
-\frac{175}{27648} s_{(3,2^3)}\\
& - \frac{125}{27648} s_{(3^2,2,1)}
-\frac{43505}{41472} s_{(5,1^4)} + \frac{15785}{27648} s_{(3^2,1^3)} + \frac{2695}{27648} s_{(2^4,1)}\\
&-\frac{45955}{41472} s_{(4,1^5)} + \frac{18655}{27648} s_{(3,2,1^4)}
-\frac{2275}{27648} s_{(2^3,1^3)} - \frac{39655}{82944} s_{(3,1^6)}\\
&-\frac{85085}{82944} s_{(2,1^7)} - \frac{95095}{82944} s_{(1^9)}
+\cdots.
\end{split}
\end{equation*}
(We have omitted the terms whose coefficients are zero.)
In particular,
the coefficient of $s_\mu$ for a hook partition $\mu=(m+1,1^n)=(m|n)$ is simply $(-1)^n\cdot a_{n,m}^o$.

\subsection{Generating series of the affine coordinates}
\label{sec-gen}

In the previous subsections
we have discussed the expressions for the partition function $\tau^o$
in terms of the affine coordinates $\{a_{n,m}^o\}_{n,m\geq 0}$.
It is also a natural question to compute the free energy $\log(\tau^o)$
using $\{a_{n,m}^o\}$.
A way to do this is to apply Zhou's formula \eqref{eq-thm-npt} to compute
the generating series of the connected $n$-point correlators.
To apply Zhou's formula,
we are supposed to compute the generating series $A^o(x,y)$ of the affine coordinates first.
We do this in the present subsection.

Let $a(z),b(z)$ be the Faber-Zagier series:
\be
\label{eq-FZ}
\begin{split}
&a(z)=\sum_{m=0}^\infty \frac{(6m-1)!!}{36^m\cdot (2m)!}z^{-3m},\\
&b(z)=-\sum_{m=0}^\infty \frac{(6m-1)!!}{36^m\cdot (2m)!}\frac{6m+1}{6m-1}z^{-3m+1},\\
\end{split}
\ee
and denote by
\be
\label{eq-series-0}
c(z)=\sum_{m\geq 0}
\big(\frac{3}{2}\big)^m \cdot (2m-1)!!\cdot \sum_{j=0}^m
\frac{(6j-1)!!}{54^j\cdot (2j)!\cdot (2j-1)!!}\cdot
z^{-3m}
\ee
the generating series of $\{a_{0,m}^o\}$ (see \eqref{eq-a0m-explicit}).
Now let
\be
A^o (x,y):= \sum_{n,m\geq 0} a_{n,m}^o x^{-n-1} y^{-m-1}
\ee
be the generating series of all $\{a_{n,m}^o\}$,
then by Theorem \ref{thm-mainthm} we have:
\begin{equation*}
\begin{split}
A^o (x,y)
=& x^{-1}\cdot\sum_{q\geq 1} a_{0,3q-1}^o y^{-3q}
+\sum_{n\geq 0, m\geq 1} a_{n,m}^{\text{WK}} y^{-m} x^{-n-2}\\
&-\bigg( \sum_{q\geq 1} a_{0,3q-1}^o y^{-3q}\bigg)\cdot
\bigg( \sum_{p\geq 0} a_{3p+2,0}^{\text{WK}} x^{-3p-4}\bigg)\\
=&
-\bigg(1+ \sum_{m\geq 0} a_{0,m}^o y^{-m-1}\bigg)
\bigg(
-x^{-1}+ \sum_{n\geq 0} a_{n,0}^{\text{WK}} x^{-n-2}
\bigg)\\
& -x^{-1}
+\sum_{n, m\geq 0} a_{n,m}^{\text{WK}} y^{-m} x^{-n-2}.
\end{split}
\end{equation*}
Using the symmetry \eqref{eq-symm-WK},
one gets:
\begin{equation*}
-x^{-1}+ \sum_{n\geq 0} a_{n,0}^{\text{WK}} x^{-n-2}
=-x^{-1}+ \sum_{n\geq 0}(-1)^n a_{0,n}^{\text{WK}} x^{-n-2}
=-x^{-1}\cdot a(-x),
\end{equation*}
where $a(z)$ is the Faber-Zagier series \eqref{eq-FZ}.
Thus:
\begin{equation*}
\begin{split}
A^o(x,y)=& -c(y)\cdot(-\frac{1}{x})a(-x)-\frac{1}{x}
+\frac{y}{x}\sum_{n, m\geq 0} a_{n,m}^{\text{WK}} y^{-m-1} x^{-n-1}\\
=& \frac{c(y)a(-x)-1}{x}
+\frac{y}{x}\sum_{n, m\geq 0} a_{n,m}^{\text{WK}} y^{-m-1} x^{-n-1}.
\end{split}
\end{equation*}
According to \cite[(282)]{zhou1},
one has:
\be
\label{eq-generating-WK}
\sum_{n, m\geq 0} a_{n,m}^{\text{WK}} x^{-n-1} y^{-m-1}
=\frac{1}{y-x} + \frac{a(y)b(-x)-a(-x)b(y)}{y^2-x^2},
\ee
(the notation $A_{m,n}$ in that work is $a_{n,m}^{\text{WK}}$ here).
Plugging this into the above formula for $A^o(x,y)$,
and in this way we have proved:
\begin{Theorem}
\label{thm-generating}
We have:
\be
\label{eq-generating}
A^o(x,y)= \frac{1}{y-x} +
\frac{c(y)a(-x)}{x}
+\frac{y}{x}\cdot
\frac{a(y)b(-x)-a(-x)b(y)}{y^2-x^2},
\ee
where the series $a(z),b(z),c(z)$ are defined by \eqref{eq-FZ} and \eqref{eq-series-0}.
\end{Theorem}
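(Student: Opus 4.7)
The plan is to compute the double generating series
\begin{equation*}
A^o(x,y) = \sum_{n,m \geq 0} a^o_{n,m}\, x^{-n-1} y^{-m-1}
\end{equation*}
by splitting the sum according to the residue of $n$ modulo $3$, using Theorem \ref{thm-mainthm} to convert each class of coefficients into either Witten-Kontsevich coefficients $a^{\text{WK}}_{n,m}$ or the boundary row $\{a^o_{0,m}\}_{m \geq 0}$, and then assembling the pieces into known generating series.

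First I would separate three contributions. The $n = 0$ row, using the definition of $c(z)$ together with the vanishing of $a^o_{0,m}$ off the progression $m \equiv 2 \pmod 3$, contributes $x^{-1}(c(y)-1)$. The rows with $n \equiv 1$ or $n \equiv 2 \pmod 3$ are, by the first two identities of \eqref{eq-mainthm-relation}, index shifts of the WK double sum and combine into a partial double series over the WK coefficients $a^{\text{WK}}_{n,m}$ weighted by $\tfrac{y}{x}\, x^{-n-1} y^{-m-1}$. The rows with $n \equiv 0 \pmod 3$ and $n > 0$, by the third identity, add an additional WK piece corrected by the product $-\,a^o_{0,3q-1}\, a^{\text{WK}}_{3p+2,0}$. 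Reassembling and tracking the index shifts, these three contributions merge into
\begin{equation*}
A^o(x,y) = -\,c(y)\left(-\frac{1}{x} + \sum_{n \geq 0} a^{\text{WK}}_{n,0}\, x^{-n-2}\right) - \frac{1}{x} + \frac{y}{x}\sum_{n,m \geq 0} a^{\text{WK}}_{n,m}\, x^{-n-1} y^{-m-1}.
\end{equation*}

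Next I would simplify the one-variable WK piece using the symmetry \eqref{eq-symm-WK}, which rewrites $\sum_{n\geq 0} a^{\text{WK}}_{n,0} x^{-n-2} = \sum_{n \geq 0} (-1)^n a^{\text{WK}}_{0,n} x^{-n-2}$; by the explicit formulas \eqref{eq-WK-coeff-012} this series equals $x^{-1}(a(-x)-1)$, so the full factor collapses to $-x^{-1} a(-x)$. Finally, plugging Zhou's closed form \eqref{eq-generating-WK} into the remaining double sum and using the elementary identity $\tfrac{y}{x(y-x)} = \tfrac{1}{x} + \tfrac{1}{y-x}$ to cancel the stray $-x^{-1}$, one obtains the three-term formula \eqref{eq-generating}.

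The main obstacle is the index bookkeeping in the first step: one must verify that the three relations of \eqref{eq-mainthm-relation}, together with the isolated boundary row $\{a^o_{0,m}\}$ and the vanishing pattern $a^o_{n,m} = 0$ for $n+m \not\equiv -1 \pmod 3$, account for every pair $(n,m)$ with $n+m \equiv -1 \pmod 3$ exactly once, and that the shifts $(p,q) \mapsto (3p+1,3q-2),\,(3p+2,3q-3),\,(3p+3,3q-1)$ align correctly after the factor $\tfrac{y}{x}$ is introduced. Once this is arranged, the remainder is a straightforward manipulation of formal power series using \eqref{eq-symm-WK}, \eqref{eq-WK-coeff-012}, and Zhou's WK formula \eqref{eq-generating-WK}.
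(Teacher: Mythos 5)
Your proposal is correct and follows essentially the same route as the paper: split the double sum via Theorem \ref{thm-mainthm} into the $c(y)$-row plus shifted WK sums, collapse $-x^{-1}+\sum_{n\ge 0}a^{\text{WK}}_{n,0}x^{-n-2}$ to $-x^{-1}a(-x)$ via \eqref{eq-symm-WK} and \eqref{eq-WK-coeff-012}, and insert Zhou's formula \eqref{eq-generating-WK} with the partial-fraction identity $\tfrac{y}{x(y-x)}=\tfrac{1}{x}+\tfrac{1}{y-x}$. One minor slip: $\sum_{n\ge 0}a^{\text{WK}}_{n,0}x^{-n-2}$ equals $-x^{-1}(a(-x)-1)$, not $x^{-1}(a(-x)-1)$; with the correct sign the factor does collapse to $-x^{-1}a(-x)$ as you state.
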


Now using the above theorem and the explicit formulas for the series $a(z),b(z)$ and $c(z)$,
one can compute the coefficients $\{a_{n,m}^o\}$ effectively with the help of Maple or Mathematica.
For example,
\begin{equation*}
\begin{split}
A^o (x,y) =& -\frac{7}{24x^3y}+ \frac{5}{24x^2y^2} + \frac{41}{24xy^3}
+\frac{455}{1152 x^6 y} -\frac{385}{1152 x^5 y^2}\\
&-\frac{25}{1152x^4y^3}
-\frac{455}{1152x^3 y^4} +\frac{385}{1152 x^2 y^5} + \frac{9241}{1152 x y^6}\\
& - \frac{95095}{82944 x^9y} +\frac{85085}{82944 x^8y^2}
-\frac{39655}{82944 x^7y^3}
 +\frac{45955}{41472 x^6y^4}\\
&-\frac{43505}{41472 x^5y^5} -\frac{26765}{41472 x^4y^6} -\frac{95095}{82944 x^3y^7}
+ \frac{85085}{82944 x^2y^8}\\
&+\frac{5075225}{82944 xy^9}
+\cdots.
\end{split}
\end{equation*}

\begin{Remark}
The series $c(z)$ coincide with the series $D(z)$ introduced in \cite[\S 1.3]{bu} (after taking $u=1$)
and the integral $\Phi_1^o(z)$ defined by \cite[(3.16)]{al3}.
Using the fermionic reformulation \eqref{eq-1step-rec} of the string equation,
one can easily derive the following relations among these series
(see also Buryak \cite[\S A.2]{bu},
and Alexandrov's Kac-Schwarz operator \cite[(3.19)]{al3}):
\ben
&&a(z)=\big(z^{-2}\pd_z +1 -\frac{3}{2} z^{-3}\big) c(z),\\
&&b(z)=\big(z^{-2}\pd_z +1 -\frac{3}{2} z^{-3}\big) \big(z\cdot a(z)\big).
\een
\end{Remark}

\subsection{Computing the $n$-point functions using Zhou's formula}
\label{sec-npt}

Now we can apply Theorem \ref{thm-generating}
to Zhou's formula \eqref{eq-thm-npt} to compute the following $n$-point functions
of $\log(\tau^o)$:
\be
G_{(n)}^o(z_1,\cdots,z_n)
:=\sum_{j_1,\cdots,j_n\geq 1}
\frac{\pd^n \log\tau^o (\bm T)}{\pd T_{j_1}\cdots \pd T_{j_n}}\bigg|_{\bm T=0}
\cdot z_1^{-j_1-1} \cdots z_n^{-j_n-1}.
\ee

Here we give some examples.
For $n=1$,
the formula \eqref{eq-thm-npt} simply gives:
\begin{equation*}
G_{(1)}^o (z)=A^o(z,z)
= \sum_{k\geq 0} \bigg(\sum_{n+m=k} a_{n,m}^o \bigg) z^{-k-2},
\end{equation*}
thus:
\begin{equation*}
\begin{split}
G_{(1)}^o (z)=&
\frac{13}{8 z^4} + \frac{8}{z^7} + \frac{7665}{128 z^{10}} + \frac{640}{z^{13}}
+\frac{8853845}{1024 z^{16}} + \frac{143360}{z^{19}} + \frac{91699252645}{
 32768 z^{22}}\\
 & + \frac{63078400}{z^{25}} + \frac{422005261552875}{
 262144 z^{28}} + \frac{45921075200}{z^{31}}\\
 & + \frac{6070737898677577125}{
 4194304 z^{34}} + \frac{49962129817600}{z^{37}}\\
 & +\frac{62894308834692923713125}{33554432 z^{40}}
 +\frac{75942437322752000}{z^{43}}\\
 & + \frac{7096529123478095160893158125}{ 2147483648 z^{46}}
 +\cdots.
\end{split}
\end{equation*}

For $n=2$, there is only one $2$-cycle $(1,2)$,
thus:
\be
\begin{split}
\label{eq-formula-2pt-x}
G_{(2)}^o (z_1,z_2)=&
-\big(\frac{1}{z_1-z_2}+A^o(z_1,z_2)\big)\big(\frac{1}{z_2-z_1}+A^o(z_2,z_1)\big)
-\frac{1}{(z_1-z_2)^2}\\
=&\frac{A^o(z_1,z_2)-A^o(z_2,z_1)}{z_1-z_2} - A^o(z_1,z_2)A^o(z_2,z_1),
\end{split}
\ee
and then by \eqref{eq-generating} one has:
\begin{equation*}
\begin{split}
&G_{(2)}^o (z_1,z_2)
=\frac{2}{z_1^2 z_2^3}+\frac{2}{z_1^3 z_2^2}
+\frac{65}{8 z_1^2 z_2^6}
+\frac{8}{ z_1^3 z_2^5}
+\frac{39}{8 z_1^4 z_2^4}
+\frac{8}{ z_1^5 z_2^3}
+\frac{65}{8 z_1^6 z_2^2}\\
&\qquad\quad
+\frac{64}{z_1^2z_2^9}
+\frac{245}{4 z_1^3z_2^8} + \frac{48}{z_1^4z_2^7}
+\frac{60}{z_1^5 z_2^6} +\frac{60}{z_1^6 z_2^5}
+\frac{48}{z_1^7z_2^4}
+\frac{245}{4 z_1^8z_2^3} +\frac{64}{z_1^9z_2^2}\\
&\qquad\quad
+\frac{84315}{128z_1^2 z_2^{12}} +\frac{640}{z_1^3z_2^{11}}
+\frac{68985}{128 z_1^4 z_2^{10}}
+\frac{640}{z_1^5 z_2^9} +\frac{80815}{128 z_1^6 z_2^8} +\frac{576}{z_1^7 z_2^7}
 +\frac{80815}{128 z_1^8 z_2^6}\\
 &\qquad\quad
  +\frac{640}{z_1^9 z_2^5} +\frac{68985}{128 z_1^{10} z_2^{4}}
+\frac{640}{z_1^{11}z_2^{3}}
 +\frac{84315}{128z_1^{12} z_2^{2}}
+\frac{8960}{z_1^2z_2^{15}} + \frac{555555}{64 z_1^3 z_2^{14}}
+\frac{7680}{z_1^4 z_2^{13}}\\
&\qquad\quad + \frac{17325}{2 z_1^5 z_2^{12}}
+\frac{8640}{z_1^6 z_2^{11}} + \frac{8190}{z_1^7 z_2^{10}}
+\frac{8680}{z_1^8 z_2^9}
 + \frac{8680}{z_1^9 z_2^8}
+ \frac{8190}{z_1^{10} z_2^{7}}+\frac{8640}{z_1^{11} z_2^{6}}
+ \frac{17325}{2 z_1^{12} z_2^{5}}\\
&\qquad\quad +\frac{7680}{z_1^{13} z_2^{4}}
+ \frac{555555}{64 z_1^{14} z_2^{3}} +\frac{8960}{z_1^{15}z_2^{2}}
+ \frac{150 515 365}{1024 z_1^2 z_2^{18}}
+ \frac{143 360}{z_1^3 z_2^{17}}
+ \frac{132 807 675}{1024 z_1^4 z_2^{16}}\\
&\qquad\quad
+ \frac{143 360}{z_1^5 z_2^{15}}
+ \frac{146 151 005}{1024 z_1^6 z_2^{14}}
+ \frac{138 240}{z_1^7 z_2^{13}}
+ \frac{73 249 715}{512 z_1^8 z_2^{12}}
+ \frac{143 360}{z_1^9 z_2^{11}}
+\frac{71 395 065}{512 z_1^{10} z_2^{10}}\\
&\qquad\quad
+ \frac{143 360}{z_1^{11} z_2^{9}}
+ \frac{73 249 715}{512 z_1^{12} z_2^8}
+ \frac{138 240}{z_1^{13} z_2^{7}}
+ \frac{146 151 005}{1024 z_1^{14} z_2^{6}}
+ \frac{143 360}{z_1^{15} z_2^{5}}
+ \frac{132 807 675}{1024 z_1^{16} z_2^4}\\
&\qquad \quad
+ \frac{143 360}{z_1^{17} z_2^3}
+ \frac{150 515 365}{1024 z_1^{18} z_2^{2}}
+\cdots.
\end{split}
\end{equation*}

Now consider the case $n=3$.
In this case there are two $3$-cycles $(1,2,3)$ and $(1,3,2)$,
therefore the formula \eqref{eq-thm-npt} gives:
\begin{equation*}
\begin{split}
&G_{(3)}^o (z_1,z_2,z_3)\\
= &
\big(\frac{1}{z_1-z_2}+A^o(z_1,z_2)\big)\big(\frac{1}{z_2-z_3}+A^o(z_2,z_3)\big)
\big(\frac{1}{z_3-z_1}+A^o(z_3,z_1)\big)\\
&+\big(\frac{1}{z_1-z_3}+A^o(z_1,z_3)\big)\big(\frac{1}{z_3-z_2}+A^o(z_3,z_2)\big)
\big(\frac{1}{z_2-z_1}+A^o(z_2,z_1)\big)\\
=&A^o(z_1,z_2)A^o(z_2,z_3)A^o(z_3,z_1)+A^o(z_1,z_3)A^o(z_3,z_2)A^o(z_2,z_1)\\
& + \frac{A^o(z_2,z_3)A^o(z_3,z_1)-A^o(z_1,z_3)A^o(z_3,z_2)}{z_1-z_2}\\
& + \frac{A^o(z_3,z_1)A^o(z_1,z_2)-A^o(z_2,z_1)A^o(z_1,z_3)}{z_2-z_3}\\
& + \frac{A^o(z_1,z_2)A^o(z_2,z_3)-A^o(z_3,z_2)A^o(z_2,z_1)}{z_3-z_1}\\
& + \frac{A^o(z_2,z_3)+A^o(z_3,z_2)}{(z_3-z_1)(z_1-z_2)}
+ \frac{A^o(z_1,z_2)+A^o(z_2,z_1)}{(z_2-z_3)(z_3-z_1)}
+ \frac{A^o(z_3,z_1)+A^o(z_1,z_3)}{(z_1-z_2)(z_2-z_3)}.
\end{split}
\end{equation*}
For simplicity we will denote by $\frac{1}{z^{(m,n,l)}}$ the summation of all distinct terms
of the form $\frac{1}{z_i^m z_j^n z_k^l}$ where $\{i,j,k\}=\{1,2,3\}$,
for example:
\begin{equation*}
\begin{split}
&\frac{1}{z^{(2,2,2)}}:=\frac{1}{z_1^2z_2^2z_3^2},\\
&\frac{1}{z^{(2,2,5)}}:=
\frac{1}{z_1^2z_2^2z_3^5}+\frac{1}{z_1^2z_2^5z_3^2}+\frac{1}{z_1^5z_2^2z_3^2},\\
&\frac{1}{z^{(2,3,4)}}:=
\frac{1}{z_1^2z_2^3z_3^4} +\frac{1}{z_1^2z_2^4z_3^3}
+\frac{1}{z_1^3z_2^2z_3^4}
+\frac{1}{z_1^3z_2^4z_3^2} +\frac{1}{z_1^4z_2^2z_3^3} +\frac{1}{z_1^4z_2^3z_3^2}.
\end{split}
\end{equation*}
Then using \eqref{eq-generating} one has:
\begin{equation*}
\begin{split}
&G_{(3)}^o (z_1,z_2,z_3)= \frac{1}{z^{(2,2,2)}}+
\frac{8}{z^{(2,2,5)}}
+\frac{6}{z^{(2,3,4)}}
+\frac{8}{z^{(3,3,3)}}
 +\frac{455}{8 z^{(2,2,8)}}
+\frac{48}{z^{(2,3,7)}}\\
&\qquad\qquad\quad
+\frac{195}{4 z^{(2,4,6)}}
+\frac{64}{z^{(2,5,5)}}
+\frac{60}{z^{(3,3,6)}}
+\frac{48}{z^{(3,4,5)}}
+\frac{117}{4 z^{(4,4,4)}}
 + \frac{640}{z^{(2,2,11)}}\\
&\qquad\qquad\quad
+\frac{2205}{4 z^{(2,3,10)}}
+\frac{576}{z^{(2,4,9)}}
+\frac{665}{z^{(2,5,8)}}
 +\frac{600}{z^{(2,6,7)}}
+\frac{640}{z^{(3,3,9)}}
+\frac{2205}{4 z^{(3,4,8)}}
\\
&\qquad\qquad\quad
+\frac{576}{z^{(3,5,7)}}
+\frac{1275}{2 z^{(3,6,6)}}
+\frac{432}{z^{(4,4,7)}}
+\frac{540}{z^{(4,5,6)}}
+\frac{640}{z^{(5,5,5)}}
+ \frac{1 096 095}{128 z^{(2,2,14)}}\\
&\qquad\qquad\quad
+ \frac{7680}{z^{(2,3,13)}}
+ \frac{252 945}{32 z^{(2,4,12)}}
+ \frac{17 325}{2 z^{(3,3,12)}}
+ \frac{8960}{z^{(2,5,11)}}
+ \frac{7680}{z^{(3,4,11)}}\\
&\qquad\qquad\quad
+ \frac{268 065}{32 z^{(2,6,10)}}
+ \frac{8190}{z^{(3,5,10)}}
+ \frac{206 955}{32 z^{(4,4,10)}}
+ \frac{8448}{z^{(2,7,9)}}
+ \frac{8640}{z^{(3,6,9)}}
+ \frac{7680}{z^{(4,5,9)}}\\
&\qquad\qquad\quad
+ \frac{565 705}{64 z^{(2,8,8)}}
+ \frac{8190}{z^{(3,7,8)}}
+ \frac{242 445}{32 z^{(4,6,8)}}
+ \frac{8680}{z^{(5,5,8)}}
+ \frac{6912}{z^{(4,7,7)}}
+ \frac{8160}{z^{(5,6,7)}}\\
&\qquad\qquad\quad
+ \frac{136 575}{16 z^{(6,6,6)}}
+\cdots.
\end{split}
\end{equation*}

For $n=4$,
there are $6$ different $4$-cycles $(1,2,3,4)$,
$(1,2,4,3)$,
$(1,3,2,4)$,
$(1,3,4,2)$,
$(1,4,2,3)$,
$(1,4,3,2)$.
Similarly we will denote by $\frac{1}{z^{(a,b,c,d)}}$ the summation of all possible distinct terms
of the form $\frac{1}{z_i^a z_j^b z_k^c z_l^d}$ where $\{i,j,k,l\}=\{1,2,3,4\}$.
Then using \eqref{eq-thm-npt} we have:
\begin{equation*}
\begin{split}
&G_{(4)}^o (z_1,z_2,z_3,z_4)=
\frac{3}{z^{(2,2,2,4)}}+
\frac{48}{z^{(2,2,2,7)}}+
\frac{30}{z^{(2,2,3,6)}}+
\frac{48}{z^{(2,2,4,5)}}+
\frac{32}{z^{(2,3,3,5)}}\\
&\qquad\qquad\quad +
\frac{36}{z^{(2,3,4,4)}}+
\frac{48}{z^{(3,3,3,4)}}+
\frac{4095}{8 z^{(2,2,2,10)}}+
\frac{384}{z^{(2,2,3,9)}}+
\frac{4095}{8 z^{(2,2,4,8)}}\\
&\qquad\qquad\quad +
\frac{576}{z^{(2,2,5,7)}}
\frac{975}{2 z^{(2,2,6,6)}}+
\frac{420}{z^{(2,3,3,8)}}+
\frac{432}{z^{(2,3,4,7)}}+
\frac{480}{z^{(2,3,5,6)}}\\
&\qquad\qquad\quad +
\frac{1755}{4 z^{(2,4,4,6)}}+
\frac{576}{z^{(2,4,5,5)}} +
\frac{576}{z^{(3,3,3,7)}}+
\frac{540}{z^{(3,3,4,6)}}+
\frac{512}{z^{(3,3,5,5)}}\\
&\qquad\qquad\quad +
\frac{432}{z^{(3,4,4,5)}}+
\frac{1053}{4 z^{(4,4,4,4)}}+ \cdots.
\end{split}
\end{equation*}
From the above data one can read off some data of the free energy:
\begin{equation*}
\begin{split}
&\log(\tau^o)= \bigg(
\frac{13}{8} T_3 + 8 T_6 + \frac{7665}{128} T_9 + 640 T_{12}
+\frac{8853845}{1024} T_{15} + 143360 T_{18} +\cdots\bigg) \\
&\qquad +\bigg(
2 T_1T_2
+\frac{65}{8} T_1T_5
+8 T_2T_4
+\frac{39}{16} T_3^2
+64 T_1T_8
+\frac{245}{4 }T_2T_7 + 48 T_3T_6\\
&\qquad
+60 T_4T_5
 +\frac{84315}{128} T_1T_{11} +640 T_2T_{10}
  +\frac{68985}{128 } T_3 T_9
+640 T_4T_8\\
&\qquad +\frac{80815}{128 }T_5T_7
 +288 T_6^2+
 +8960T_1T_{14} + \frac{555555}{64 }T_2T_{13}
+7680 T_3T_{12}\\
&\qquad + \frac{17325}{2 } T_4T_{11}
+8640 T_5T_{10} + 8190 T_6T_9
+8680 T_7T_8
+ \cdots\bigg)\\
&\qquad +\bigg(
\frac{1}{6}T_1^3+
4 T_1^2T_4
+6 T_1T_2T_3
+\frac{4}{3} T_2^3
 +\frac{455}{16} T_1^2 T_7
+48 T_1T_2T_6 +\frac{195}{4 } T_1T_3T_5\\
&\qquad
 +30 T_2^2T_5
+32 T_1T_4^2
 +48 T_2T_3T_4
+\frac{39}{8} T_3^3 + 320 T_1^2T_{10}
 +\frac{2205}{4 } T_1T_2T_9\\
&\qquad
+576 T_1T_3T_8
 + 320 T_2^2T_8
 +665 T_1T_4T_7
+\frac{2205}{4 }T_2T_3T_7
 +600 T_1T_5T_6\\
 &\qquad
+576 T_2T_4T_6
+216 T_3^2T_6
 +\frac{1275}{4} T_2 T_5^2
 +540 T_3T_4T_5 +\frac{320}{3} T_4^3 +\cdots
\bigg)\\
&\qquad
+\bigg( \half T_1^3T_3
+ 8 T_1^3T_6
+ 15 T_1^2 T_2T_5
+ 24 T_1^2 T_3T_4
+ 16 T_1 T_2^2 T_4\\
&\qquad + 18 T_1 T_2 T_3^2
+ 8 T_2^3 T_3 +\cdots \bigg)
 + \cdots\cdots,
\end{split}
\end{equation*}
where the time variables $T_i$ are related to the
coupling constants $t_i$ and $s_i$ of the open intersection theory
by \eqref{eq-coordchange}.

\begin{Remark}
In \cite[Theorem 1.6]{br},
Bertola and Ruzza have derived another algorithm to compute the $n$-point functions
for the open intersections numbers and the Kontsevich-Penner models using a different method.
Their formula is similar but somehow different from Zhou's formula \eqref{eq-thm-npt}
for an arbitrary tau-function of KP.
It will be interesting to compare their matrix $A(\lambda)$
(see \cite[(1.8)]{br})
with the generating series $A^o(x,y)$ of the affine coordinates.
\end{Remark}

\subsection{Formulas for the $1$-point functions}
\label{sec-relation-npt-1}

As an application of Zhou's formula,
we write down some closed formulas for the $n$-point functions $G_{(n)}^o (z)$ for small $n$.

First consider the case $n=1$.
Recall that $G_{(1)}^o (z)= A^o(z,z)$
where $A^o(x,y)$ is given by \eqref{eq-generating},
thus by L'Hopital's rule we obtain:
\be
\begin{split}
G_{(1)}^o (z)
=& \lim_{y\to z} \bigg(
\frac{1}{y-z} +
\frac{c(y)a(-z)}{z}
+\frac{y}{z}\cdot
\frac{a(y)b(-z)-a(-z)b(y)}{y^2-z^2}
\bigg)\\
=& \frac{1}{2z} + \frac{c(z)a(-z)}{z} +\frac{a'(z)b(-z)-a(-z)b'(z)}{2z}\\
&+\frac{a(z)b(-z)-a(-z)b(z)}{2z^2}.
\end{split}
\ee
Moreover,
it is known that the Faber-Zagier series $a(z),b(z)$ satisfy the following relation
(see eg. \cite{iz} or \cite{ppz}):
\be
\label{eq-relation-ab}
a(z)b(-z)-a(-z)b(z)=-2z,
\ee
thus the above formula becomes:
\begin{Proposition}
We have:
\be
G_{(1)}^o (z)
=-\frac{1}{2z} + \frac{c(z)a(-z)}{z} +\frac{a'(z)b(-z)-a(-z)b'(z)}{2z}.
\ee
\end{Proposition}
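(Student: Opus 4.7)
The plan is to specialize Zhou's formula \eqref{eq-thm-npt} to the case $n=1$, which has no $\frac{1}{(z_1-z_2)^2}$ correction term and only one $1$-cycle, giving simply $G_{(1)}^o(z) = A^o(z,z)$. Substituting the closed form \eqref{eq-generating} for $A^o(x,y)$ and taking $y\to z$ leads to a genuine $0/0$: both the $\frac{1}{y-z}$ term and the factor $\frac{y}{z}\cdot\frac{a(y)b(-z)-a(-z)b(y)}{y^2-z^2}$ diverge individually, and the whole work of the proof is to show that their singular parts cancel against each other and leave a computable regular remainder.

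To effect this cancellation I would Taylor-expand the numerator of the second piece around $y=z$:
\begin{equation*}
a(y)b(-z)-a(-z)b(y) = \bigl(a(z)b(-z)-a(-z)b(z)\bigr) + (y-z)\bigl(a'(z)b(-z)-a(-z)b'(z)\bigr) + O((y-z)^2).
\end{equation*}
By the Faber-Zagier identity \eqref{eq-relation-ab} the zeroth-order coefficient equals $-2z$. Splitting the second piece accordingly, the singular contribution combines with $\frac{1}{y-z}$ via the elementary simplification
\begin{equation*}
\frac{1}{y-z} + \frac{y}{z}\cdot\frac{-2z}{(y-z)(y+z)} \;=\; \frac{(y+z)-2y}{(y-z)(y+z)} \;=\; -\frac{1}{y+z},
\end{equation*}
which is regular and tends to $-\frac{1}{2z}$ as $y\to z$.

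Meanwhile, the linear term in the Taylor expansion contributes $\frac{y}{z(y+z)}\bigl(a'(z)b(-z)-a(-z)b'(z)\bigr) \to \frac{a'(z)b(-z)-a(-z)b'(z)}{2z}$, and the regular piece $\frac{c(y)a(-z)}{z}$ simply evaluates to $\frac{c(z)a(-z)}{z}$ at $y=z$. Summing the three limits yields the stated formula. I do not foresee any genuine obstacle: the only subtlety is bookkeeping the cancellation of the singular parts, and one can sanity-check by matching a few low-order coefficients against the expansion of $G_{(1)}^o(z)$ displayed in \S \ref{sec-npt}.
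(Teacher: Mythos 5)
Your proposal is correct and follows essentially the same route as the paper: both compute $G_{(1)}^o(z)=A^o(z,z)$ as the limit $y\to z$ of the closed form \eqref{eq-generating} and both rely on the Faber--Zagier identity \eqref{eq-relation-ab}. The only difference is bookkeeping: the paper applies L'Hopital's rule first and then uses \eqref{eq-relation-ab} to simplify the resulting term $\frac{a(z)b(-z)-a(-z)b(z)}{2z^2}$ to $-\frac{1}{z}$, whereas you insert the identity into the Taylor expansion up front to cancel the poles explicitly; the two computations agree.
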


Another way to compute the one-point function $G_{(1)}^o (z)$
is to compare $G_{(1)}^o (z)$ with the one-point function of the Witten-Kontsevich tau-function.
In fact,
denote by
\be
G_{(n)}^{\text{WK}}(z_1,\cdots,z_n)
:=\sum_{j_1,\cdots,j_n\geq 1}
\frac{\pd^n \log\tau^{\text{WK}} (\bm T)}{\pd T_{j_1}\cdots \pd T_{j_n}}\bigg|_{\bm T=0}
\cdot z_1^{-j_1-1} \cdots z_n^{-j_n-1}
\ee
the connected $n$-point function of the Witten-Kontsevich tau-function
(whose coefficients are the intersection numbers of psi-classes on $\Mbar_{g,n}$),
where $\bm T=(T_1,T_2\cdots)$ are the time variables of the KP flows (not the KdV flows!).
Then by Zhou's formula \eqref{eq-thm-npt},
the $1$-point function is given by:
\be
G_{(1)}^{\text{WK}}(z) = A^{\text{WK}}(z,z),
\ee
where (see \eqref{eq-generating-WK}):
\be
\label{eq-generating-WK-2}
\begin{split}
A^{\text{WK}}(x,y):=&
\sum_{n, m\geq 0} a_{n,m}^{\text{WK}} x^{-n-1} y^{-m-1}\\
=&\frac{1}{y-x} + \frac{a(y)b(-x)-a(-x)b(y)}{y^2-x^2}.
\end{split}
\ee
Thus by \eqref{eq-generating} we have:
\be
\label{eq-generating-relation}
\begin{split}
A^o(x,y) - A^{\text{WK}}(x,y) =&
\frac{c(y)a(-x)}{x}
+\big(\frac{y}{x}-1\big)\cdot
\frac{a(y)b(-x)-a(-x)b(y)}{y^2-x^2}\\
=&\frac{c(y)a(-x)}{x}
+ \frac{a(y)b(-x)-a(-x)b(y)}{(y+x)x},
\end{split}
\ee
therefore:
\be
\begin{split}
G_{(1)}^o(z) -
G_{(1)}^{\text{WK}}(z)=&
A^o(z,z)-A^{\text{WK}}(z,z)\\
=&\frac{c(z)a(-z)}{z}
+ \frac{a(z)b(-z)-a(-z)b(z)}{2z^2}.
\end{split}
\ee
Again using \eqref{eq-relation-ab},
the above relation becomes:
\be
\begin{split}
G_{(1)}^o(z) =&
G_{(1)}^{\text{WK}}(z)
+\frac{c(z)a(-z)}{z}+ \frac{(-2z)}{2z^2}\\
=&
G_{(1)}^{\text{WK}}(z)
+\frac{c(z)a(-z)-1}{z}.
\end{split}
\ee
It is known in literatures that the one-point correlators of the Witten-Kontsevich tau-function
are given by (\cite{wi}):
\ben
\langle \tau_{3g-2}\rangle_g =\frac{1}{24^g \cdot g!},
\een
thus (see \cite[(295)]{zhou1}):
\ben
G_{(1)}^{\text{WK}}(z)=
\sum_{g\geq 1} \frac{(6g-3)!!}{24^g \cdot g!} \cdot z^{-6g+2},
\een
thus we obtain the following formula for the one-point function of $\log(\tau^o)$:
\begin{Proposition}
\label{prop-1pt}
We have:
\be
G_{(1)}^o (z)=\frac{c(z)a(-z)-1}{z}
+\sum_{g\geq 1} \frac{(6g-3)!!}{24^g \cdot g!} \cdot z^{-6g+2},
\ee
where $a(z),c(z)$ are given by \eqref{eq-FZ} and \eqref{eq-series-0}.
\end{Proposition}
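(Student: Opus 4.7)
The plan is to specialize Zhou's $n$-point formula \eqref{eq-thm-npt} to $n=1$, in which case there is only one $1$-cycle and the Bergman-kernel correction $-\delta_{n,2}/(z_1-z_2)^2$ is absent, so $G_{(1)}^o(z) = \widehat{A}(z,z) = A^o(z,z)$. Since Theorem \ref{thm-generating} already provides a closed expression for $A^o(x,y)$, the task reduces to computing the diagonal value $A^o(z,z)$. One could attack this directly via L'Hopital's rule as sketched earlier in the section, but this forces $a'(z), b'(z)$ into the answer. The cleaner route is to first subtract off the corresponding object for the Witten-Kontsevich tau-function.

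Concretely, I would form the difference $A^o(x,y) - A^{\text{WK}}(x,y)$ using \eqref{eq-generating} and \eqref{eq-generating-WK-2}. The singular terms $\frac{1}{y-x}$ cancel exactly, leaving
\begin{equation*}
A^o(x,y) - A^{\text{WK}}(x,y) = \frac{c(y)a(-x)}{x} + \frac{a(y)b(-x) - a(-x)b(y)}{(y+x)\,x},
\end{equation*}
which is regular on the diagonal $y=x$. Setting $x=y=z$ therefore yields
\begin{equation*}
G_{(1)}^o(z) - G_{(1)}^{\text{WK}}(z) = \frac{c(z)a(-z)}{z} + \frac{a(z)b(-z) - a(-z)b(z)}{2z^2}.
\end{equation*}

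To finish, I would invoke the Faber-Zagier relation \eqref{eq-relation-ab}, namely $a(z)b(-z) - a(-z)b(z) = -2z$, which collapses the second term on the right-hand side to $-1/z$, producing exactly $\frac{c(z)a(-z)-1}{z}$. Separately, Witten's closed formula $\langle \tau_{3g-2}\rangle_g = \frac{1}{24^g\,g!}$ translates through the boson-fermion dictionary (with $T_i = p_i/i$) into the closed expression $G_{(1)}^{\text{WK}}(z) = \sum_{g\ge 1} \frac{(6g-3)!!}{24^g\,g!}\, z^{-6g+2}$; adding this to the difference above gives the claimed identity. The only ingredient beyond routine algebra is the Faber-Zagier identity, which is already standard in this context, so no genuine obstacle arises; the main subtlety is simply recognizing that routing through the WK comparison avoids the derivatives that a naive L'Hopital argument would introduce.
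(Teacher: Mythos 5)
Your proposal is correct and follows essentially the same route as the paper: the paper's proof of Proposition \ref{prop-1pt} likewise forms the difference $A^o(x,y)-A^{\text{WK}}(x,y)$, restricts to the diagonal, applies the Faber--Zagier relation \eqref{eq-relation-ab} to collapse the second term to $-1/z$, and then adds the known closed form of $G_{(1)}^{\text{WK}}(z)$ coming from $\langle\tau_{3g-2}\rangle_g=\frac{1}{24^g\cdot g!}$. The only cosmetic difference is that the paper also records the direct L'Hopital computation (with $a'$, $b'$) as a separate preliminary proposition before giving this cleaner comparison argument.
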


Recall that the extended partition function is defined by $\tau^o = \exp(F^{o,ext}+F^c)$
where $F^c = \log (\tau^{\text{WK}})$ is the free energy of the Witten-Kontsevich tau-function,
thus the above formula simply tells that:
\begin{Corollary}
We have:
\be
\sum_{j\geq 1}
\frac{\pd F^{o,ext}}{\pd T_j} \bigg|_{\bm T=0}
\cdot z^{-j-1}
=\frac{c(z)a(-z)-1}{z}.
\ee
\end{Corollary}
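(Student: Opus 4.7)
The plan is to derive this directly from Proposition \ref{prop-1pt} by separating the contributions of $F^{o,ext}$ and $F^c = \log(\tau^{\text{WK}})$ to the $1$-point function $G_{(1)}^o(z)$. Since $\tau^o = \exp(F^{o,ext} + F^c)$, one has $\log(\tau^o) = F^{o,ext} + \log(\tau^{\text{WK}})$, and differentiating term by term with respect to $T_j$ at $\bm T = 0$ gives
\begin{equation*}
\frac{\pd \log(\tau^o)}{\pd T_j}\bigg|_{\bm T=0}
= \frac{\pd F^{o,ext}}{\pd T_j}\bigg|_{\bm T=0}
+ \frac{\pd \log(\tau^{\text{WK}})}{\pd T_j}\bigg|_{\bm T=0}.
\end{equation*}

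Next I would multiply by $z^{-j-1}$, sum over $j\geq 1$, and recognize the two pieces on the right as the $1$-point functions $\sum_{j\geq 1} \partial_{T_j} F^{o,ext}|_{\bm T=0}\, z^{-j-1}$ and $G_{(1)}^{\text{WK}}(z)$ respectively. Rearranging yields
\begin{equation*}
\sum_{j\geq 1}\frac{\pd F^{o,ext}}{\pd T_j}\bigg|_{\bm T=0} z^{-j-1}
= G_{(1)}^o(z) - G_{(1)}^{\text{WK}}(z).
\end{equation*}

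To finish, I would substitute the explicit formula for $G_{(1)}^o(z)$ from Proposition \ref{prop-1pt}, namely $G_{(1)}^o(z) = \frac{c(z)a(-z)-1}{z} + \sum_{g\geq 1} \frac{(6g-3)!!}{24^g \cdot g!} z^{-6g+2}$, together with the identification $G_{(1)}^{\text{WK}}(z) = \sum_{g\geq 1} \frac{(6g-3)!!}{24^g \cdot g!} z^{-6g+2}$ recalled just before that proposition (from Witten's formula $\langle \tau_{3g-2}\rangle_g = \tfrac{1}{24^g g!}$). The two sums cancel exactly, leaving the desired identity. There is no real obstacle here; the statement is essentially a bookkeeping consequence of Proposition \ref{prop-1pt}, and the only thing to check is that the series in $z^{-6g+2}$ that appears in $G_{(1)}^o(z)$ is precisely the Witten-Kontsevich $1$-point function, which is already used in the derivation of Proposition \ref{prop-1pt} itself.
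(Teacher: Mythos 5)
Your proposal is correct and matches the paper's own argument: the paper obtains this corollary by exactly the same decomposition $\log\tau^o = F^{o,ext}+\log\tau^{\text{WK}}$, subtracting $G_{(1)}^{\text{WK}}(z)=\sum_{g\geq 1}\frac{(6g-3)!!}{24^g\cdot g!}z^{-6g+2}$ from the formula in Proposition \ref{prop-1pt}. Nothing further is needed.
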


\subsection{Formulas for the $2$-point functions}
\label{sec-relation-npt-2}

Using \eqref{eq-generating-relation}
one can also write down the relations between $G_{(n)}^o$ and $G_{(n)}^{\text{WK}}$
for general $n$.
However,
the formulas will be complicated for $n\geq 2$.
For example,
denote:
\begin{equation*}
\alpha(x,y):=
A^o(x,y) - A^{\text{WK}}(x,y)
=\frac{c(y)a(-x)}{x}
+ \frac{a(y)b(-x)-a(-x)b(y)}{(y+x)x},
\end{equation*}
and then by \eqref{eq-formula-2pt-x} we have:
\begin{equation*}
\begin{split}
G_{(2)}^o(x,y) - G_{(2)}^{\text{WK}}(x,y)
=&\frac{\alpha(x,y)-\alpha(y,x)}{x-y}
-\alpha(x,y)A^{\text{WK}}(y,x)\\
&-\alpha(y,x)A^{\text{WK}}(x,y)
-\alpha(x,y) \alpha(y,x).
\end{split}
\end{equation*}
Recall that by definition
\ben
G_{(2)}^o(x,y) - G_{(2)}^{\text{WK}}(x,y)
= \sum_{j,k\geq 1}\frac{\pd^2 F^{o,ext}}{\pd T_j\pd T_k}\bigg|_{\bm T =0} \cdot
x^{-j-1} y^{-k-1},
\een
and $A^{\text{WK}}(x,y)$ is given by \eqref{eq-generating-WK-2},
then after some direct computations and simplifications we obtain
the following formula for the $2$-point function:
\begin{Proposition}
We have:
\be
\begin{split}
&\sum_{j,k\geq 1}\frac{\pd^2 F^{o,ext}}{\pd T_j\pd T_k}\bigg|_{\bm T =0} \cdot
x^{-j-1} y^{-k-1}\\
=&
\frac{1}{x(x^2-y^2)}\bigg(
a(y)a(-y)b(-x)c(x) - a(-x)a(-y)b(y)c(x)
\bigg)\\
&+\frac{1}{y(x^2-y^2)}\bigg(
a(-x)a(-y)b(x)c(y) - a(x)a(-x)b(-y)c(y)
\bigg)\\
&- \frac{a(-x)a(-y)c(x)c(y)}{xy},
\end{split}
\ee
where $a(z),b(z),c(z)$ are given by \eqref{eq-FZ} and \eqref{eq-series-0}.
\end{Proposition}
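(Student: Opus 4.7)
The plan is to start from the identity for $G_{(2)}^o - G_{(2)}^{\text{WK}}$ derived earlier in the excerpt and reduce the final formula to a purely algebraic manipulation. Since $\tau^o=\exp(F^{o,ext}+F^c)$ with $F^c=\log\tau^{\text{WK}}$, taking $\log$ and applying two $\partial_{T_j}\partial_{T_k}$ at $\bm T=0$ immediately identifies the desired generating series with $G_{(2)}^o(x,y)-G_{(2)}^{\text{WK}}(x,y)$. So the task reduces to massaging this difference into the stated closed form.

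Next, I would apply Zhou's two-point formula \eqref{eq-formula-2pt-x} for both $\tau^o$ and $\tau^{\text{WK}}$. Writing $\alpha(x,y):=A^o(x,y)-A^{\text{WK}}(x,y)$, which is already computed as
\begin{equation*}
\alpha(x,y)=\frac{c(y)a(-x)}{x}+\frac{a(y)b(-x)-a(-x)b(y)}{(y+x)x},
\end{equation*}
the difference of the two-point functions becomes
\begin{equation*}
\frac{\alpha(x,y)-\alpha(y,x)}{x-y}-\alpha(x,y)A^{\text{WK}}(y,x)-\alpha(y,x)A^{\text{WK}}(x,y)-\alpha(x,y)\alpha(y,x),
\end{equation*}
as noted in the excerpt. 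Now I would substitute the explicit formula \eqref{eq-generating-WK-2} for $A^{\text{WK}}$ and the expression above for $\alpha$, expand, and collect terms into pieces of the form $a(\pm x)a(\pm y)b(\pm x)c(y)$, $a(\pm x)a(\pm y)c(x)c(y)$, and so on.

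The main obstacle is the algebraic simplification: a direct expansion produces many rational-function terms, including apparent poles at $x=y$ and at $x=-y$ that must cancel against each other. The key tool for this is the Faber--Zagier identity \eqref{eq-relation-ab}, namely $a(z)b(-z)-a(-z)b(z)=-2z$, which governs the cancellations. I would first use \eqref{eq-relation-ab} on the $\alpha(x,y)\alpha(y,x)$ cross-term to reduce one factor, then treat the $\alpha A^{\text{WK}}$ terms by regrouping numerators according to the denominators $(x+y)$ and $(x-y)$, so that the $\frac{1}{x-y}$ singularities cancel (they must, since both sides are regular at $x=y$). The symmetric combinations $\frac{1}{x+y}+\frac{1}{x-y}=\frac{2x}{x^2-y^2}$ and $\frac{1}{x-y}-\frac{1}{x+y}=\frac{2y}{x^2-y^2}$ will produce exactly the two $\frac{1}{x(x^2-y^2)}$ and $\frac{1}{y(x^2-y^2)}$ combinations in the statement. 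The $\alpha(x,y)\alpha(y,x)$ term, after applying \eqref{eq-relation-ab} twice, contributes the $-\frac{a(-x)a(-y)c(x)c(y)}{xy}$ piece together with correction terms that must combine with the remainder to yield the stated formula.

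Finally, to guard against sign or symmetry errors, I would verify the resulting series against the low-order coefficients already tabulated in Section \ref{sec-npt} (e.g.\ the coefficients $2T_1T_2$, $\tfrac{65}{8}T_1T_5$, $\tfrac{39}{16}T_3^2$ visible in $\log(\tau^o)$), which also double as a check that the Witten--Kontsevich contribution $G_{(2)}^{\text{WK}}$ has been subtracted correctly.
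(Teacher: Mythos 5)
Your proposal is correct and follows essentially the same route as the paper: both identify the left-hand side with $G_{(2)}^o(x,y)-G_{(2)}^{\text{WK}}(x,y)$, express this difference via $\alpha(x,y)=A^o(x,y)-A^{\text{WK}}(x,y)$ using Zhou's two-point formula \eqref{eq-formula-2pt-x}, and then substitute \eqref{eq-generating-WK-2} and \eqref{eq-generating-relation} and simplify. The paper compresses the final algebra into ``after some direct computations and simplifications,'' whereas you usefully spell out that the Faber--Zagier identity \eqref{eq-relation-ab} and the partial-fraction combinations of $\frac{1}{x\pm y}$ drive the cancellations.
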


Furthermore,
one can also plug \eqref{eq-generating-WK-2} into \eqref{eq-thm-npt} and obtain:
\be
\begin{split}
G_{(2)}^{\text{WK}}=&
\frac{1}{(x^2-y^2)^2}
\bigg(a(x)a(y)b(-x)b(-y) +a(-x)a(-y)b(x)b(y)\\
&-a(y)a(-y)b(x)b(-x)   - a(x)a(-x)b(y)b(-y)\bigg)
-\frac{1}{(x-y)^2},
\end{split}
\ee
thus the above formula gives us the following:
\begin{Corollary}
We have:
\be
\label{eq-formula-2pt}
\begin{split}
G_{(2)}^o(x,y)=&
\sum_{j,k\geq 1} \frac{\pd^2 \log \tau^o (\bm T)}{\pd T_j \pd T_k}\bigg|_{\bm T =0}
\cdot x^{-j-1}y^{-k-1}\\
=& -\frac{1}{(x-y)^2}
- \frac{a(-x)a(-y)c(x)c(y)}{xy}\\
&+\frac{1}{x(x^2-y^2)}\bigg(
a(y)a(-y)b(-x)c(x) - a(-x)a(-y)b(y)c(x)
\bigg)\\
&+\frac{1}{y(x^2-y^2)}\bigg(
a(-x)a(-y)b(x)c(y) - a(x)a(-x)b(-y)c(y)
\bigg)\\
&+\frac{1}{(x^2-y^2)^2}
\bigg(a(x)a(y)b(-x)b(-y) +a(-x)a(-y)b(x)b(y)\\
&\qquad\qquad
-a(y)a(-y)b(x)b(-x)   - a(x)a(-x)b(y)b(-y)\bigg).
\end{split}
\ee
\end{Corollary}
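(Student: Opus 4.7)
The plan is to read the corollary as the additive combination of the preceding Proposition with the displayed formula for $G_{(2)}^{\text{WK}}(x,y)$. Since the extended partition function factors as $\tau^o = \exp(F^{o,ext} + F^c)$ with $F^c = \log \tau^{\text{WK}}$, taking logarithms gives $\log \tau^o = F^{o,ext} + \log \tau^{\text{WK}}$, so differentiating twice in the KP time variables produces
\begin{equation*}
G_{(2)}^o(x,y) = G_{(2)}^{\text{WK}}(x,y) + \sum_{j,k \geq 1} \frac{\pd^2 F^{o,ext}}{\pd T_j \pd T_k}\bigg|_{\bm T=0} \cdot x^{-j-1} y^{-k-1}.
\end{equation*}

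First I would substitute the closed form for the second summand from the preceding Proposition: this contributes the term $-a(-x)a(-y)c(x)c(y)/(xy)$ together with the two mixed blocks carrying denominators $x(x^2-y^2)$ and $y(x^2-y^2)$. Next I would substitute the displayed formula for $G_{(2)}^{\text{WK}}(x,y)$ given just above the corollary, which contributes the singular term $-1/(x-y)^2$ and the $(x^2-y^2)^{-2}$ block built from four products of the Faber--Zagier series $a(\pm \cdot)$ and $b(\pm \cdot)$. Matching the six resulting summands against the six blocks on the right-hand side of the statement is then a direct identification with no hidden rearrangement.

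Because each of the six summands carries a distinct denominator signature --- $(x-y)^2$, $xy$, $x(x^2-y^2)$, $y(x^2-y^2)$, and $(x^2-y^2)^2$ --- no cross-cancellation between blocks is possible, so the assembled expression is already in the reduced form claimed; in particular no further use of the Faber--Zagier relation $a(z)b(-z)-a(-z)b(z)=-2z$ is needed at this stage. I do not anticipate any real obstacle: the analytic content has been absorbed into deriving the Proposition (via Zhou's formula \eqref{eq-thm-npt} and the generating-series identity \eqref{eq-generating-relation}), so the corollary is a pure bookkeeping assembly of two previously established generating functions.
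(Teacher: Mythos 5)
Your proposal is correct and matches the paper's own derivation: the corollary is obtained exactly by adding the closed form for the two-point function of $F^{o,ext}$ from the preceding Proposition to the displayed expression for $G_{(2)}^{\text{WK}}(x,y)$, which the paper gets by plugging \eqref{eq-generating-WK-2} into Zhou's formula \eqref{eq-thm-npt}. Your observation that the six blocks have distinct denominator signatures and hence assemble without further simplification is consistent with how the paper presents the result.
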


\subsection{Formulas for the $3$-point functions}
\label{sec-relation-npt-3}

Now we write down the formulas for the $3$-point functions.
Here we omit the details since they are similar to the computations
in the last two subsections.
By \eqref{eq-thm-npt} and \eqref{eq-generating} we have:
\begin{equation}
\begin{split}
&G_{(3)}^o(x,y,z) =  \bigg[
 \frac{a(-x)a(-y)a(-z)b(x)b(y)c(z)}{2z (z^2-x^2)(z^2-y^2)}\\
&\qquad+ \frac{(x^2+y^2-2z^2)a(-x)a(-y)a(-z)b(z)c(x)c(y)}{2xy (z^2-x^2)(z^2-y^2)}\\
&\qquad+\frac{a(x)a(-x)b(y)b(-y)a(-z)b(z)}{(x^2-y^2)(y^2-z^2)(z^2-x^2)}
+\frac{a(x)a(-x)b(y)b(-y)a(z)b(-z)}{(x^2-y^2)(y^2-z^2)(z^2-x^2)}\\
&\qquad+\frac{a(x)a(-x)b(-z)c(z)a(y)b(-y)}{z(x^2-y^2)(y^2-z^2)}
+\frac{a(x)a(-x)b(-z)c(z)a(-y)b(y)}{z(x^2-y^2)(x^2-z^2)}\\
&\qquad+\frac{a(x)a(-x)c(y)c(z)a(-y)b(-z)}{yz(x^2-z^2)}
+\frac{a(x)a(-x)b(y)b(-y)a(-z)c(z)}{z(y^2-x^2)(y^2-z^2)}\\
&\qquad+\frac{a(-x)a(-y)a(-z)c(x)c(y)c(z)}{3xyz}\bigg]
+\sum_{\sigma\in S_3\backslash\{1\}}\sigma[\cdots],
\end{split}
\end{equation}
where $\sigma[\cdots]$
is obtained by applying the nontrivial permutation $\sigma$ of $(x,y,z)$
to the terms listed in $[\cdots]$.

Moreover,
we know that:
\begin{equation*}
\sum_{j,k,l\geq 1}\frac{\pd^3 F^{o,ext}}{\pd T_j\pd T_k\pd T_l}\bigg|_{\bm T =0} \cdot
x^{-j-1} y^{-k-1} z^{-l-1}
= G_{(3)}^o(x,y,z) - G_{(3)}^{\text{WK}}(x,y,z),
\end{equation*}
where $G_{(3)}^{\text{WK}}$ can be computed using \eqref{eq-thm-npt} and \eqref{eq-generating-WK-2}.
Then we plug the result into the above formula,
and finally obtain:
\begin{equation}
\begin{split}
&\sum_{j,k,l\geq 1}\frac{\pd^3 F^{o,ext}}{\pd T_j\pd T_k\pd T_l}\bigg|_{\bm T =0} \cdot
x^{-j-1} y^{-k-1} z^{-l-1}\\
=&\bigg[
\frac{a(x)a(-x)a(y)b(-y)b(-z)c(z)}{z(x^2-y^2)(y^2-z^2)}
+\frac{a(x)a(-x)a(-y)b(y)b(-z)c(z)}{z(x^2-y^2)(x^2-z^2)}\\
&+ \frac{a(x)a(-x)c(y)b(-z)a(-y)c(z)}{yz(x^2-z^2)}
+ \frac{a(x)a(-x)c(y)b(-z)a(-y)b(z)}{y(x^2-z^2)(y^2-z^2)} \\
&+\frac{a(-x)a(-y)a(-z)c(x)c(y)c(z)}{3xyz} +\frac{a(-x)a(-y)a(-z)b(y)b(z)c(x)}{2x(x^2-y^2)(x^2-z^2)} \\
&+\frac{(y^2+z^2-2x^2) a(-x)a(-y)a(-z)b(x)c(y)c(z)}{2yz(y^2-x^2)(z^2-x^2)} \bigg]
+\sum_{\sigma\in S_3\backslash\{1\}}\sigma[\cdots].
\end{split}
\end{equation}

\section{Proof of the Main Theorem}
\label{sec-proof}

In this section
we prove Theorem \ref{thm-mainthm} and the recursions given in \S \ref{sec-recursion}
using the Virasoro constraints.
Following the strategy developed in \cite{zhou3, zhou4},
we derive a reformulation of the Virasoro constraints in the fermionic picture.

\subsection{Computations of $a_{1,m}^o$}
\label{sec-a1}

Let $U^o\in Gr_{(0)}$ be the point corresponding to the tau-function $\tau^o$,
and let $\{a_{n,m}^o\}$ be its affine coordinates.
Then by definition,
$U^o$ is spanned by the following normalized basis:
\ben
\big\{z^{n+\half} + \sum_{m\geq 0} a_{n,m}^o z^{-m-\half}\big\}_{n\geq 0}.
\een

In \cite{al3},
Alexandrov has found an admissible basis for the point $U^o\in Gr_{(0)}$,
expressed in terms of some integrals.
One can try to expand these integrals and derive a normalized basis from this admissible basis.
Using this method,
it is easy to find the relations between $\{a_{n,m}^o\}_{n\geq 1}$ and $\{a_{n,m}^{\text{WK}}\}$,
but not easy to write down the numbers explicitly.
Here we follow another strategy.
We will start from the Virasoro constraints,
and use fermionic reformulation of the Virasoro constraints to
derive some recursion relations and constraints for the affine coordinates $\{a_{n,m}^o\}_{n,m\geq 0}$,
which enables us to solve these affine coordinates.
These recursions lead also to an explicit formula for $\{a_{0,m}^o\}$.
Besides the recursions from Virasoro constraints,
we will also need some initial data for these recursions,
and we will see that a good choice of initial data is $\{a_{1,m}^o\}_{m\geq 0}$.
A way to compute $\{a_{1,m}^o\}_{m\geq 0}$
is to compute the dual wave function and then use Lemma \ref{lem-tau-to-Gr}.
Here we directly use the following result of Alexandrov (see \cite[(3.16)]{al3}):
\ben
\Phi_2^o (z)=
\frac{z^{\frac{3}{2}}}{\sqrt{2\pi}}e^{-\frac{z^3}{3}}
\int dy\cdot \exp\bigg(-\frac{y^3}{3!}+\frac{yz^2}{2}\bigg),
\een
with a suitably chosen contour.
If one treats $\Phi_2^o$ as a formal Laurent series in $z$,
then $z^{\half}\cdot\Phi_2^o$ is the second vector
in an admissible basis
(i.e., the basis element whose leading term is $z^{\frac{3}{2}}$)
for $U^o$.
It is well-known that the asymptotic expansion of the above integral
is given by the Faber-Zagier series.
In fact, one has:
\begin{equation*}
\begin{split}
\int dy\cdot \exp\bigg(-\frac{y^3}{3!}+\frac{yz^2}{2}\bigg)=&
\int d y\cdot
\exp\bigg(\frac{1}{3}z^3 - \half  y^2 z -\frac{1}{6} y^3
\bigg)\\
=& e^{\frac{z^3}{3}}
\int d  y\cdot
\exp\bigg( - \half  y^2 z -\frac{1}{6} y^3
\bigg)\\
=& z^{-\half} e^{\frac{z^3}{3}}
\int d  y\cdot
e^{ - \half  y^2}
\sum_{n\geq 0} \frac{(  -\frac{1}{6} y^3
z^{-\frac{3}{2}}
)^n}{n!}.
\end{split}
\end{equation*}
Evaluating the Gaussian type integrals,
one obtains:
\ben
\Phi_2^o \sim
\sum_{m=0}^\infty \frac{(6m-1)!!}{36^m\cdot (2m)!}z^{-3m+1}.
\een
In particular,
there is only one term $z^{\frac{3}{2}}$ of non-negative degree in
the expansion of $z^\half \cdot\Phi_2(z)$,
thus $z^\half \cdot\Phi_2(z)$ is also the second vector of the normalized basis for $U^o$,
and thus:
\ben
z^{\frac{3}{2}}+
\sum_{m\geq 0} a_{1,m}^o z^{-m-\half}=z^\half \cdot
\sum_{m=0}^\infty \frac{(6m-1)!!}{36^m\cdot (2m)!}z^{-3m+1}.
\een
Then we have:
\be
\label{eq-coeff-a-1}
a_{1,3q-2}^o =
\frac{1}{36^q} \cdot \frac{(6q+1)!!}{(2q)!}\cdot \frac{1}{6q+1},
\qquad q\geq 1,
\ee
and $a_{1,m}=0$ if $m\not\equiv 1 (\text{mod }3)$.
Comparing this with \eqref{eq-WK-coeff-012}, we see:
\be
a_{1,3m-2}^o
=a_{0,3m-1}^{\text{WK}}.
\ee
This proves a special case of \eqref{eq-mainthm-relation}.

\subsection{A reformulation of the Virasoro operators}

Before reformulating the Virasoro constraints into the fermionic picture,
we need to modify the Virasoro operators \eqref{eq-Virasoro} for $n\geq 1$.

Recall that after the change of variables
\begin{equation*}
t_n=(2n+1)!!\cdot T_{2n+1},\qquad
s_n= 2^{n+1}\cdot (n+1)!\cdot T_{2n+2},
\end{equation*}
the Virasoro operators $\cL_n^{ext}$ for $n\geq 1$ are (after taking $u=1$):
\begin{equation*}
\begin{split}
\cL_n^{ext}= \frac{1}{2^{n+1}}\bigg(& -\frac{\pd}{\pd T_{2n+3}} +
\sum_{i\geq 0} (2i+1)T_{2i+1} \frac{\pd}{\pd T_{2n+2i+1}}
+\frac{1}{2}\sum_{i=0}^{n-1} \frac{\pd^2}{\pd T_{2i+1}\pd T_{2n-2i-1}}\\
&+ \sum_{i\geq 0} (2i+2)T_{2i+2} \frac{\pd}{\pd T_{2n+2i+2}}
+\frac{3}{2}(n+1)\frac{\pd}{\pd T_{2n}}\bigg).
\end{split}
\end{equation*}
Using the following result of Buryak \cite[\S 5.2]{bu}:
\ben
\frac{\pd^{k}}{\pd T_2^k}\tau^o = \frac{\pd}{\pd T_{2k}} \tau^o,
\qquad \forall k\geq 1,
\een
one has:
\begin{equation*}
\frac{3}{2}(n+1)\frac{\pd}{\pd T_{2n}}=
\half\sum_{i=1}^{n-1} \frac{\pd^2}{\pd T_{2i} \pd T_{2n-2i}}
+(n+2) \frac{\pd}{\pd T_{2k}}.
\end{equation*}
Then we can rewrite the constraints $\cL_n^{ext}(\tau^o)=0$ (for $n\geq 1$) as:
\be
\widetilde\cL_n^{ext}(\tau^o)=0,
\ee
where $\widetilde\cL_n^{ext}$ are defined by
(see also \cite[(3.24)]{al3}):
\be
\label{eq-Virasoro-modify}
\begin{split}
\widetilde\cL_n^{ext}=& -\frac{\pd}{\pd T_{2n+3}} +
\sum_{i\geq 0} (2i+1)T_{2i+1} \frac{\pd}{\pd T_{2n+2i+1}}
+\frac{1}{2}\sum_{i=0}^{n-1} \frac{\pd^2}{\pd T_{2i+1}\pd T_{2n-2i-1}}\\
&+ \sum_{i\geq 0} (2i+2)T_{2i+2} \frac{\pd}{\pd T_{2n+2i+2}}
+\half\sum_{i=1}^{n-1} \frac{\pd^2}{\pd T_{2i} \pd T_{2n-2i}}
+(n+2)\frac{\pd}{\pd T_{2n}}\\
=& -\frac{\pd}{\pd T_{2n+3}} +(n+2)\frac{\pd}{\pd T_{2n}}
+\half \sum_{k=1}^{2n-1} \frac{\pd^2}{\pd T_k \pd T_{2n-k}}
+\sum_{k=1}^\infty kT_k\frac{\pd}{\pd T_{2n+k}}.
\end{split}
\ee
In the following subsections we will use this version of Virasoro operators.

\subsection{Virasoro operators in the fermionic picture}
\label{sec-Virasoro-ferm1}

Now we convert the Virasoro constraints for $\tau^o$ into the fermionic picture.

First we need to do some preparations.
Define:
\be
\label{eq-gamma-fermi}
\alpha_n = \sum_{r+s=n} :\psi_r\psi_s^*:,
\qquad n\in \bZ,
\ee
where $r,s$ are half-integers and $\psi_r,\psi_s^*$ are the fermions.
By the boson-fermion correspondence,
they act on the bosonic Fock space by:
\be
\alpha_n = \begin{cases}
(-n)T_{-n}, & n<0;\\
0, & n=0;\\
\frac{\pd}{\pd T_n}, & n>0.
\end{cases}
\ee
Moreover, we denote:
\ben
\alpha(z) = :\psi(z)\psi^*(z): =\sum_{n\in \bZ}
\alpha_n z^{-n-1},
\een
where $\psi(z)$, $\psi^*(z)$ are the generating series of fermions:
\ben
\psi(z)=\sum_{r\in \bZ+\half} \psi_r z^{-r-\half},
\qquad
\psi^*(z)=\sum_{r\in \bZ+\half} \psi_r^* z^{-r-\half}.
\een
Then by Wick's theorem and some standard computations of operator product expansions,
one has:
\begin{equation*}
\alpha(z)\alpha(w)=
\frac{1}{(z-w)^2} + :\pd_w \psi(w)\psi^*(w):+:\pd_w \psi^* (w)\psi(w):+\cdots,
\end{equation*}
and it follows that:
\be
\sum_{a+b=n} :\alpha_a\alpha_b:=
\sum_{r+s=n} (s-r) :\psi_r\psi_s^*:,
\ee
where $a,b$ are integers and $r,s$ are half-integers.

Now we rewrite the Virasoro operators in terms of the fermions.
First by \eqref{eq-Virasoro} we can rewrite $\cL_{-1}^{ext}$ as:
\begin{equation*}
\begin{split}
\cL_{-1}^{ext} =& - \alpha_1 + \alpha_{-2}
+\half \sum_{a+b= -2} :\alpha_a \alpha_b:\\
=& -\sum_{r_1+s_1= 1} :\psi_{r_1} \psi_{s_1}^* :
+ \sum_{r_2+s_2= -2}(1+\frac{s_2-r_2}{2}) :\psi_{r_2} \psi_{s_2}^* :\\
=&-\bigg(\psi_\half \psi_\half^*
+\sum_{k=0}^\infty (\psi_{-k-\half}\psi_{k+\frac{3}{2}}^*-\psi_{-k-\half}^*\psi_{k+\frac{3}{2}})
\bigg)
+\half\psi_{-\half} \psi_{-\frac{3}{2}}^*
+\frac{3}{2}\psi_{-\frac{3}{2}}\psi_{-\half}^*\\
&+\sum_{l=0}^\infty \bigg(
(l+\frac{5}{2})\psi_{-l-\frac{5}{2}}\psi_{l+\frac{1}{2}}^*
+(l+\frac{1}{2})\psi_{-l-\frac{5}{2}}^*\psi_{l+\frac{1}{2}})
\bigg).
\end{split}
\end{equation*}
Similarly, we have:
\begin{equation*}
\begin{split}
\cL_0^{ext}=& - \alpha_3
+\half\sum_{a+b=0} :\alpha_a \alpha_b:
+\frac{13}{8} \\
=& - \bigg( \psi_{\half}\psi_{\frac{5}{2}}^*+
\psi_{\frac{3}{2}}\psi_{\frac{3}{2}}^* + \psi_{\frac{5}{2}}\psi_{\frac{1}{2}}^*
+\sum_{k=0}^\infty (\psi_{-k-\half}\psi_{k+\frac{7}{2}}^* - \psi_{-k-\half}^*\psi_{k+\frac{7}{2}})
\bigg)\\
&+ \sum_{l=0}^\infty (l+\half)(
\psi_{-l-\half}\psi_{l+\half}^* +\psi_{-l-\half}^*\psi_{l+\half})
+\frac{13}{8},
\end{split}
\end{equation*}
and for $n\geq 1$,
\begin{equation*}
\begin{split}
\widetilde\cL_n^{ext}=&
- \alpha_{2n+3} +(n+2) \alpha_{2n}
+\half\sum_{a+b=2n} :\alpha_a \alpha_b: \\
=& -\bigg( \sum_{k=0}^{2n+2} \psi_{k+\half} \psi_{2n-k+\frac{5}{2}}^*
+\sum_{k=0}^\infty (\psi_{-k-\half}\psi_{2n+k+\frac{7}{2}}^* -\psi_{-k-\half}^*\psi_{2n+k+\frac{7}{2}})
\bigg) \\
&+\sum_{k=0}^\infty \bigg(
(2n+k+\frac{5}{2}) \psi_{-k-\half}\psi_{2n+k+\half}^*
-(-k+\frac{3}{2})\psi_{-k-\half}^* \psi_{2n+k+\half}
\bigg)\\
&+\sum_{k=0}^{2n-1}(2n-k+\frac{3}{2}) \psi_{k+\half} \psi_{2n-k-\half}^*.
\end{split}
\end{equation*}

\subsection{Recursions for affine coordinates from Virasoro constraints}
\label{sec-Virasoro-ferm2}

Now we derive the recursion relations for the affine coordinates $\{a_{n,m}^o\}$
using the fermionic reformulations of Virasoro operators.

First we compute $e^{-A^o}\cL_{-1}^{ext} e^{A^o} |0\rangle$ using
the properties \eqref{eq-conjugate-1} and \eqref{eq-conjugate-2}.
For example,
the first term is:
\begin{equation*}
\begin{split}
e^{-A^o}\psi_\half \psi_\half^* e^{A^o} |0\rangle=&
\big(e^{-A^o}\psi_\half e^{A^o}\big)\big(e^{-A^o} \psi_\half^* e^{A^o} \big)|0\rangle\\
=&\bigg(\psi_\half - \sum_{m\geq 0} a_{0,m}^o\psi_{-m-\half}\bigg)
\bigg(\psi_\half^* + \sum_{n\geq 0} a_{n,0}^o\psi_{-n-\half}^* \bigg)
|0\rangle \\
=& \bigg( a_{0,0}^o-\sum_{m,n\geq 0}a_{0,m}^o a_{n,0}^o\psi_{-m-\half}\psi_{-n-\half}^*
\bigg) |0\rangle.
\end{split}
\end{equation*}
Similarly,
we can compute $e^{-A^o}\cL_{-1}^{ext}e^{A^o}|0\rangle$ term by term,
and the result is:
\be
\label{eq-stringeqn-fermion}
\begin{split}
&e^{-A^o}\cL_{-1}^{ext} e^{A^o}|0\rangle\\
=& \bigg(
-a_{0,0}^o +\sum_{m,n\geq 0} a_{0,m}^oa_{n,0}^o \psi_{-m-\half}\psi_{-n-\half}^*
+\half \psi_{-\half}\psi_{-\frac{3}{2}}^*
 + \frac{3}{2}\psi_{-\frac{3}{2}}\psi_{-\half}^*\\
&-\sum_{k,n\geq 0} a_{n,k+1}^o \psi_{-k-\half}\psi_{-n-\half}^*
+\sum_{k,m\geq 0} a_{k+1,m}^o \psi_{-m-\half} \psi_{-k-\half}^* \\
&+\sum_{l,n\geq 0}
(l+\frac{5}{2})a_{n,l}^o \psi_{-l-\frac{5}{2}}\psi_{-n-\half}^*
+\sum_{l,m\geq 0}(l+\frac{1}{2})a_{l,m}^o\psi_{-m-\half} \psi_{-l-\frac{5}{2}}^*
\bigg)|0\rangle.
\end{split}
\ee
Since the boson-fermion correspondence is an isomorphism,
by $\cL_{-1}^{ext}(\tau^o)=0$ we must have:
\ben
\cL_{-1}^{ext} e^{A^o} |0\rangle=\cL_{-1}^{ext} |U^o\rangle=0,
\een
and thus the expression \eqref{eq-stringeqn-fermion} is also zero.
Now computing the coefficient of each term in \eqref{eq-stringeqn-fermion},
we may obtain a family of constraints for $\{a_{n,m}^o\}$.
For example,
the coefficient of $|0\rangle$ gives:
\be
a_{0,0}^o=0.
\ee
Then we compute the coefficient of $\psi_{-\half}\psi_{-\half^*}$ and get:
\ben
-a_{0,1}^o+a_{1,0}^o=0.
\een
We already know that $a_{1,0}^o=0$,
thus:
\be
a_{0,1}^o=0.
\ee
Next,
computing the coefficient of $\psi_{-\half}\psi_{-\frac{3}{2}}^*|0\rangle$ gives:
\ben
\half-a_{1,1}^o+a_{2,0}^o=0,
\een
and computing the coefficient of $\psi_{-\frac{3}{2}}\psi_{-\frac{1}{2}}^*|0\rangle$ gives:
\ben
\frac{3}{2}-a_{0,2}^o+a_{1,1}^o=0.
\een
By \eqref{eq-coeff-a-1} we have already known that $a_{1,1}^o=\frac{5}{24}$,
thus:
\be
a_{2,0}^o=-\frac{7}{24},
\qquad\qquad
a_{0,2}^o=\frac{41}{24}.
\ee
Similarly,
we compute the coefficient of $\psi_{-m-\half}\psi_{-n-\frac{1}{2}}^*|0\rangle$ for general $n,m\geq 0$,
and in this way we easily obtain:
\be
\label{eq-rec-amn-0}
a_{0,m}^o a_{n,0}^o
-a_{n,m+1}^o+ a_{n+1,m}^o
+(m+\half)a_{n,m-2}^o
+(n-\frac{3}{2})a_{n-2,m}^o =0,
\ee
for $m+n\geq 2$.
Here we denote $a_{n,m}^o :=0$ if $n<0$ or $m<0$.
This proves the $1$-step recursion \eqref{eq-1step-rec}.
In the following subsections we will see that
the affine coordinates $\{a_{n,m}^o \}_{n,m\geq 0}$ can be uniquely determined using
the above $1$-step recursion and the initial values \eqref{eq-coeff-a-1}.
Nevertheless,
here we write down all the constraints provided by $\widetilde\cL_n^{ext}$ for $n\geq 0$ for completeness.
We omit the details and only give the results since the computations are similar to the case of $\cL_{-1}^{ext}$.
The constraint from the operator $\cL_0^{ext}$ is:
\begin{equation*}
\begin{split}
0=& \bigg(
-a_{0,2}^o-a_{1,1}^o-a_{2,0}^o +\sum_{m,n\geq 0} (a_{0,m}^oa_{n,2}^o+a_{1,m}^oa_{n,1}^o+a_{2,m}^oa_{n,0}^o)
\psi_{-m-\half}\psi_{-n-\half}^*\\
&-\sum_{k,n\geq 0} a_{n,k+3}^o \psi_{-k-\half}\psi_{-n-\half}^*
-\sum_{k,m\geq 0} a_{k+3,m}^o \psi_{-k-\half}^* \psi_{-m-\half} \\
&+\sum_{l,n\geq 0}
(l+\frac{1}{2})a_{n,l}^o \psi_{-l-\frac{1}{2}}\psi_{-n-\half}^*
-\sum_{l,m\geq 0}(l+\frac{1}{2})a_{l,m}^o \psi_{-l-\frac{1}{2}}^* \psi_{-m-\half}
+\frac{13}{8}
\bigg)|0\rangle,
\end{split}
\end{equation*}
and the constraints from $\widetilde\cL_n^{ext}$ (see \eqref{eq-Virasoro-modify}) are:
\begin{equation*}
\begin{split}
0=&\bigg(-\sum_{k=0}^{2n+2} a_{k,2n+2-k}^o  +
\sum_{k=0}^{2n+2}\sum_{m,l\geq 0} a_{k,m}^o  a_{l,2n+2-k}^o \psi_{-m-\half}\psi_{-l-\half}^*\\
&-\sum_{k,l\geq 0} a_{l,2n+3+k}^o  \psi_{-k-\half}\psi_{-l-\half}^*
-\sum_{k,m\geq 0} a_{2n+3+k,m}^o  \psi_{-k-\half}^*\psi_{-m-\half}\\
&+\sum_{k=0}^{2n-1} (2n+\frac{3}{2}-k)(a_{k,2n-1-k}^o -\sum_{m,l\geq 0} a_{k,m}^o a_{l,2n-1-k}^o
\psi_{-m-\half}\psi_{-l-\half}^*)\\
&+\sum_{k,l\geq 0}(2n+\frac{5}{2}+k) a_{l,2n+k}^o \psi_{-k-\half}\psi_{-l-\half}^*\\
&-\sum_{k,m\geq 0}(k-\frac{3}{2})a_{2n+k,m}^o \psi_{-k-\half}^*\psi_{-m-\half}\bigg)
|0\rangle,
\end{split}
\end{equation*}
for every $n\geq 1$.
Now computing the coefficients of $|0\rangle$ and $\psi_{-m-\half}\psi_{-n-\half}^*|0\rangle$
for every $n,m\geq 0$,
we have proved the recursion relations and constraints in
Theorem \ref{thm-coeff-rec-1} and Theorem \ref{thm-coeff-rec-2}.

\subsection{The modulo $3$ property}

Recall that we already have
(see \S \ref{sec-a1} and \S \ref{sec-Virasoro-ferm2}):
\ben
a_{0,0}^o = a_{0,1}^o = a_{1,0}^o =0.
\een
Let $k$ be a fixed positive integer,
and assume that we already know the values of all $a_{n,m}^o$ with $n+m<k$.
Then using the recursion \eqref{eq-rec-amn-0} and
the initial value $a_{1,k-1}^o$ given by \eqref{eq-coeff-a-1},
we can compute all $a_{n,m}^o$ with $n+m=k$.
Thus by induction
all the coordinates $\{a_{n,m}^o\}_{n,m\geq 0}$ are uniquely determined by
the $1$-step recursion \eqref{eq-rec-amn-0} and initial data \eqref{eq-coeff-a-1}.

Moreover,
similar to the case of Witten-Kontsevich tau-function,
here we also have the following modulo $3$ property:
\begin{Lemma}
We have $a_{n,m}^o=0$ for $n+m\not\equiv -1(\text{mod }3)$.
\end{Lemma}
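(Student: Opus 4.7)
The plan is to prove this by induction on $k := n+m$, using the $1$-step recursion \eqref{eq-rec-amn-0} together with the fact, noted just above, that the whole array $\{a_{n,m}^o\}$ is uniquely determined by this recursion and the row $\{a_{1,m}^o\}_{m\geq 0}$. The point is that both the base cases established already ($a_{0,0}^o=a_{0,1}^o=a_{1,0}^o=0$) and the explicit initial data \eqref{eq-coeff-a-1} already satisfy the desired vanishing: $a_{1,m}^o\neq 0$ forces $m=3q-2$, i.e.\ $1+m\equiv -1\pmod 3$. Hence the claim becomes a matter of showing the recursion propagates the mod $3$ pattern.

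For the inductive step I would fix $k\not\equiv -1\pmod 3$ and assume the vanishing holds for all indices $(p,q)$ with $p+q<k$. First I would handle the corner entry $a_{0,k}^o$ by specializing \eqref{eq-rec-amn-0} at $n=0$, $m=k-1$. Using $a_{0,0}^o=0$ and the convention $a_{p,q}^o=0$ for negative indices, the recursion collapses to
\[
a_{0,k}^o \;=\; a_{1,k-1}^o + (k-\tfrac{1}{2})\,a_{0,k-3}^o,
\]
and both summands vanish: the first by the explicit initial formula (since $1+(k-1)=k\not\equiv -1\pmod 3$) and the second by the outer induction (since $0+(k-3)\equiv k\not\equiv -1\pmod 3$). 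Then I would rewrite \eqref{eq-rec-amn-0} as
\[
a_{n+1,m}^o \;=\; a_{n,m+1}^o - a_{0,m}^o a_{n,0}^o - (m+\tfrac{1}{2})\,a_{n,m-2}^o - (n-\tfrac{3}{2})\,a_{n-2,m}^o,
\]
apply it for $n+m=k-1$, and march $n=0,1,2,\ldots$ upward to conclude $a_{n,k-n}^o=0$ along the entire antidiagonal.

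The only nontrivial content is a small mod $3$ bookkeeping check that, assuming $k\not\equiv -1\pmod 3$, each of the four terms on the right-hand side vanishes. The term $a_{n,m+1}^o$ is killed by the inner induction on $n$ that was just initiated with $a_{0,k}^o=0$. For the bilinear term $a_{0,m}^o a_{n,0}^o$, nonvanishing of both factors (by the outer induction) would force $m\equiv -1$ and $n\equiv -1\pmod 3$, giving $k=n+m+1\equiv -1\pmod 3$. For each of the two shift-$(-2)$ terms the relevant entry sits at total level $k-2$, and the outer induction forces it to vanish unless $k-2\equiv -1\pmod 3$, i.e.\ $k\equiv 2\equiv -1\pmod 3$. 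In every case the assumption $k\not\equiv -1\pmod 3$ rules out the term, closing the induction.

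I do not anticipate any real obstacle; the argument is a clean residue-class check layered on top of the recursive structure already used to define the array, and the mildly delicate point is simply to order the two inductions (outer on $k$, inner on $n$ along each antidiagonal) so that no circularity arises.
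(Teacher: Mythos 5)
Your proposal is correct and follows essentially the same route as the paper: induction on $n+m$ using the fermionic string-equation recursion \eqref{eq-rec-amn-0}, with the quadratic term killed by a residue check and the march along each antidiagonal anchored by the explicit vanishing of the initial data $a_{1,m}^o$ (the paper anchors at the $n=1$ entry rather than deriving $a_{0,k}^o=0$ first, an immaterial difference). One small slip: the two shift-$(-2)$ terms sit at total level $k-3$, not $k-2$, but the conclusion you draw ($k\equiv 2\equiv -1 \pmod 3$) is exactly the one that level $k-3$ gives, so the argument is unaffected.
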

\begin{proof}
We prove by induction on $n+m$.
The cases $m+n=0,1$ are clear.

Now assume that the conclusion holds for all $(m,n)$ with $m+n\leq k$.
If $k=3l$,
then consider the case $m+n=3l+1$.
In this case \eqref{eq-rec-amn-0} gives:
\ben
&&a_{0,3l-n}^o a_{n,0}^o
-a_{n,3l+1-n}^o+ a_{n+1,3l-n}^o\\
&&+(3l-n+\half)a_{n,3l-n-2}^o
+(n-\frac{3}{2})a_{n-2,3l-n}^o =0
\een
By induction hypothesis we easily see $a_{n,3l-n-2}^o=a_{n-2,3l-n}^o=0$.
Moreover,
we must have either $n\not\equiv -1(\text{mod }3)$ or $(3l-n)\not\equiv -1(\text{mod }3)$,
thus $a_{0,3l-n}^o a_{n,0}^o=0$.
Then the above recursion gives:
\ben
a_{n,3l+1-n}^o= a_{n+1,3l-n}^o.
\een
Recall that we already have $a_{1,3l}^o=0$,
thus $a_{n,3l+1-n}^o = 0$ for every $n$.

Similarly for $k=3l-1$ or $k=3l-2$,
we can also verify the case for $m+n\leq k+1$ by induction.
\end{proof}

Now using the above property,
we can rewrite the recursion \eqref{eq-rec-amn-0} as follows:
\be
\label{eq-rec-amn}
\begin{split}
&a_{p,3q-1-p}^o- a_{p+1,3q-2-p}^o\\
=&a_{0,3q-2-p}^o a_{p,0}^o+(3q-\frac{3}{2}-p)a_{p,3q-4-p}^o
+(p-\frac{3}{2})a_{p-2,3q-2-p}^o,
\end{split}
\ee
for every $q\geq 2$ and $0\leq p\leq 3q-1$.

\subsection{Computations of $a_{2,m}^o$}

In this subsection we compute $a_{2,m}^o$ for general $m\geq 0$.
First recall that
\be
a_{2,0}^o=-\frac{7}{24},
\ee
and $a_{2,m}^o=0$ for $m\not\equiv 0 (\text{mod }3)$.
By taking $p=1$ in the recursion \eqref{eq-rec-amn},
we obtain:
\be
\label{eq-rec-1and2}
a_{2,3q-3}^o = a_{1,3q-2}^o
-(3q-\frac{5}{2})a_{1,3q-5}^o.
\ee
for $q\geq 2$.
For example,
\ben
&&a_{2,3}^o= a_{1,4}^o-\frac{7}{2}a_{1,1}^o
=\frac{385}{1152}-\frac{7}{2}\cdot\frac{5}{24}=
-\frac{455}{1152},\\
&&a_{2,6}^o= a_{1,7}^o-\frac{13}{2}a_{1,4}^o
=\frac{85085}{82944}-\frac{13}{2}\cdot\frac{385}{1152}=
-\frac{95095}{82944},\\
&&a_{2,9}^o= a_{1,10}^o-\frac{19}{2}a_{1,7}^o
=\frac{ 37182145}{7962624}-\frac{19}{2}\cdot\frac{ 85085}{82944}=
-\frac{40415375}{7962624},\\
&&\cdots\cdots
\een

Now plugging \eqref{eq-coeff-a-1} into the relation \eqref{eq-rec-1and2},
one easily finds that:
\be
\label{eq-coeff-a-2}
a_{2,3q-3}^o
=-\frac{1}{36^q}\cdot \frac{(6q+1)!!}{(2q)!} \cdot \frac{1}{6q-1}
=a_{1,3q-2}^{\text{WK}},
\ee
given by the Faber-Zagier series.
This proves another special case of \eqref{eq-mainthm-relation}.

\subsection{Computations of $a_{0,m}^o$}

Now we compute $a_{0,3q-1}^o$.
First recall that we have already known:
\be
a_{0,2}^o = \frac{41}{24}.
\ee
And then taking $p=0$ in \eqref{eq-rec-amn},
we obtain a recursion:
\be
\label{eq-coeff-rec-0}
a_{0,3q-1}^o=
a_{1,3q-2}^o+(3q-\frac{3}{2})a_{0,3q-4}^o,
\qquad q\geq 2,
\ee
where $\{a_{1,3q-2}^o\}$ are given by \eqref{eq-coeff-a-1}.
For example,
\ben
&&a_{0,5}^o=a_{1,4}^o+\frac{9}{2}a_{0,2}^o=
\frac{385}{1152}+\frac{9}{2}\cdot\frac{41}{24}=
\frac{9241}{1152},\\
&&a_{0,8}^o=a_{1,7}^o+\frac{15}{2}a_{0,5}^o=
\frac{85085}{82944}+\frac{15}{2}\cdot\frac{9241}{1152}=
\frac{5075225}{82944},\\
&&a_{0,11}^o=a_{1,10}^o+\frac{21}{2}a_{0,8}^o=
\frac{ 37182145}{7962624}+\frac{21}{2}\cdot\frac{5075225}{82944}=
\frac{5153008945}{7962624},\\
&&\cdots\cdots
\een

Now we claim that for every $q\geq 1$, we have:
\be
\label{eq-a0m}
a_{0,3q-1}^o=\big(\frac{3}{2}\big)^q \cdot (2q-1)!!\cdot \sum_{j=0}^q
\frac{(6j-1)!!}{54^j\cdot (2j)!\cdot (2j-1)!!},
\ee
where we use the convention $(-1)!!:=1$.
We prove by induction on $q$.
The case $q=1$ can be easily checked.
Now assume the conclusion holds for $q$,
then by \eqref{eq-coeff-a-1} and the recursion \eqref{eq-coeff-rec-0}
one has:
\begin{equation*}
\begin{split}
a_{0,3q+2}^o=&
a_{1,3q+1}^o+(3q+\frac{3}{2})a_{0,3q-1}^o\\
=&\frac{(6q+5)!!}{36^{q+1}\cdot (2q+2)!}
+\frac{6q+3}{2}\big(\frac{3}{2}\big)^q (2q-1)!!
\sum_{j=0}^q \frac{(6j-1)!!}{54^j\cdot (2j)!\cdot (2j-1)!!}\\
=&\big(\frac{3}{2}\big)^{q+1} \cdot(2q+1)!!\cdot
\sum_{j=0}^{q+1} \frac{(6j-1)!!}{54^j\cdot (2j)!\cdot (2j-1)!!},
\end{split}
\end{equation*}
Thus the conclusion holds,
and this proves \eqref{eq-a0m-explicit}.

\subsection{Computations of $a_{n,m}^o$ for $n\geq 3$}

Now we can complete the proof of Theorem \ref{thm-mainthm}.
In this proof
we will make use of the following recursion relation for the affine coordinates $\{a_{n,m}^{\text{WK}}\}_{n,m\geq 0}$
of Witten-Kontsevich tau-function:
\begin{Lemma}
[\cite{zhou4}]
For every $m+n\geq 2$,
\be
\label{eq-rec-WK}
a_{n+1,m}^{\text{WK}}=
a_{n,m+1}^{\text{WK}} - a_{0,m}^{\text{WK}}a_{n,0}^{\text{WK}}
-\bigg(
(m-\half)a_{n,m-2}^{\text{WK}} + (n-\half)a_{n-2,m}^{\text{WK}}
\bigg).
\ee
\end{Lemma}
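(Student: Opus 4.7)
The plan is to mirror, for $\tau^{\text{WK}}$, exactly the fermionic-reformulation strategy that the paper carries out for $\tau^o$ in \S\ref{sec-Virasoro-ferm1}--\S\ref{sec-Virasoro-ferm2} starting from the string equation. The recursion to be proved has the shape
\[
 a_{0,m}^{\text{WK}} a_{n,0}^{\text{WK}}
 - a_{n,m+1}^{\text{WK}}
 + a_{n+1,m}^{\text{WK}}
 + (m-\tfrac{1}{2}) a_{n,m-2}^{\text{WK}}
 + (n-\tfrac{1}{2}) a_{n-2,m}^{\text{WK}} = 0,
\]
and I recognize this as the coefficient of $\psi_{-m-\frac{1}{2}}\psi_{-n-\frac{1}{2}}^{\ast}|0\rangle$ in the fermionic form of a suitable $L_{-1}$-type constraint conjugated by the Bogoliubov operator $e^{A^{\text{WK}}}$.

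The first step is to write down the Virasoro $L_{-1}$ constraint satisfied by $\tau^{\text{WK}}$ in the KP time variables $T_n = p_n/n$. Starting from the classical string equation $\partial_{t_0} F = \frac{t_0^2}{2} + \sum_{i\geq 0} t_{i+1}\partial_{t_i} F$ and applying the rescaling $t_k = (2k+1)!!\, T_{2k+1}$, I obtain an operator whose fermionic lift, using $\alpha_n = \sum_{r+s=n} {:}\psi_r \psi_s^{\ast}{:}$ and the identity $\sum_{a+b=n}{:}\alpha_a\alpha_b{:} = \sum_{r+s=n}(s-r){:}\psi_r\psi_s^{\ast}{:}$, has a shape parallel to the expression for $\cL_{-1}^{ext}$ displayed in \S\ref{sec-Virasoro-ferm1} but without the $s$-variable (extended-descendant) contribution and with coefficient shifts $m \pm \tfrac{1}{2}$ in place of the $m+\tfrac{1}{2}, n-\tfrac{3}{2}$ appearing in the open case. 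Here I will use the $2$-reduction of $\tau^{\text{WK}}$ (all $\alpha_{2k}$ annihilate it after restriction to odd times) to drop the terms that would otherwise involve even-indexed bosons.

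The second step is the same Bogoliubov-conjugation computation performed in \S\ref{sec-Virasoro-ferm2}. I use $|U^{\text{WK}}\rangle = e^{A^{\text{WK}}}|0\rangle$ with
\[
 A^{\text{WK}} = \sum_{n,m\geq 0} a_{n,m}^{\text{WK}}\, \psi_{-m-\frac{1}{2}}\psi_{-n-\frac{1}{2}}^{\ast},
\]
and compute $e^{-A^{\text{WK}}} L_{-1}^{\text{WK}} e^{A^{\text{WK}}}|0\rangle$ term by term via \eqref{eq-conjugate-1} and \eqref{eq-conjugate-2}. Each fermion bilinear in $L_{-1}^{\text{WK}}$ produces either a scalar (from the normal-ordering contractions), a creator bilinear $\psi_{-m-\frac{1}{2}}\psi_{-n-\frac{1}{2}}^{\ast}$ with coefficient linear in $a^{\text{WK}}$, or the quadratic term $a_{0,m}^{\text{WK}} a_{n,0}^{\text{WK}}$ coming from the composition $(e^{-A}\psi_{1/2}e^{A})(e^{-A}\psi_{1/2}^{\ast}e^{A})$. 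Setting the total expression equal to zero and reading off the coefficient of $\psi_{-m-\frac{1}{2}}\psi_{-n-\frac{1}{2}}^{\ast}|0\rangle$ for each $m+n\geq 2$ yields exactly the claimed recursion.

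The main obstacle is bookkeeping rather than conceptual: I must pick the correct form of the Virasoro $L_{-1}$ for $\tau^{\text{WK}}$ in the $T$-variables so that, after the fermionic reformulation, the shift constants emerge as $m-\tfrac{1}{2}$ and $n-\tfrac{1}{2}$ (and not, e.g., $m+\tfrac{1}{2}$ and $n-\tfrac{3}{2}$ as in the extended open case). This amounts to keeping careful track of the rescaling factors $(2k+1)!!$ between $t_k$ and $T_{2k+1}$ and of the normal-ordering contribution in $:\!\alpha_a\alpha_b\!:$ at the boundary indices. Once this is pinned down, the remainder of the argument proceeds exactly as in the open case and yields the lemma.
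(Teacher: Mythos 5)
Your proposal is correct and follows exactly the route the paper attributes to this lemma: the paper does not prove it but cites \cite[(61)]{zhou4}, describing it as the fermionic reformulation of the string equation for $\tau^{\text{WK}}$, which is precisely the Bogoliubov-conjugation computation you outline (and which the paper carries out in full for $\tau^o$ in \S\ref{sec-Virasoro-ferm1}--\S\ref{sec-Virasoro-ferm2}). Your bookkeeping checks out: dropping the $\alpha_{-2}$ term (which in the open case comes from the boundary descendant $s_0$) shifts the coefficients from $m+\tfrac{1}{2},\,n-\tfrac{3}{2}$ to $m-\tfrac{1}{2},\,n-\tfrac{1}{2}$, and the even-boson terms in $\tfrac{1}{2}\sum_{a+b=-2}:\alpha_a\alpha_b:$ annihilate $|U^{\text{WK}}\rangle$ by the KdV reduction, so reading off the coefficient of $\psi_{-m-\half}\psi_{-n-\half}^*|0\rangle$ for $m+n\geq 2$ gives exactly \eqref{eq-rec-WK}.
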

This recursion is obtained by converting the string equation for the Witten-Kontsevich tau-function
into the fermionic picture,
see \cite[(61)]{zhou4}.
Notice that the notations $A_{m,n}$ in that work are related to $a_{n,m}^{\text{WK}}$ by:
\ben
A_{m,n}=\bigg(-\frac{\sqrt{-2}}{4}\bigg)^{\frac{m+n+1}{3}}
\cdot a_{n,m}^{\text{WK}},
\qquad
m,n\geq 0.
\een

Now we can prove \eqref{eq-mainthm-relation} by induction on $p$.
First notice that by comparing \eqref{eq-coeff-a-1}, \eqref{eq-coeff-a-2}
with \eqref{eq-WK-coeff-012},
we already have:
\be
\label{eq-relation-12}
a_{1,m}^o=a_{0,m+1}^{\text{WK}},
\qquad
a_{2,m}^o=a_{1,m+1}^{\text{WK}}.
\ee

First consider the case $a_{n,m}^o$ with $n=3p+1$ and $m=3q-2$.
By the recursion \eqref{eq-rec-amn} we have:
\be
a_{3p,3q-1}^o-a_{3p+1,3q-2}^o
=(3q-\frac{3}{2})a_{3p,3q-4}^o
+(3p-\frac{3}{2})a_{3p-2,3q-2}^o .
\ee
Thus by induction hypothesis:
\begin{equation*}
\begin{split}
a_{3p+1,3q-2}^o
=&a_{3p,3q-1}^o
-(3q-\frac{3}{2})a_{3p,3q-4}^o-(3p-\frac{3}{2})a_{3p-2,3q-2}^o\\
=& a_{3p-1,3q}^{\text{WK}} -a_{0,3q-1}^o a_{3p-1,0}^{\text{WK}}
-(3q-\frac{3}{2})(a_{3p-1,3q-3}^{\text{WK}}-a_{0,3q-4}^o a_{3p-1,0}^{\text{WK}})\\
&- (3p-\frac{3}{2})a_{3p-3,3q-1}^{\text{WK}}\\
=&a_{3p-1,3q}^{\text{WK}} -a_{0,3q-1}^o a_{3p-1,0}^{\text{WK}}
 +(3q-\frac{3}{2}) a_{0,3q-4}^o a_{3p-1,0}^{\text{WK}}\\
&-\bigg(
(3q-\frac{3}{2}) a_{3p-1,3q-3}^{\text{WK}} + (3p-\frac{3}{2}) a_{3p-3,3q-1}^{\text{WK}}
\bigg).
\end{split}
\end{equation*}
By \eqref{eq-rec-WK} we know that:
\begin{equation*}
(3q-\frac{3}{2}) a_{3p-1,3q-3}^{\text{WK}} + (3p-\frac{3}{2}) a_{3p-3,3q-1}^{\text{WK}}
=a_{3p-1,3q}^{\text{WK}} -a_{0,3q-1}^{\text{WK}}a_{3p-1,0}^{\text{WK}} -a_{3p,3q-1}^{\text{WK}},
\end{equation*}
and thus
\begin{equation*}
\begin{split}
a_{3p+1,3q-2}^o
=&a_{3p-1,3q}^{\text{WK}} -a_{0,3q-1}^o a_{3p-1,0}^{\text{WK}}
+(3q-\frac{3}{2}) a_{0,3q-4}^o a_{3p-1,0}^{\text{WK}}\\
&-(a_{3p-1,3q}^{\text{WK}} -a_{0,3q-1}^{\text{WK}}a_{3p-1,0}^{\text{WK}} -a_{3p,3q-1}^{\text{WK}})\\
=&a_{3p,3q-1}^{\text{WK}} +a_{3p-1,0}^{\text{WK}} \bigg(
-a_{0,3q-1}^{\text{WK}}+a_{0,3q-1}^o-(3q-\frac{3}{2})a_{0,3q-4}^o\bigg).
\end{split}
\end{equation*}
Moreover, by \eqref{eq-coeff-rec-0} and \eqref{eq-relation-12} we have:
\begin{equation*}
-a_{0,3q-1}^{\text{WK}}+ a_{0,3q-1}^o-(3q-\frac{3}{2})a_{0,3q-4}^o=
-a_{1,3q-2}^o+a_{1,3q-2}^o=0,
\end{equation*}
therefore:
\be
a_{3p+1,3q-2}^o=a_{3p,3q-1}^{\text{WK}}.
\ee

Similarly,
for $n=3p+2$ one can check that:
\begin{equation*}
\begin{split}
a_{3p+2,3q-3}^o=&
a_{3p+1,3q-2}^o-(3q-\frac{5}{2})a_{3p+1,3q-5}^o-(3p-\half)a_{3p-1,3q-3}^o\\
=&a_{3p,3q-1}^{\text{WK}} -\bigg((3q-\frac{5}{2})a_{3p,3q-4}^{\text{WK}}+(3p-\half)a_{3p-2,3q-2}^{\text{WK}}\bigg)\\
=&a_{3p,3q-1}^{\text{WK}}+(a_{3p+1,3q-2}^{\text{WK}}-a_{3p,3q-1}^{\text{WK}})\\
=&a_{3p+1,3q-2}^{\text{WK}},
\end{split}
\end{equation*}
by \eqref{eq-rec-amn}, \eqref{eq-rec-WK} and the induction hypothesis.

Finally consider the case $n=3p$.
We have:
\begin{equation*}
\begin{split}
a_{3p,3q-1}^o=& -a_{0,3q-1}^oa_{3p-1,0}^o+
a_{3p-1,3q}^o-(3q-\frac{1}{2})a_{3p-1,3q-3}^o-(3p-\frac{5}{2})a_{3p-3,3q-1}^o\\
=& -a_{0,3q-1}^oa_{3p-1,0}^o+ a_{3p-2,3q+1}^{\text{WK}}
-(3q-\frac{1}{2})a_{3p-2,3q-2}^{\text{WK}}\\
&-(3p-\frac{5}{2})(a_{3p-4,3q}^{\text{WK}}-a_{0,3q-1}^oa_{3p-4,0}^{\text{WK}})\\
=&-a_{0,3q-1}^oa_{3p-1,0}^o+
a_{3p-2,3q+1}^{\text{WK}}+ (3p-\frac{5}{2})a_{0,3q-1}^oa_{3p-4,0}^{\text{WK}}\\
&+(a_{3p-1,3q}^{\text{WK}}-a_{3p-2,3q+1}^{\text{WK}})\\
=& a_{3p-1,3q}^{\text{WK}} - a_{0,3q-1}^o\bigg(a_{3p-1,0}^o- (3p-\frac{5}{2})a_{3p-4,0}^{\text{WK}}\bigg),
\end{split}
\end{equation*}
thus it suffices to prove that:
\begin{equation*}
a_{3p-1,0}^o-(3p-\frac{5}{2})a_{3p-4,0}^{\text{WK}}
=a_{3p-1,0}^{\text{WK}}.
\end{equation*}
By the induction hypothesis we only need to prove:
\begin{equation*}
a_{3p-2,1}^{\text{WK}}-(3p-\frac{5}{2})a_{3p-4,0}^{\text{WK}}
=a_{3p-1,0}^{\text{WK}},
\end{equation*}
and this follows from \eqref{eq-rec-WK}.
Now we have finished the proof of Theorem \ref{thm-mainthm}.

\vspace{.2in}

{\em Acknowledgements}.
The author thanks Prof. Jian Zhou for guidance in this subject.
And he thanks Prof. Huijun Fan for encouragement,
Dr. Qingsheng Zhang for helpful discussions,
and Dr. Giulio Ruzza for helpful comments.
He also thanks the anonymous referee for suggestions.

\end{document}